\pgfplotsset{compat=newest}
\newcommand{\imax}{i^{\rm{max}}}
\newcommand{\kmax}{k^{\rm{max}}}
\newcommand{\veps}{\varepsilon}
\newcommand{\p}{\partial}
\newcommand{\R}{\mathbb{R}}
\newcommand{\cP}{\mathcal{P}}
\newcommand{\bbT}{\mathbb{T}}
\newcommand{\bi}{\mathbf{i}}
\newcommand{\be}{\mathbf{e}}
\newtheorem{example}{Example}
\crefname{equation}{}{}
\crefname{theorem}{Theorem}{Theorems}
\crefname{proposition}{Proposition}{Propositions}
\crefname{problem}{Problem}{Problems}
\crefname{lemma}{Lemma}{Lemmas}
\crefname{definition}{Definition}{Definitions}
\crefname{corollary}{Corollary}{Corollaries}
\crefname{table}{Table}{Tables}
\crefname{figure}{Figure}{Figures}
\crefname{remark}{Remark}{Remarks}
\crefname{section}{Section}{Sections}
\crefname{subsection}{Section}{Sections}
\crefname{hypothesis}{Hypothesis}{Hypotheses}
\title{
	A mathematical model of asynchronous data flow in parallel computers \thanks{. The United States Government retains and the publisher, by accepting the article for publication, acknowledges that the United States Government retains a non-exclusive, paid-up, irrevocable, world-wide license to publish or reproduce the published form of this manuscript, or allow others to do so, for the United States Government purposes. The Department of Energy will provide public access to these results of federally sponsored research in accordance with the DOE Public Access Plan (\texttt{http://energy.gov/downloads/doe-public-access-plan}).
		\funding{This research is sponsored by the Office of Advanced Scientific Computing Research; U.S. Department of Energy. The work was performed at the Oak Ridge National Laboratory, which is managed by UT-Battelle, LLC under Contract No. De-AC05-00OR22725.}}}
\author{Richard C. Barnard \thanks{Western Washington University, Bellingham, WA
		(\email{rick.barnard@wwu.edu}, \url{https://cse.wwu.edu/mathematics/barnarr3}).}
	\and Kai Huang \thanks{Michigan State University, East Lansing, MI
		(\email{huangk18@msu.edu}).}
	\and Cory Hauck\thanks{Oak Ridege National Laboratory, Oak Ridge, TN
		(\email{hauckc@ornl.gov}, \url{https://www.csm.ornl.gov/\string~hfd/}).}
}
\begin{document} 

\maketitle

\begin{abstract}
We present a simplified model of data flow on processors in a high performance computing framework involving computations necessitating inter-processor communications.  From this ordinary differential model, we take its asymptotic limit, resulting in a model which treats the computer as a continuum of processors and data flow as an Eulerian fluid governed by a conservation law.
We derive a Hamilton-Jacobi equation associated with this conservation law for which the existence and uniqueness of solutions can be proven.  
We then present the results of numerical experiments for both discrete and continuum models; these show a qualitative agreement between the two and the effect of variations in the computing environment's processing capabilities on the progress of the modeled computation.
\end{abstract}

\begin{keywords}
	data flow, high-performance computing, asymptotic approximation, conservation laws, Hamilton-Jacobi equation
\end{keywords}

\begin{AMS}
	35L65, 93A30, 70H20, 41A60
\end{AMS}

\section{Introduction}
\label{sec:intro}
It has been well-established that current and future generations of extreme scale computers have achieved and, for the foreseeable future, are expected to achieve increases in performance via greater levels of parallelism at multiple levels --- e.g., within the processors as well as increasing the number of processors and nodes ---as opposed to increases in clock speeds, which are expected to remain relatively flat. Additionally, extremely concurrent codes, involving dynamic parallelism and greater degrees of asynchronous parallel executions, are increasingly needed to leverage this large scale parallellism \cite{SteWhi09,DonHitBel14}. 

As machine improvements depend on increasingly complex architectures and as additional constraints on system development and planning (such as power consumption \cite{DonHitBel14}) arise, a need for predictive, quantitative models of computational performance will grow greater.   
Previously developed modeling tools such as LogP \cite{CulKarPat93,CulKarPat96} result in easily evaluated models which can prove difficult to extend and modify.  
Alternatively, PRAM models have been used as abstractions of codes; these however have scalability issues due to the complexity of simulating them \cite{PiePuc97}.  
Other modern tools \cite{NunFerFil12,Kun13} are similarly still limited to fine-grain simulations of at most a few dozen nodes, again due to their computational complexity during simulations.

Core counts are now in the hundreds of thousands and millions on machines in the TOP500 list of supercomputers; node counts consistently are in the thousands \cite{TOP500}.  
Such numbers mean that fine-grained simulation tools (such as those listed above) are incapable of describing large-scale phenomena; essentially the simulation tools begin to require computational resources beyond those of the systems they are simulating.
Alternative approaches have been proposed to address these issues: miniapp codes can mimic key features of the performance of exascale codes with a much smaller codebase \cite{DosBarDoe14}.
Aspen, a framework for performance modeling \cite{SpaVet12,VetMer15}, uses a domain specific language which encodes both abstracted features of machines hardware and specific software applications to provide coarse-grained simulations.
However, these suffer from the need to develop specialized simulation codes which can be problem dependent, resulting in possibly labor-intensive tools. 
A workflow modeling apporach, Pegasus, has been developed to model workflows using a graph-theoretic perspective to detect and manage anamolies in the computing environment \cite{DeeVahJuv15}.

We propose developing a macroscopic model of extreme scale computers which views such computing environments in a continuum framework.
Such a model has several potential benefits: in addition to being computationally tractable, it will open up the possibility of using the theoretical tools of partial differential equations to understand and control the performance of high-performance computing systems.
Specifically, our goal is to derive a fluid-limit model of data flow --- which can be described by a partial differential equation --- from a simplified deterministic model of data processing and flow in an extreme scale computer with interprocessor communications and asynchronous executions.   
Fluid models, beyond their obvious utility in physical systems, have been used to model flows in networks, such as vehicular traffic flows \cite{AwRas00}, supply chains \cite{ArmMarRin03}, and gas networks \cite{BanHerKla06,BroGasHer11}.
In particular, as discussed in \cite{KlaWeg99a} and \cite{ArmMarRin03}, such fluid models lie at the end of a hierarchy of models which begin with microscopic or discrete models. 
That is, similar to the derivation of physical fluid laws from many body physics, one may derive continuum-level flow equations from discrete-level models of the dynamics of agent interactions.  
With such a model, standard numerical simulation tools and analytical methods may be brought to bear for studying large-scale phenomena in extreme-scale computing.

We begin in \cref{sec:discrete} with a microscopic model of a network of processors performing a multi-stage computational task which necessitates inter-processor communications.  
In \cref{sec:continuum}, we derive a formal asymptotic limit of this agent-based model as the scale of the system increases, resulting in an Eulerian fluid flow model.  
Along with the resulting nonlinear conservation law, we present a related Hamilton-Jacobi equation and establish the existence of solutions in \cref{sec:continuum}.  
In \cref{sec:numerics}, we present the results of numerical experiments to show agreement between the microscopic and fluid models and then illustrate the behavior of solutions under heterogeneous computing layouts.

\section{The discrete model}
\label{sec:discrete}

In this section, we introduce the microscopic model, which is based on a highly simplified, deterministic, semi-discrete ordinary differential equation (ODE).   We imagine the computer as a network of processors $\{\cP_i\}_{i=1}^{\imax}$ that are arranged in a one-dimensional, periodic lattice. 
The computer is assigned a computational job involving a sequence of $\kmax$ tasks which are identical in the sense that each one takes the same effort to complete.
This computational job is divided by distributing data amongst processors.
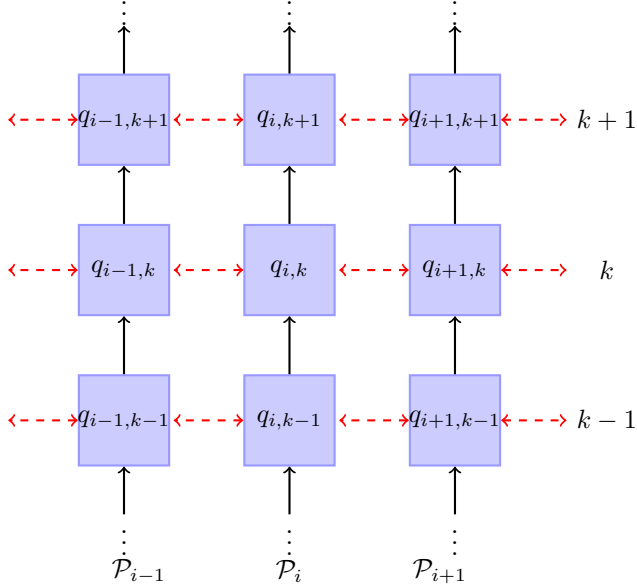
\begin{figure}
	\centering
	\begin{tikzpicture}
	[inner sep =6mm,core/.style ={draw,draw=blue!40, 	fill=blue!20,thick}]
	\foreach \x in {-2.2,0,2.2} {
		\node  (nextstep) at (\x,2)   [core] {};
		\node  (thisstep) at (\x,0)   [core]  {};
		\node  (oldstep)  at  (\x,-2) [core] {};
		\node  (before) at (\x,-3.52) {$\vdots$};
		\node  (after) at (\x,3.52) {$\vdots$};
		\draw [->,thick] (\x,-3.25) -- (oldstep.south);
		\draw [->,thick] (oldstep.north) -- (thisstep.south);
		\draw [->,thick] (thisstep.north) -- (nextstep.south)	;
		\draw [->,thick] (nextstep.north) -- (\x,3.25);
		\draw [<->, thick,red,dashed]  (oldstep.west) -- ++(-.925,0);
		\draw [<->, thick,red,dashed]  (thisstep.west) -- ++(-.925,0);
		\draw [<->, thick,red,dashed]  (nextstep.west) -- ++(-.925,0);			
	}
	\node at (-2.2,-2) {$q_{i-1,k-1}$};
	\node at (-2.2,0) {$q_{i-1,k}$};
	\node at (-2.2,2) {$q_{i-1,k+1}$};
	\node at (0,-2) {$q_{i,k-1}$};
	\node at (0,0) {$q_{i,k}$};
	\node at (0,2) {$q_{i,k+1}$};
	\node at (2.2,-2) {$q_{i+1,k-1}$};
	\node at (2.2,0)  {$q_{i+1,k}$};
	\node at (2.2,2) {$q_{i+1,k+1}$};
	\draw [<->,thick,red,dashed] (oldstep.east) -- ++(.87,0);
	\draw [<->,thick,red,dashed] (thisstep.east) -- ++(.87,0);		
	\draw [<->,thick,red,dashed] (nextstep.east) -- ++(.87,0);
	\node at (4.22,-2)		{$k-1$};
	\node at (4.22,0)		{$k$};
	\node at (4.22,2)		{$k+1$};
	\node at (-2,-4)		{$\cP_{i-1}$};
	\node at (0,-4)			{$\cP_{i}$};
	\node at (2,-4)			{$\cP_{i+1}$};	
	\end{tikzpicture}
	\caption{Schematic of network of processors.  Dashed lines denote inter-processor communications}
	\label{fig:network_fig}
\end{figure}
We denote by $q_{i,k}(t)$ the amount of data in $\cP_i$ that sits in stage $k$ at time $t$.  

\subsection{Conservation law}
The dynamics of $q_{i,k}$ are given by a conservation law of the form
\begin{equation}
\label{eq:ODE}
\dot q_{i,k}(t) = F_{i,k-1}(t) - F_{i,k}(t),
\quad k=1,\dots,\kmax, \quad i = 1,\dots \imax,
\end{equation}
where $F_{i,k}$ ($i = 1,\dots ,\imax$, $k=1,\dots,\kmax-1$) is the rate of data moving in processor $i$ from stage $k$ to $k+1$, referred to as the {\em throughput}. At the first stage $k=1,$ $F_{i,0}$ ($i = 1,\dots ,\imax$) is the rate of data being loaded into processor $i$ to be processed, referred to as the {\em inflow}, and at the final stage, $F_{i,\kmax}$ ($i = 1,\dots, \imax$) is the rate of data completing the final stage of the job, referred to as the {\em outflow}. 
Equation \eqref{eq:ODE} implies that the data in each processor is neither created or destroyed, only moved in and out of the processor or in between stages; that is,
\begin{equation}
\label{eq:ODE_cons}
\frac{d}{d t} \left( \sum_{k=1}^{\kmax} {q}_{i,k} \right) = F_{i,0} - F_{i,\kmax}.
\end{equation} 
A key aspect of the model is that it does not separately track data that moves between processors; instead the effects of communication delays will be incorporated directly into the definition of the throughputs.  

A fundamental quantity of interest in the discrete model is $Q_{i,k}(t)$, which is defined as the amount of data at time $t$ that has gone through the first $k-1$ stages of $\cP_i$.  For each $t \geq0$,
\begin{equation} \label{eq:Q}
Q_{i,k}(t) = \Big(\sum_{j=k}^{\kmax}q_{i,j}(t)\Big)+\int_0^tF_{i,\kmax}(s)ds.
\end{equation}

\subsection{Processor throttling}
We now turn to specifying the form of $F_{i,k}$. 
In the absence of throttling, each processor $\cP_i$ moves data between stages at a rate $a_i \geq 0$, which we refer to as the {\em maximum throughput}. For the purposes of the current paper, we assume that $a_i$ is given.  In practice, it must be determined from experiments, fine-scale models, or a combination of both.  It may also depend on $k$, although for simplicity, we assume here that it does not.
Throttling is said to occur whenever $F_{i,k}(t) < a_i$; this happens for one of two reasons. 
\begin{enumerate}
	
	\item \textbf{Self-throttling:} Given an amount of data $q_{i,k}$ to be processed at stage $k$ in processor $\cP_{i}$, we define the self-throttling function 
	\begin{equation} \label{eq:v1}
	v_1(q_{i,k};q_*) = \max \left \{0, \min \left \{1,\frac{q_{i,k}}{q_*} \right\} \right\}.
	\end{equation}
	Clearly if no data is available to be processed, then $F_{i,k}=0.$ Furthermore, we assume that if the amount of data to be processed drops below a certain threshold $q_*>0$, then $\cP_i$ cannot maintain the throughput $a_i$ and the throughput is reduced.

	\item \textbf{Neighbor throttling:} As the computational task is not entirely parallel across processors,  $\cP_i$ requires sufficient information from its neighbors to perform task $k$ at full throughput. The neighbor throttling function $v_2$ models this dependence. It gives the amount of available data on $\cP_i$ at stage $k$
	\begin{equation} \label{eq:v2}
	v_2(q_{i,k},\Delta_{i+1,k},\Delta_{i-1,k};\beta) = \min\left\{q_{i,k},\frac{1}{\beta}\max\{\Delta_{i+1,k},0\},\frac{1}{\beta}\max\{\Delta_{i-1,k},0\}\right\}.
	\end{equation}
	Here $\Delta_{i\pm1,k}$ denotes the data on the right/left neighbor which is available to be used by $\cP_i$ to process $q_{i,k}.$ 
	The parameter $\beta\in(0,1]$ allows for the possibility that computations do not rely in a one-to-one fashion upon the availability of data from neighbors. 
	If $\Delta_{i\pm 1,k}=0$ the processing of data stops due to the absence of a necessary component of the computational task and so $F_{i,k}=0.$
	Alternatively, if both $\Delta_{i+1,k}$ and $\Delta_{i-1,k}$ exceed $\beta q_{i,k},$ then $\cP_{i}$ has sufficient data from its neighbors to process $q_{i,k}$ and no throttling occurs.   
	
	The data from the left/right neighbor which is available for processing at stage $k$ is given by
	\begin{equation}
	\Delta_{i\pm1,k} = Q_{i\pm 1,k}-Q_{i,k+1}=Q_{i\pm 1,k}-(Q_{i,k}-q_{i,k}).
	\end{equation}
	The data on each neighbor must have completed the same stage for it to be available; additionally, this data is not reused on $\cP_i$ for the same stage.
	This means that the data available to be used from the neighbors can be written as above and so the amount of data available to be processed on $\cP_{i}$ at stage $k$ is given by
	\begin{equation}
	v_2\big(q_{i,k},Q_{i+1,k}-Q_{i,k}+q_{i,k},Q_{i-1,k}-Q_{i,k}+q_{i,k};\beta\big).
	\end{equation}
	
\end{enumerate}

\begin{figure}
	\centering
	\begin{subfigure}[b]{.4\linewidth}
	\includegraphics[width=\textwidth]{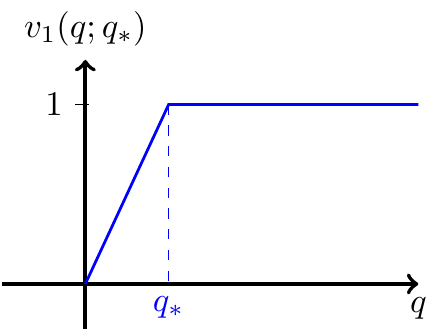}
		\vskip8pt
		\caption{$v_1(q;q_*)$ for a given $q_*$}
	\end{subfigure}
	\begin{subfigure}[b]{.50\linewidth}
		\vskip10pt
			\includegraphics[width=\textwidth]{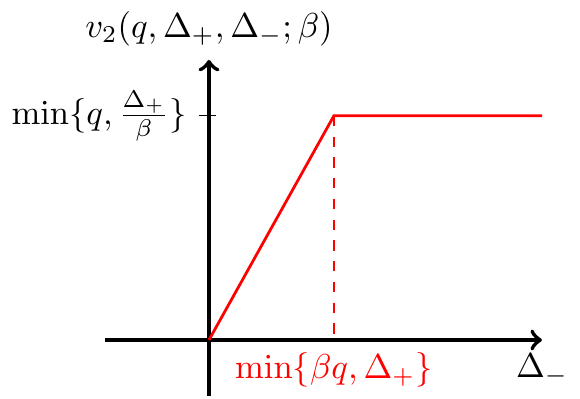}
		\caption{$v_2(q,\Delta_+,\Delta_-;\beta)$ for a given $q$ and $\Delta_+\geq0$.}
	\end{subfigure}
	\caption{Throttling functions $v_1$ and $v_2$, defined in \eqref{eq:v1} and \eqref{eq:v2}, respectively} 
	\label{fig:v_fcn_scheme}
\end{figure}

The throughput $F_{i,k}$ is a composition of the throttling functions $v_1$ and $v_2$:
\begin{equation}
\label{eq:flux}
F_{i,k} 
= a_i v_1\Big(v_2\big(q_{i,k},Q_{i+1,k}-Q_{i,k}+q_{i,k},Q_{i-1,k}-Q_{i,k}+q_{i,k}; \beta\big);q_*\Big).
\end{equation} 
At first glance, this definition of $F_{i,k}$ appears circular since it depends on $Q_{i,k}$, which in turn depends on $F_{i,\kmax}$.  However, as a consequence of the conservation law \eqref{eq:ODE_cons}, 
\begin{equation}
\label{eq:unwrap_discrete}
\int_0^tF_{i,\kmax}(s)ds = \int_0^t F_{i,0}(s)ds + \sum_{j=1}^{\kmax}q_{i,j}(0) -  \sum_{j=1}^{\kmax}q_{i,j}(t)
\end{equation} 
Thus to complete the model, we need only prescribe initial data $q_{i,k}(0)$ and the inflow $F_{i,0}$. 
To prescribe the inflow, we specify $q_{i,0}$ and then let $F_{i,0}$ be evaluated according to \eqref{eq:flux}.  

\begin{proposition}
	The system \eqref{eq:ODE} with (i) throughput $F_{i,k}$ defined in \eqref{eq:flux} for $i=1,\dots,\imax$ and $k=1,\dots,\kmax$; (ii) prescribed initial data $q_{i,k}(0)$ for $i=1,\dots,\imax$ and $k = 1,\dots,\kmax$; and (iii) prescribed inflow data $F_{i,0}$ for $i=1,\dots,\imax$ and $ t \geq 0$ has a unique solution for all $t \geq 0$.  Moreover, if $q_{i,k}(0)\geq 0$ for all $i=1,\dots,\imax$ and $k = 1,\dots,\kmax,$ then $q_{i,k}(t)\geq 0$ for all $t\geq 0$ and $i=1,\dots,\imax$ and $k=1,\dots,\kmax$.  
\end{proposition}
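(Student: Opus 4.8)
The plan is to reduce the apparently non-local, implicitly defined system to a standard ordinary differential equation $\dot q = \Phi(t,q)$ to which the Picard--Lindel\"of theorem applies, and then to prove non-negativity by an invariant-region argument. The first and most important step is to remove the circular dependence of $F_{i,k}$ on $Q_{i,k}$ noted after \eqref{eq:flux}. Using \eqref{eq:unwrap_discrete} to eliminate $\int_0^t F_{i,\kmax}(s)\,ds$ from the definition \eqref{eq:Q} of $Q_{i,k}$, I would obtain
\[
Q_{i,k}(t) = \Big(\int_0^t F_{i,0}(s)\,ds + \sum_{j=1}^{\kmax} q_{i,j}(0)\Big) - \sum_{j=1}^{k-1} q_{i,j}(t),
\]
so that each $Q_{i,k}(t)$ is an affine function of the instantaneous state $q(t) = (q_{i,k}(t))$ plus a known function of $t$ determined by the prescribed inflow and initial data. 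Substituting this expression into \eqref{eq:flux} makes every throughput $F_{i,k}$, and hence the right-hand side $\Phi_{i,k}(t,q) = F_{i,k-1} - F_{i,k}$, an explicit, single-valued function of $(t,q)$; the implicitness disappears.

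With the system in the form $\dot q = \Phi(t,q)$, existence and uniqueness follow from standard theory once I establish two facts. First, $\Phi$ is globally Lipschitz in $q$, uniformly in $t$: the maps $v_1$ and $v_2$ are built from $\min$, $\max$, and division by the fixed constants $q_*$ and $\beta$, hence are Lipschitz; the substitution $q \mapsto Q_{i,k}$ is affine and therefore Lipschitz; and the prefactor $a_i$ is constant, so each $F_{i,k}$, and thus $\Phi$, is Lipschitz with a constant depending only on the $a_i$, $q_*$, $\beta$, and $\kmax$. Second, $\Phi$ is uniformly bounded: because $v_1$ takes values in $[0,1]$, we have $0 \le F_{i,k} \le a_i$, whence $|\Phi_{i,k}| \le \max_i a_i$. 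The Lipschitz property gives local existence and uniqueness, assuming the prescribed $F_{i,0}$ is continuous (or at least locally integrable) in $t$, while the uniform bound precludes finite-time blow-up and extends the unique solution to all $t \ge 0$.

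For non-negativity I would use the forward invariance of the closed non-negative orthant $K = \{q : q_{i,k} \ge 0\}$. The key algebraic observation is that the outgoing flux of an empty stage vanishes: if $q_{i,k} = 0$, then the first slot of $v_2$ forces $v_2(\cdot) = \min\{0, \tfrac{1}{\beta}\max\{\Delta_{i+1,k},0\}, \tfrac{1}{\beta}\max\{\Delta_{i-1,k},0\}\} = 0$, and hence $F_{i,k} = a_i v_1(0;q_*) = 0$; moreover all fluxes remain non-negative on $K$, since $v_2 \ge 0$ there and $v_1 \ge 0$ always. Consequently, at any boundary point of $K$ with $q_{i,k}=0$ one has $\Phi_{i,k} = F_{i,k-1} - F_{i,k} = F_{i,k-1} \ge 0$, so the vector field is subtangent to $K$ along its boundary. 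By the standard Nagumo invariance criterion for the Lipschitz field $\Phi$, $K$ is forward-invariant, giving $q_{i,k}(t) \ge 0$ for all $t \ge 0$ whenever $q_{i,k}(0) \ge 0$.

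The main obstacle is conceptual rather than computational: it is the first step of recognizing that \eqref{eq:unwrap_discrete} converts the implicit, history-dependent coupling through $Q_{i,k}$ into an explicit affine dependence on the current state, thereby placing the problem squarely within classical ODE theory. Once this reduction is in hand, the Lipschitz and boundedness estimates are routine, and the non-negativity reduces to the single structural fact that empty stages have zero outflow.
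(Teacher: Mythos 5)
Your proof is correct. The existence--uniqueness half is essentially the paper's own argument made more explicit: the paper likewise observes that each $F_{i,k}$ is globally Lipschitz in its arguments and invokes standard ODE theory for a unique global solution, while the ``unwrapping'' step you lead with is handled in the text preceding the proposition rather than inside the proof; your formula $Q_{i,k}(t) = \int_0^t F_{i,0}(s)\,ds + \sum_{j=1}^{\kmax} q_{i,j}(0) - \sum_{j=1}^{k-1} q_{i,j}(t)$ follows correctly from \eqref{eq:Q} and \eqref{eq:unwrap_discrete}, and spelling it out is what genuinely justifies treating \eqref{eq:ODE} as a classical system $\dot q=\Phi(t,q)$. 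Where you diverge is the non-negativity argument. The paper proceeds componentwise: from $0 \leq F_{i,k} \leq a_i v_1(q_{i,k};q_*)$ it derives the scalar differential inequality $\dot q_{i,k} \geq -a_i v_1(q_{i,k};q_*)$ and applies a comparison lemma against the zero function, which solves the comparison equation with equality because $v_1(0;q_*)=0$. You instead prove forward invariance of the non-negative orthant via Nagumo's subtangency criterion, using the sharper structural fact that $q_{i,k}=0$ forces $v_2=0$, hence $F_{i,k}=0$ and $\Phi_{i,k}=F_{i,k-1}\geq 0$ on the boundary. Both routes exploit the same feature of the flux (outflow shuts off when a stage empties); the paper's comparison argument is more elementary, needing only a scalar ODE lemma, whereas your invariance argument is more geometric and would survive modifications of the model in which the pointwise bound $F_{i,k}\leq a_i v_1(q_{i,k};q_*)$ fails but the boundary identity $F_{i,k}\big|_{q_{i,k}=0}=0$ still holds. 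One shared caveat, not a defect of your proof relative to the paper's: for $k=1$ both arguments need $F_{i,0}\geq 0$, which is automatic under the paper's convention that the inflow is generated from a prescribed $q_{i,0}$ via \eqref{eq:flux} but is an additional hypothesis if $F_{i,0}$ is truly arbitrary data; likewise your requirement that $F_{i,0}$ be continuous (or locally integrable) in $t$ is an implicit regularity assumption in the paper as well.
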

\begin{proof}
	Since $F_{i,k}$ is globally Lipschitz in its arguments for every $i,k$, standard ODE theory (see, for example Theorem III.VI of \cite{Wal98})  implies the existence of a unique solution. Moreover, it is clear from \eqref{eq:flux} that $0 \leq F_{i,k} \leq a_i v_1(q_{i,k};q_*)$.  Hence according to \eqref{eq:ODE},
	\begin{equation}
	\label{eq:diff_comparison}
	\dot{q}_{i,k}(t) \geq -a_i v_1(q_{i,k};q_*).
	\end{equation}
	Standard comparison results for ordinary differential equations imply then that $q(t) \geq 0$. (See, for example, Lemma 1.2 of \cite{Tes12} and for comparison, use the zero function, which satisfies \eqref{eq:diff_comparison} as an equality.)  
\end{proof}
\section{The continuum model}
\label{sec:continuum}
In this section, we derive a continuous model that is formally accurate in the limit as $\imax$ and $\kmax$ tend to infinity.
We assume, in taking this limit, that the job performed by the computer is fixed -- that is, the total amount of work does not change. 
For given $\imax,\kmax$, we define the quantities:
\begin{equation} \label{eq:scale_params}
\delta := (\kmax)^{-1},\quad \veps := (\imax)^{-1}, \quad \eta:=\frac{\veps}{\delta}.
\end{equation}
Here, $\delta$ is the fraction of the work done in each stage and $\veps$ is the average amount of data in a processor.  Finally, $\eta$ is simply the ratio of $\kmax$ and $\imax$ which will be of use in the following analysis.

\subsection{Formal Derivation}\label{sec:derive}

To derive a continuum model, we first express the ODE \eqref{eq:ODE} in terms of the following $O(1)$ quantities: 
\begin{align}\label{eq:scaling}
r_* := \frac{q_*} {\veps \delta} \,,
\quad 
r_{i,k} := \frac{q_{i,k}}{\veps \delta}  \,,
\quad
R_{i,k} 
:= \frac{1}{\veps} Q_{i,k} \,,
\quad 
D^\pm_{i,k} := \pm \frac{R_{i \pm 1,k} - R_{i,k}}{\veps}  \,,
\quad
\alpha_i := \frac{a_i}{\veps} \,.
\end{align} 
In terms of these rescaled quantities, the neighbor throttling function as can be written as
\begin{align}
v_2\big(q_{i,k},&Q_{i+1,k}-Q_{i,k}+q_{i,k},Q_{i-1,k}-Q_{i,k}+q_{i,k};\beta\big)  \\
& = \min\Big\{\veps\delta r_{i,k},\frac{1}{\beta}\max\{ \veps R_{i+1,k}-\veps R_{i,k}+\veps\delta r_{i,k},0\},\frac{1}{\beta}\max\{ \veps R_{i-1,k}-\veps R_{i,k}+\veps\delta r_{i,k},0\}   \Big\}
\nonumber \\
& = \veps\min\Big\{\delta r_{i,k},\frac{1}{\beta}\max\{  R_{i+1,k}- R_{i,k}+\delta r_{i,k},0\},\frac{1}{\beta}\max\{  R_{i-1,k}- R_{i,k}+\delta r_{i,k},0\}   \Big\}
\nonumber \\
& = \veps\delta\min\Big\{ r_{i,k},\frac{1}{\beta}\max\{  \eta D^+_{i,k}+ r_{i,k},0\},\frac{1}{\beta}\max\{  -\eta D^-_{i,k}+ r_{i,k},0\}  \Big\}. 
\nonumber 
\end{align}
Therefore,
\begin{multline}
\label{eq:v1v2_rescale}
v_1\bigg({v}_2(q_{i,k},{v}_2(q,Q_{i+1,k}-Q_{i,k}+q_{i,k},Q_{i-1,k}-Q_{i,k}+q_{i,k}));q_*\bigg)\\
= \min\left\{1,\frac{ \min\Big\{ r_{i,k},\frac{1}{\beta}\max\{  \eta D^+_{i,k}+ r_{i,k},0\},\frac{1}{\beta}\max\{  -\eta D^-_{i,k}+ r_{i,k},0\}  \Big\} }{r_*}  \right\}.
\end{multline}
With \eqref{eq:v1v2_rescale} in mind, we define the rescaled throttling functions
\begin{subequations}
	\label{eq:w1andw2}
	\begin{align}
	w_1(r,r_*)&=\max \left \{0, \min \left \{1,\frac{r}{r_*} \right\} \right\}. \label{eq:w1}
	\\
	w_2\left(r,D^-,D^+;\eta,\beta \right)&=\min \left\{r,\frac{1}{\beta}\max\{\eta D^+ +r,0\}, \frac{1}{\beta}\max\{\eta D^- +r,0\} \right\} \label{eq:w2}
	\end{align}
\end{subequations}
and the composite function
\begin{equation}
\label{eq:w}
w\left(r,D^-,D^+;r_*,\alpha,\eta,\beta \right):=\alpha w_1\left(w_2(r,D^-,D^+;\eta,\beta);r_*\right).
\end{equation}
The dynamics in \eqref{eq:flux} can now be re-expressed in terms of the $O(1)$ quantities in \eqref{eq:scaling}, thereby obtaining a evolution formula for $r_{i,k}$:
\begin{equation}
\dot{r}_{i,k}(t) = \frac{f_{i,k-1}(t) - f_{i,k}(t)}{\delta}\label{eq:ODE_scaled}
\end{equation}
for $i=1,\dots,\imax$ and $k=1,\dots,\kmax$, where
\begin{equation}\label{eq:flux_scaled}
f_{i,k}(t) = 
w\left( r_{i,k}(t),-D_{i,k}^-(t),D_{i,k}^+(t);r_*,\alpha_i,\eta,\beta\right) 
\end{equation}
for $i=1,\dots,\imax, \, k=0,\dots,\kmax$, and $r_{i,0}$ is prescribed for $i=1,\dots,\imax$

The next step is to interpret \eqref{eq:ODE_scaled} as a conservative finite-difference formula for a sufficiently smooth function $\rho = \rho(x,y,t)$, defined on $[0,1) \times [0,1] \times [0,\infty)$, such that
\begin{equation}\label{eq:ODEtogrid}
\rho(x_i,z_k,t)  = r_{i,k}(t),
\end{equation}
on grid points
\begin{equation}\label{eq:meshpts}
x_i = \left(i-0.5\right)\veps \quad \text{and} \quad z_k = k \delta, 
\end{equation}
for $i=1,\dots,\imax$ and $k=0,\dots,\kmax$.    We also let $\alpha = \alpha(x)$ be a continuous function such that $\alpha(x_i)=\alpha_i$.  

Let $\psi = \psi(x,z,t)$ be a smooth test function with compact support on $[0,1) \times[0,1]\times [0,\infty)$ and set $\psi_{i,k}(t) = \psi(x_i,z_k,t)$. From \eqref{eq:ODE_scaled}, 
\begin{align} 
\label{eq:discrete_cons}
\delta\sum_{k=1}^{\kmax}\psi_{i,k}(t)\dot{r}_{i,k}(t)
& = \sum_{k=0}^{\kmax-1}[\psi_{i,k+1}(t) - \psi_{i,k}(t)]f_{i,k}(t) \\
& \qquad + \psi_{i,0}(t)f_{i,0}(t) - \psi_{i,\kmax}(t)f_{i,\kmax}(t).
\end{align}
Let the function $\phi = \phi(x,z,t)$ interpolate the fluxes on the grid:
\begin{equation}
\phi (x_i,z_k,t) = f_{i,k}(t),
\end{equation}
for $i=1,\dots,\imax, k=1,\dots,\kmax,t\geq 0$. Then \eqref{eq:discrete_cons} can be interpreted formally as the weak formulation (with respect to $z$) of a conservation law for $\rho$ with flux $\phi$:
\begin{align} 
\int_0^1 \psi(x,\xi,t)\p_t\rho(x,\xi,t) d\xi 
& = \int_0^1\p_z\psi(x,\xi,t)\phi (x,\xi,t)d\xi \nonumber  \\
& \quad +\psi(x,0,t)\phi (x,0,t)- \psi(x,1,t)\phi (x,1,t)+O(\delta). \label{eq:wk_form}
\end{align}

To derive a closed model from \eqref{eq:wk_form}, we approximate $\phi $ in terms of $\rho$.   Such an approximation depends on $D_{i,k}^\pm$ via the formula for $f_{i,k}$ in \eqref{eq:flux_scaled}.  From the definition of $Q_{i,k}$  in \eqref{eq:Q} and the scalings in \eqref{eq:scaling}, it follows that
\begin{equation}
R_{i,k} (t)
= \delta \sum_{j=k}^{\kmax} r_{i,j}(t) + \int_0^t f_{i,\kmax}(s)ds 
\end{equation}
and, moreover, that for any finite $t>0$,
\begin{align}
\label{eq:D_approx}
\pm D_{i,k}^\pm(t) 
&=   \frac{\delta}{\veps} \sum_{j=k}^{\kmax} [r_{i \pm 1,j} - r_{i,j}(t)] 
+ \frac{1}{\veps} \int_0^t \left[ f_{i\pm 1,\kmax}(s) - f_{i,\kmax}(s)\right] ds \\
&= \delta \sum_{j=k}^{\kmax} \left[ \pm \p_x \rho(x_i,z_j,t) 
+ \frac{\veps}{2} \p_x^2 \rho(x_i,z_j,t) + O(\veps^2)\right] \\
& \qquad \qquad \qquad + \int_0^t \left[\pm \p_x \phi(x_i,1,s)  
+ \frac{\veps}{2} \p_x^2 \phi(x_i,1,t) + O(\veps^2)\right]  ds\nonumber \\
&= \pm\p_x P(x_i,z_k,t)+\frac{\veps}{2}\p_x^2 P(x_i,z_k,t)+O(\veps^2)+O(\delta)
\label{eq:D_to_P}	
\end{align}
where $P$ is given by 
\begin{align}\label{eq:P_def_prelim}
P(x,z,t) = \int_z^1\rho(x,\xi,t)d\xi&+\int_0^t\phi(x,1,s)ds
\end{align}
Motivated by the above calculation, we approximate $\phi$ by one of two flux functions:

\begin{subequations}
	\label{eq:Phi_def}
	\begin{align}\label{eq:Phi0}
	\Phi^{(0)} \left( \rho,\p_x  P;r_*,\alpha,\eta,\beta \right)
	&= w \left( \rho , - \p_x  P , \p_x  P ;r_*,\alpha,\eta,\beta\right)
	\\
	\label{eq:Phi1} 
	\Phi^{(1)} \left(\rho,\p_x  P, \p^2_x P ;r_*,\alpha,\eta,\beta \right)
	&= w \left(\rho , - \p_x  P + \frac{\veps}{2} \p_x^2  P, \p_x  P + \frac{\veps}{2} \p_x^2  P;r_*,\alpha,\eta,\beta\right).
	\end{align}
\end{subequations}
Using \eqref{eq:D_to_P} and the Lipschitz continuity of $w$ with respect to $D^{\pm}$, we conclude that  
\begin{align}
w&(r_{i,k}(t),D_{i,k}^-(t),D_{i,k}^+(t);r_*,\alpha_i,\eta,\beta)  
\\ 
&= \Phi^{(1)}(\rho(x_i,z_k,t),\p_x {P}(x_i,z_k,t), \p^2_x {P}(x_i,z_k,t) ;r_*,a,\eta,\beta) + O(\veps^2) +  O(\delta). \nonumber \\
&= \Phi^{(0)}(\rho(x_i,z_k,t),\p_x {P}(x_i,z_k,t);r_*,a,\eta,\beta) + O(\veps) +  O(\delta).
\nonumber
\end{align}
Thus for  $0 \ll \veps,\delta \ll 1$, with $\eta \in (0,\infty)$ fixed, \eqref{eq:wk_form} is formally consistent with the  continuum model

\begin{subequations}
	\label{eq:PDEmodel}
	\begin{align}
	&\p_t \rho + \p_z \Phi^{(\ell)}(\rho,\p_xP,\p_x^2P;r_*,a,\eta,\beta) = 0,
	&& \,(x,z,t)\in \bbT^1\times(0,1)\times(0,\infty), \label{eq:cons_law_cont}\\
	&\rho(x,0,t) = \rho_{\rm{bc}}(x,t),
	&& \, (x,t) \in \bbT^1\times(0,\infty)\label{eq:bndry},\\
	&\rho(x,z,0) = \rho_0(x,z), 
	&& \, (x,z) \in \bbT^1\times(0,1)  \label{eq:initial} 
	\end{align}
\end{subequations}
where
\begin{subequations}
	\begin{align}
	P(x,z,t) &= \int_z^1\rho(x,\xi,t)d\xi 
	+ \int_0^t \phi^{(\ell)}(x,1,s) ds, \label{eq:P_def} \\
	\phi^{(\ell)}(x,z,t) &= \Phi^{(\ell)} \left(\rho(x,z,t),\p_x  P(x,z,t), \p^2_x P(x,z,t) ;r_*,\alpha,\eta,\beta \right), \label{eq:phi_ell_def}
	\end{align}
\end{subequations}
and
$\Phi^{(\ell)}$, $\ell \in \{0,1\}$, is given in \eqref{eq:Phi_def}.  For the sake of compactness, we have slightly abused notation in \eqref{eq:cons_law_cont}, as the definition of $\Phi^{(0)}$ is independent of $\p_x^2P$.  Additionally, we have identified $[0,1)$ with the one-dimensional torus $\bbT^1$ in order to reflect the periodic layout of the processors. 

As in the discrete case, it may appear that the model in \eqref{eq:PDEmodel} is circular due to  the definition of $P$ in \eqref{eq:P_def}.  However, as with $F$ in \eqref{eq:unwrap_discrete}, $\Phi^{(\ell)}$ can be unwrapped, this time using the conservation law \eqref{eq:cons_law_cont}; that is
\begin{equation}
\int_0^t\phi^{(\ell)}(x,1,s)ds 
\label{eq:cont_unwrap}
= \int_0^t \phi(x,0,s)ds + \int_0^1 \rho_0(x,\xi)d\xi  -\int_0^1\rho(x,\xi,t) d\xi 
\end{equation}
Thus the continuum model is complete once initial condition $\rho_0$ and inflow condition $\phi_{\rm{bc}} := \phi(\cdot,0,\cdot)$ are specified.  In practice, $\rho_{\rm{bc}}$ is prescribed and then $\phi_{\rm{bc}}$ is evaluated using \eqref{eq:phi_ell_def} and \eqref{eq:Phi_def}.

We use the flux function $\Phi^{(0)}$ for all of the numerical simulations in Section \ref{sec:numerics}.  This function is a piecewise constant that can be expressed in the following form:
\begin{equation}\label{eq:w_vals}
\Phi^{(0)}(r,D;r_*,\alpha,\beta)=\begin{dcases}
\alpha&(r,D) \in \Omega_1,\\
\frac{\alpha r}{r_*}& (r,D) \in  \Omega_2,\\
\frac{\alpha (r+\eta D)}{\beta r_*}&(r,D) \in \Omega_3,\\
\frac{\alpha (r-\eta D)}{\beta r_*}&(r,D) \in \Omega_4,\\
\end{dcases}
\end{equation}
where the subdomains $\Omega_i$ are depicted in Figure \ref{fig:wD}.

\begin{figure}
	\centering
	\begin{tikzpicture}[scale = 2]
	\draw [very thick,->, opacity = .8]  (0.0,0.0) -- node [at end, right] {$r$} (0.0,2.25);
	\draw [very thick, ->, opacity = .8]	(-2.25,0.0) -- node [very near end, right, below] {$D$} (2.25,0.0);
	\draw [dashed, red]  (0.0,0.0) -- node [very near end, right=3pt] {$\eta D = r$} (2.,2.);
	\draw [dashed,red]  (0.0,0.0) -- node [very near end,left = 3pt] {$-\eta D = r$} (-2.,2.);
	\draw [dashed,red]  (-2.25,1) -- node [very near end ,below=6pt,right] {$r=r_*$} (2.25,1);
	\draw [dashed,red] (0.5,1) -- node [very near end,above  =12pt] {$\eta D=r-\beta r^*$} (1.5,2);
	\draw [dashed,red] (-0.5,1) -- node [very near end,above  =12pt] {$-\eta D = r-\beta r^*$} (-1.5,2);
	\draw [dashed,red] (0.5,1) -- (0,0);
	\draw [dashed,red] (-0.5,1) -- (0,0);
	\fill [blue,opacity = .2] (-0.5,1)--(0.5,1) -- (1.5,2.) -- (-1.5,2.)--cycle;
	\fill [purple!40,opacity = .2] (0,0) -- (-2,2) -- (-1.5,2) -- (-0.5,1) -- cycle;
	\fill [cyan!60,opacity = .2] (0,0) -- (2,2) -- (1.5,2) -- (0.5,1) -- cycle;;		
	\fill [green,opacity = .2] (0.0,0.0) -- (-.5,1) -- (.5,1) -- cycle;	
	\node at (0.3,1.5) {$\Omega_1$};
	\node at (-0.3,1.5) {$\Omega_1$};
	\node at (-0.15,0.7) {$\Omega_2$};
	\node at (0.15,0.7) {$\Omega_2$};
	\node at (-0.5,.7) {$\Omega_3$};
	\node at (-1.1,1.3) {$\Omega_3$};
	\node at (1.1,1.3) {$\Omega_4$};
	\node at (.5,0.7) {$\Omega_4$};
	\draw[red] (.3,.6) -- node [pos =1,right,red] {$\frac{\eta D}{1-\beta} = r$} (1.5,.15);
	\draw[red] (-.3,.6) -- node [pos =1,left,red] {$\frac{-\eta D}{1-\beta} = r$} (-1.5,.15);
	\end{tikzpicture}
	\caption{Flux $\Phi^{(0)}$ defined in \eqref{eq:w_vals}}
	\label{fig:wD}
\end{figure}
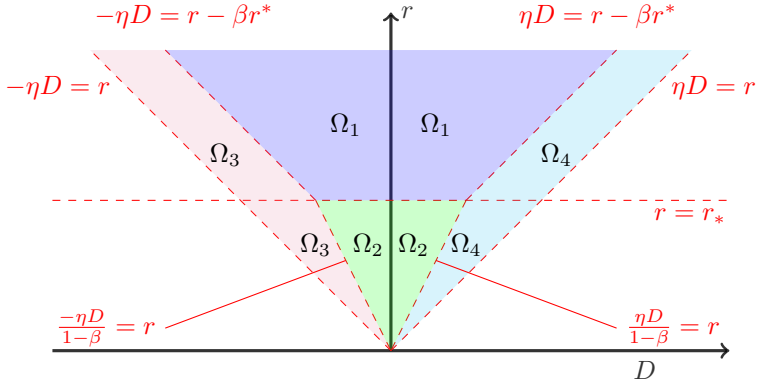

\subsection{A Hamilton-Jacobi formulation}\label{sec:HJB}
To our knowledge, there are no immediate conclusions available from the literature regarding the existence or uniqueness of solutions to \eqref{eq:PDEmodel}.  
However, we may consider instead a related Hamilton-Jacobi equation which in turn opens up the extensive theory of viscosity solutions. 
For background, we refer to \cite{CraIshLio92, ImbSil13, Bar13}.  
We are interested primarily in obtaining, for regular inputs $\alpha$, $\phi_{\rm{bc}}$, and  $\rho_0$, the existence and uniqueness of $P$.  
It is possible that for more general inputs such results are available in the extensive viscosity solution literature (e.g. \cite{Bar93,Coc03,JakKar02}).

Integrating \eqref{eq:cons_law_cont} with respect to $z$ gives 
\begin{equation}
\label{eq:P1}
\p_t\int_{z}^1\rho(x,\xi,t)d\xi + \Phi^{(\ell)}(x,1,t) - \Phi^{(\ell)}(x,z,t) =0
\end{equation}
Meanwhile, differentiating \eqref{eq:P_def} gives
\begin{equation}
\label{eq:P2}
\p_t P(x,z,t) = \p_t\int_{z}^1\rho(x,\xi,t)d\xi + \Phi^{(\ell)}(x,1,t)
\end{equation}
Combining \eqref{eq:P1} and \eqref{eq:P2} and using the fact that $\rho=-\p_zP$ gives a closed Hamilton-Jacobi equation for $P$ with  initial and boundary conditions that are derived by applying the definition of $P$ in \eqref{eq:P_def} to \eqref{eq:initial} and \eqref{eq:bndry}, respectively. The complete model is, for some $T>0$,
\begin{subequations}
	\label{eq:HJ_prelim}
	\begin{align}
	\p_tP-\Phi^{(\ell)}(-\p_zP,\p_x P, \p^2_x P ;r_*,\alpha,\eta,\beta) = 0, 
	&& \, (x,z,t) \in \bbT^1\times(0,1)\times(0,T),\\
	P(x,0,t) - \int_0^1 \rho_0(x,\xi)d\xi - \int_0^t \phi_{\rm{bc}}(x,s)ds=0,
	&& \, (x,t) \in \bbT^1\times(0,T), 	\label{eq:HJ_bndry} \\
	P(x,z,0) - \int_z^1\rho_0(x,\xi)d\xi=0,
	&&  (x,z) \in \bbT^1\times(0,1), \label{eq:HJ_ic}
	\end{align}
\end{subequations}
where \eqref{eq:HJ_bndry} is derived by integrating \eqref{eq:bndry} over $z \in (0,1)$ and applying \eqref{eq:cont_unwrap}.

\begin{theorem}\label{thm:HJ_exist}
	Assume that $\alpha$ and $\rho_0$ are (i) non-negative, (ii) uniformly Lipschitz in their arguments, and (iii) periodic in $x$ (that is, $\alpha(0)=\alpha(1)$ and   $\rho_{\rm{bc}}(0,t)=\rho_{\rm{bc}}(1,t)$ for all $t\in[0,T]$). Further, assume that there is an $M$ where {$\int_0^T \phi_{\rm{bc}}(x,s)ds \leq M$} for all $x\in \bbT^1$.  Then there exists a unique, continuous, viscosity solution (in the sense of \cite{CraIshLio92}) to \eqref{eq:HJ_prelim}.
\end{theorem}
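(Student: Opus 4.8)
The plan is to read \eqref{eq:HJ_prelim} as a (possibly degenerate) parabolic Hamilton--Jacobi equation $\p_t P + \mathcal{H}(x,\p_z P,\p_x P,\p_x^2 P)=0$ with $\mathcal{H} := -\Phi^{(\ell)}(-\p_z P,\p_x P,\p_x^2 P)$, and then to invoke the standard viscosity-solution machinery of \cite{CraIshLio92}: a comparison principle for uniqueness and Perron's method for existence. First I would record the two structural facts that make this machinery applicable. Since $w_1$ and $w_2$ in \eqref{eq:w1andw2} are built from maxima and minima of affine maps, the composite $w$ in \eqref{eq:w}, and hence $\Phi^{(\ell)}$, is globally Lipschitz in $(\rho,\p_x P,\p_x^2 P)$; combined with the assumed Lipschitz continuity of $\alpha(x)$ this supplies the modulus of continuity needed later. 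Moreover $w_2$ is nondecreasing in each difference argument $D^\pm$ and $w_1$ is nondecreasing, so $\Phi^{(1)}$ is nondecreasing in the Hessian entry $\p_x^2 P$ (which enters both $D^\pm$ through $+\tfrac{\veps}{2}\p_x^2 P$). Consequently $\mathcal{H}$ is nonincreasing in $\p_x^2 P$, i.e. the equation is degenerate elliptic in the sense of \cite{CraIshLio92}; for $\ell=0$ there is no Hessian dependence and the equation is first order.

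For uniqueness I would establish a comparison principle by the doubling-of-variables technique. Because $\mathcal{H}$ has no zeroth-order dependence on $P$, I would first reduce to a strict inequality using time: given a subsolution $u$ and a supersolution $v$, it suffices to compare $u_\lambda := u - \lambda/(T-t)$, a strict subsolution, with $v$ and then let $\lambda \downarrow 0$, which circumvents the absence of properness in $P$. Doubling in $(x,z,t)$ and applying the parabolic Crandall--Ishii lemma yields, at an interior maximum of the penalized difference, limiting sub- and superjets together with matrices $X \le Y$ controlled by the penalization parameter. The required estimate then reduces to bounding $\Phi^{(\ell)}(x,\cdot,X) - \Phi^{(\ell)}(x',\cdot,Y)$: degenerate ellipticity absorbs the $X,Y$ terms, Lipschitz continuity of $\Phi^{(\ell)}$ in the gradient absorbs the penalized first-order terms, and the Lipschitz dependence of $\alpha$ on $x$ contributes a modulus $\omega(|x-x'|)$ that vanishes in the limit. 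Periodicity in $x$ removes the lateral boundary, and the Dirichlet datum at $z=0$ is inherited on the parabolic boundary by construction.

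For existence I would run Perron's method, for which barriers follow immediately from $0 \le \Phi^{(\ell)} \le \alpha(x) \le \|\alpha\|_\infty =: A$. Writing $g(x,z) := \int_z^1 \rho_0(x,\xi)\,d\xi$ for the initial datum of \eqref{eq:HJ_ic}, the time-independent function $\underline{P} := g$ gives $\p_t \underline P - \Phi^{(\ell)} = -\Phi^{(\ell)} \le 0$, so it is a subsolution, while $\overline{P} := g + A t$ gives $\p_t \overline P - \Phi^{(\ell)} = A - \Phi^{(\ell)} \ge 0$, so it is a supersolution; both bracket $g$ at $t=0$, and the nonnegativity of $\phi_{\rm{bc}}$ (as a value of $\Phi^{(\ell)}\in[0,A]$) together with the bound $M$ on $\int_0^T \phi_{\rm{bc}}$ lets a linear-in-$t$ adjustment straddle the boundary datum \eqref{eq:HJ_bndry} at $z=0$. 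The supremum of all subsolutions lying below $\overline P$ is then the desired continuous viscosity solution, and the comparison principle pins it down uniquely.

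The main obstacle I anticipate is twofold. First, verifying the Crandall--Ishii--Lions structural inequality ((3.14)/(8.6) of \cite{CraIshLio92}) for $\ell=1$ must be done with care because $\Phi^{(1)}$ is only piecewise affine, nondifferentiable across the interfaces $\p\Omega_i$ of Figure \ref{fig:wD}, so one cannot differentiate $\mathcal{H}$ and must instead exploit monotonicity and the explicit Lipschitz constants region by region. Second, the interval $(0,1)$ in $z$ carries a prescribed condition only at the inflow $z=0$; since $\Phi^{(\ell)}$ is nondecreasing in $\rho = -\p_z P$, data propagate toward $z=1$, so no datum should be enforced there. I would therefore treat the outflow boundary $z=1$ either as a state constraint or, following the generalized Dirichlet formulation in \cite{Bar13}, relax the boundary equality in the viscosity sense, and check that the barriers above remain compatible with this relaxed condition.
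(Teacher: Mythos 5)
Your proposal is correct and shares the paper's overall skeleton, but it executes the uniqueness half by a genuinely different route. Both you and the paper read \eqref{eq:HJ_prelim} as a degenerate parabolic equation for $P$ with Hamiltonian $-\Phi^{(\ell)}$, extract the same two structural facts (global Lipschitz continuity of $\Phi^{(\ell)}$ from the max/min-of-affine structure of $w_1,w_2$, and degenerate ellipticity from the monotonicity of $w_2$ in $D^\pm$), and obtain existence via Perron's method from barriers of the form $\int_z^1\rho_0(x,\xi)\,d\xi$ plus a linear-in-time term; the paper's barriers are $P^\pm = \int_z^1\rho_0(x,\xi)\,d\xi \pm M \pm t\sup_x\alpha(x)$, essentially your $\underline P,\overline P$ with the constant $M$ made explicit. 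The difference is that the paper never runs a doubling-of-variables argument: it extends all data periodically from $\bbT^1$ to $\R$ by tiling and then verifies the eight hypotheses of a quoted general comparison theorem on possibly unbounded domains (\cref{thm:Gig_comp}, Theorem 4.1 of Giga--Goto--Ishii), after which existence and uniqueness follow in one stroke from \cref{thm:visc_solution}. Your plan --- strictifying with $u-\lambda/(T-t)$ to compensate for the missing properness in $P$, doubling variables, and applying the parabolic Crandall--Ishii lemma with region-by-region Lipschitz estimates for the piecewise-affine flux --- proves that comparison statement by hand; this is more work, but it is self-contained, avoids fitting the periodic problem into an unbounded-domain framework, and your explicit treatment of the outflow boundary $z=1$ (state constraint or relaxed Dirichlet condition, justified by the flux monotonicity in $\rho=-\p_z P$) addresses a point the paper leaves implicit in the min/max parabolic-boundary formulation of its appendix. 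Two small cautions: your supersolution $\overline P = g + At$ uses the pointwise bound $\phi_{\rm{bc}} \le A$ and the subsolution check uses $\phi_{\rm{bc}}\ge 0$, neither of which is literally among the stated hypotheses (only the integral bound $M$ is); they do hold because $\phi_{\rm{bc}}$ is an evaluation of $\Phi^{(\ell)}\in[0,\alpha]$, but it is cleaner to insert $\pm M$ as the paper does so that only the stated assumption is invoked.
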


\begin{proof}  
	We show that \cref{thm:Gig_comp} applies by first modifying the domain in \eqref{eq:HJ_prelim}.  We extend $\alpha$, $\rho_0$, $\rho_{\rm{bc}}$, and $\phi_{\rm{bc}}$ as functions of $x$ from $\bbT^1$ to all of $\R$ by tiling; for simplicity, in the remainder of the proof we still refer to these extensions by the same name.  The assumption that $\alpha(0)=\alpha(1)$ means that the extended version of $\alpha$ is uniformly Lipschitz on $\R.$  We then consider \eqref{eq:HJ_prelim} defined on $\Omega := \R\times(0,1)\times(0,T)$, and to more closely align with the results in the appendix, let 
	\begin{equation}
	\label{eq:PhitoH}
	H^{(\ell)}(x,z,t, P,\nabla P, \nabla^2 P) = -\Phi^{(\ell)}(-\p_zP,\p_x P, \p^2_x P ;r_*,\alpha(x),\eta,\beta)
	\end{equation}
	for $\ell \in \{0,1\}$, where $\nabla = (\p_x, \p_z)$.
	By the hypothesis on $\alpha$, both $H^{(0)}$ and $H^{(1)}$ are uniformly Lipschitz on all of $\Omega \times \R \times \R^2 \times \mathcal{S}^2$, where $\mathcal{S}^n$ is the space of all $n \times n$ symmetric matrices.  Moreover, $H^{(\ell)}$ is nonnegative, bounded by $\alpha$, and independent of the argument $P$. This means that it immediately satisfies Hypotheses 1 and 3-8 of \cref{thm:Gig_comp}. Thus the only condition of \cref{thm:Gig_comp} left to be verified is Hypothesis 2, which is the degenerate ellipticity condition on $H^{(\ell)}$.  Verifying this condition can be done in a sequence of simple steps, starting with the definitions of $w_1$ and $w_2$.  
	\begin{alignat*}{3}
	&\text{$w_1$ is non-decreasing WRT $r$ and $w_2$ is non-decreasing WRT $D^+,D^{-} $} & \quad  &\text{(see \eqref{eq:w1andw2}) } \\
	& \qquad \Longrightarrow
	\text{$w$ is non-decreasing WRT $D^+,D^{-} $} &&\text{(see \eqref{eq:w}) } \\
	& \qquad \Longrightarrow
	\text{$\Phi^{(\ell)}$ is non-decreasing WRT $\p_x^2 P $} && \text{(see \eqref{eq:Phi_def}) } \\
	& \qquad \Longrightarrow
	\text{$H^{(\ell)}$ is degenerate elliptic}  && \text{(see \eqref{eq:PhitoH}) }
	\end{alignat*}

	Finally, to invoke \cref{thm:visc_solution}, we must establish the existence of subsolutions and supersolutions as defined in \eqref{eq:sub_soln} and \eqref{eq:super_soln}, respectively. This is done by the usual construction found in, for instance, \cite[Section 2.3.2.1]{ImbSil13}.  Let 
	\begin{equation}
	P^\pm(x,z,t):=\int_z^1\rho_0(x,\xi)d\xi \pm M\pm t \cdot \sup_x \alpha(x).
	\end{equation}  
	Clearly $\p_t P^\pm=\pm\sup\alpha(x)$ and since $|\Phi_\ell| \leq |\alpha(x)|$, it follows that \begin{equation}
	\pm [\p_t P^\pm - \Phi^{(\ell)}(-\p_z P^\pm,\p_x P^\pm, \p^2_x P^\pm ;r_*,\alpha,\eta,\beta)] \geq 0.
	\end{equation}
	Thus $P^\pm$ satisfy the interior conditions in \eqref{eq:super_soln_int} and \eqref{eq:sub_soln_int}, respectively. Next write \eqref{eq:HJ_bndry} and \eqref{eq:HJ_ic} in the form $h(t,x,P,\nabla P)=0$. Then it is straight-forward to verify that $\pm h(t,x,P^\pm,\nabla P^\pm) \geq 0$. Hence $P^\pm$ satifies the parabolic boundary conditions in \eqref{eq:super_soln_int} and \eqref{eq:sub_soln_int}, respectively.  Therefore $P^-$ is a subsolution and $P^+$ is a supersolution for \eqref{eq:HJ_prelim}.  This completes the proof. 
\end{proof}


\begin{remark}
	In general, results regarding the regularity of solutions to \eqref{eq:HJ_prelim} using $\Phi^{(0)}$ (no such results are known by the authors for $\Phi^{(1)}$) require additional smoothness of (and possibly convexification of) $\Phi^{(0)}$ as well as other technical conditions (see \cite{Rif08}, \cite{BiaTon12}, \cite{CanFra14}).  Therefore obtaining the existence of a $L^1$ function $\rho$ solving \eqref{eq:PDEmodel} (in some generalized sense) via the existence of $P$ solving \eqref{eq:HJ_prelim} is still an open problem. 
\end{remark}

\subsection{Higher Dimensional Models}\label{sec:highdim}
Both the discrete and continuum models above can be readily extended to systems of processors arranged in an $n$-dimensional periodic lattice.   Assuming that processors only communicate with their nearest neighbors (i.e., no diagonal communication), the $n$-dimensional analog of the system formed by \eqref{eq:ODE}, \eqref{eq:Q}, and \eqref{eq:flux} is: 
\begin{align}
\frac{d q_{\bi,k}}{dt}=F_{\bi,k-1}-F_{\bi,k} , \qquad 		Q_{\bi,k}(t) = \left(\sum_{j=k}^{\kmax}q_{\bi,j}(t)\right)+\int_0^tF_{\bi,\kmax}(s)ds.\\
F_{\bi,k} = a_{\bi} \, v_1 \left( 
\min_{1\leq d\leq n} \Big\{ 
v_2 \left( q_{\bi,k},Q_{\bi-\be_d,k}-Q_{\bi,k}+q_{\bi,k},Q_{\bi+\be_d,k}-Q_{\bi,k}+q_{\bi,k} ;\beta  \right) 
\Big\}  ;q_*
\right)
\end{align} 
where $\bi = (i_1, \dots, i_n)$ is a multi-index and $(\be_d)_i = \delta_{d,i}$.   As in the one-dimensional case, $v_2$ provides the amount of available data to process, after accounting for the throttling from neighbors over a given axis.  
The multidimensional discrete model then takes the minimum over all possible axes in order to determines what is available to be processed to the next stage.
As in the one-dimensional case, self-throttling is computing use $v_1$ based on the amount of data available for processing.

If $i^{\text{max}}_d$ denotes the number of processors along the $d$ direction, we let $\veps_d=(i^{\text{max}}_d)^{-1}$; the definition of $\delta$ is unchanged.  For notational convenience, we set $V:=\prod_{d=1}^n\veps_d.$ We define $w_{2,d}$ by replacing $\eta$ in the definition of $w_2$ with $\eta_d:=\veps_d/\delta.$  Then we define the quantities analogous to those in \eqref{eq:scaling}
\begin{align}
r_* := \frac{q_*} {\delta V} \,,
\quad 
\hat{\alpha}_{\bi} := \frac{\alpha_{\bi} }{V} \,,
\quad
r_{{\bi} ,k} := \frac{q_{{\bi} ,k}}{\delta V}  \,,
\quad
R_{{\bi} ,k} 
:= \frac{1}{V} Q_{{\bi} ,k} \,,
\quad 
D^{d,\pm}_{{\bi} ,k} := \pm \frac{R_{\bi \pm \be_d,k} - R_{i,k}}{\veps_d}  \,.
\end{align}

Continuing as in Section \ref{sec:derive}, we define 
\begin{equation}
x_{\bi}=\Big( \veps_1(i_1+0.5),\dots,\veps_n(i_n+0.5) \Big),\qquad z_k = k\delta
\end{equation}
and the smooth density function $\rho(x,z,t)$ defined on $[0,1)^n\times(0,1)\times[0,\infty)$ for which $\rho(x_{\bi},z_k,t)=r_{\bi,k}(t)$.  Arguments analogous to those used to obtain \eqref{eq:D_approx} and \eqref{eq:D_to_P} give us that 
\begin{equation}
\pm D^{d,\pm}_{\bi,k}\approx\pm\p_{x_d}P(x_{\bi},z_k,t) + \frac{\veps_d}{2}\p_{x_d}^2P(x_{\bi},z_k,t).
\end{equation}
We assume that all of the $\imax_d$ are of the same order, so that the order of accuracy of the above approximation is consistent across all dimensions.

As advection in the $z$-direction is unchanged, we have the continuum model

\begin{subequations}
	\label{eq:PDEmodel_n}
	\begin{align}
	&\p_t \rho + \p_z \Phi^{(\ell)}(\rho,\nabla_xP,\nabla_x^2P;r_*,\alpha,\vec\eta,\beta) = 0,
	&& \,(x,z,t)\in \bbT^n\times(0,1)\times(0,\infty),\\
	&\rho(x,0,t) = \rho_{\rm{bc}}(x,t),
	&& \, (x,t) \in \bbT^n\times(0,\infty),\\
	&\rho(x,z,0) = \rho_0(x,z), 
	&& \, (x,z) \in \bbT^n\times(0,1)  ,
	\end{align}
\end{subequations}
where  $\mathbb{T}^n$ denotes the $n$-dimensional torus parameterized by $[0,1)^n$, $P$ and $\phi^{(\ell)}$ are defined as in \eqref{eq:P_def} and \eqref{eq:phi_ell_def}, respectively, and the form of $\Phi^{(\ell)}$, $\ell \in \{0,1\}$ is a slightly generalized version of \eqref{eq:Phi_def}.
\begin{subequations}
	\begin{align}
	&\Phi^{(0)}(\rho,\nabla_xP,\nabla_x^2P;r_*,\alpha,\vec\eta,\beta)  = \alpha \, w_1\bigg(\min_{1\leq d\leq n}\Big(\big[w_{2}(\rho,\p_{x_d}P,-\p_{x_d}P;\eta_d,\beta)\big]\Big);r_*\bigg)\\
	&\Phi^{(1)}(\rho,\nabla_xP,\nabla_x^2P;r_*,\alpha,\vec\eta,\beta)  = \\
	& \qquad \qquad \alpha \, w_1\bigg(\min_{1\leq d\leq n}\Big(\big[w_{2}(\rho,\p_{x_d}P+\frac{\veps_d}{2}\p_{x_d}^2P,-\p_{x_d}P+\frac{\veps_d}{2}\p_{x_d}^2P;\eta_d,\beta)\big]\Big);r_*\bigg)
	\end{align}
\end{subequations}

For notational convenience we have defined $\Phi^{(\ell)}$ using the full tensor $\nabla_x^2P$; however, we note that the flux functions do not depend on mixed second derivatives. The procedure used in Section \ref{sec:HJB} to obtain a Hamilton-Jacobi equation for $P$ can be repeated here; the only changes are (i) the multi-dimensional version of $\Phi^{(\ell)}$ in \eqref{eq:Phi_def} and (ii) the domain of the $x$ variable.    Verifying that these newly-defined flux functions  satisfy the conditions of Theorem \ref{thm:Gig_comp}  is essentially the same as before.  Existence and uniqueness of viscosity solutions $P$ then follow.


\section{Numerical Simulations}
\label{sec:numerics}

In this section, we perform numerical simulations of the one dimensional processor system in order to (i) test the ability of the macroscopic model to approximate the discrete model when $\veps$ and $\delta$ are small and (ii) explore how model parameters affect the model output.    All simulations are based on the flux $\Phi^{(0)}$, although results with $\Phi^{(1)}$ demonstrate similar characteristics. Problem data is specified in terms continuum model of continuum models quantities.  These quantities are translated back to discrete model quantities in order to implement ODE simulations.   

\subsection{ODE Implementation}
The explicit two-step Adams-Bashforth (Section III of \cite{HaiNorWan93}) is used to simulate the discrete model formed by {\eqref{eq:ODE}, \eqref{eq:Q}, and \eqref{eq:flux}.    Given $\eta$, values $\imax$ and $\kmax$ are chosen so that $\kmax / \imax = \eta$ (cf. \eqref{eq:scale_params}).   We then compute a solution to the discrete model as follows. Using \eqref{eq:scaling} and \eqref{eq:ODEtogrid}, we convert $r_*,a, \rho_0, \rho_{bc}$ to their discrete counterparts:
	\begin{equation}
	q_* = \veps \delta r_* , \qquad q_{i,k}(0) = \veps \delta \rho_0 (x_i,z_k) , \qquad q_{i,0}(t) = \rho_{\rm{bc}}(x_i,t), \quad a_i = \veps \alpha(x_i).
	\end{equation}
	This discrete model data is used to set the time step:
	\begin{equation}
	\Delta t = \frac{q_*}{2\big(\max_ia_i\big)\sqrt{\imax\kmax}}.
	\end{equation}	
	The outflow at $F_{i,\kmax}$ is tracked and accumulated over time in order to compute $Q_{i,k} $ from \eqref{eq:Q}.   At the final time $T$, the result of the explicit time stepping is converted back, via the formula in \eqref{eq:scaling}, i.e.,  $r_{i,k}(T) = (\veps \delta)^{-1} q_{i,k}(T)$.
	In order to compare this against solutions to the continuum model (see below), we use these point-wise values to generate a piecewise constant function $r$ over the cells $C_{i,k} = (x_i-.5\veps,x_i+.5\veps)\times (z_k,z_k+\delta)$:  
	\begin{equation}
	r(x,z) = \sum_{i,k} r_{i,k} \chi_{C_{i,k}}(x,z).
	\end{equation}

	\subsection{Hamilton-Jacobi Implementation}
	The Hamilton Jacobi equation \eqref{eq:HJ_prelim} is solved numerically using a fifth-order WENO interpolation in $x$ and $z$ and the optimal third-order SSP Runge-Kutta method for time integration. Details of these algorithms can be found in Sections 3.2 and 6, respectively, of \cite{shu2007high}.   Once a numerical solution for $P$ is computed, we again use WENO interpolation to approximate $\rho$ via the relation
	$\rho(x,z,t) = - \partial_zP(x,z,t)$.
	
	To condense the notation, let $\sigma = \p_x P$, $\tau = \p_z P$ and $\upsilon=\p_{xx} P$.  Then for fixed $r_*$, $\alpha$, $\eta$, $\beta$, and $\ell$, let $H(\sigma, \tau, \upsilon) =  -\Phi^{(\ell)}(-\tau, \sigma,\upsilon; r_*,a,\eta,\beta)$.
	The numerical solution for $P$ is computed on a grid $\{x_n, z_m \}$ where
	\begin{align}
	&x_n = n\Delta x, & &\ n=1,2,\dots,N,\quad & &\Delta x = {N} ^{-1},\\
	&z_m = m\Delta z, & &\ m=1,2,\dots,M,\quad & &\Delta z = {M}^{-1}.
	\end{align}
	The semi-discrete method for the grid function $P_{n,m}(t) \approx P(x_n,z_m,t)$ is 
	\begin{equation}
	\frac{d}{dt}P_{n,m}(t) = - \hat{H}(\sigma_{n,m}^-,\sigma_{n,m}^+,\tau_{n,m}^-,\tau_{n,m}^+;\upsilon_{n,m}),
	\end{equation}
	where the numerical approximations $\sigma_{n,m}^{\pm} \approx \sigma(x_n^ {\pm}, z_m)$ and $\tau_{n,m}^{\pm}  \approx \tau(x_n, z_m^ {\pm})$ are obtained via  WENO interpolation and  $\upsilon_{n,m}  \approx \upsilon(x_n, z_m)$ is computed by central difference.  The numerical flux function $\hat{H}$, based on the global Lax-Friedrichs flux:
	\begin{equation}\hat{H}(\sigma^-,\sigma^+,\tau^-,\tau^+;\upsilon) = H\left( \frac{\sigma^-+\sigma^+}{2} , \frac{\tau^-+\tau^+}{2}, \upsilon \right) - \frac{1}{2}\lambda^x(\sigma^+-\sigma^-) - \frac{1}{2}\lambda^z(\tau^+-\tau^-),
	\end{equation}
	where
	\begin{equation}
	\lambda^x = \max_{\sigma,\tau}|H_\sigma| = \frac{\alpha\eta}{\beta r_*},\quad \lambda^z =\max_{\sigma,\tau}|H_\tau| = \frac{\alpha}{\beta r_*}.
	\end{equation}
	The time step for the SSP integrator is given by
	\begin{equation}\label{eq:SSPstep}
	\Delta t  \left( \frac{\lambda^x}{\Delta_x} + \frac{\lambda^z}{\Delta z} \right) \leq 0.6 .
	\end{equation}

	\subsection{Experiments}
	We perform a sequence of exploratory experiments below, modifying the parameters $\eta$ and $\beta$, as well as the throughput function $\alpha$.  In all cases, $\alpha$, $\rho_0$, and $\rho_{\rm{bc}}$ are periodic with respect to $x$ and the parameter $r_*=1$.     Results are presented as two-dimensional color maps or line-outs in the $z$ direction.  In all figures, the horizontal axis corresponds to the $z$-axis.  Profiles of $\alpha$ for each experiment are depicted in \cref{fig:alphas}.

	\begin{figure}
		\begin{subfigure}{.24\linewidth}
						\includegraphics[width=\textwidth]{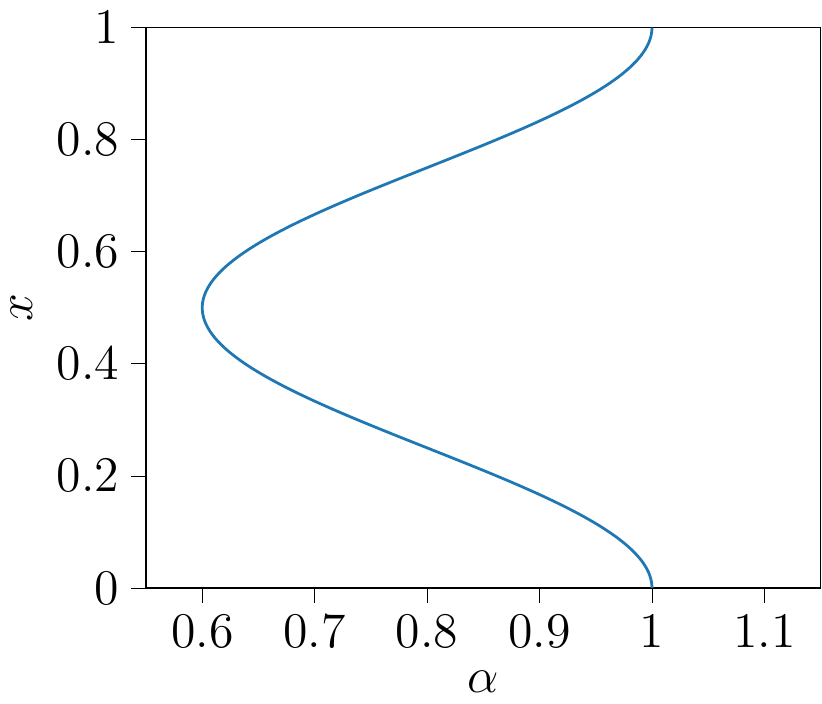}
			\caption{Example 1}
			\label{fig:alpha1_3}
		\end{subfigure}\hfill
		\begin{subfigure}{.24\linewidth}
			\includegraphics[width=\textwidth]{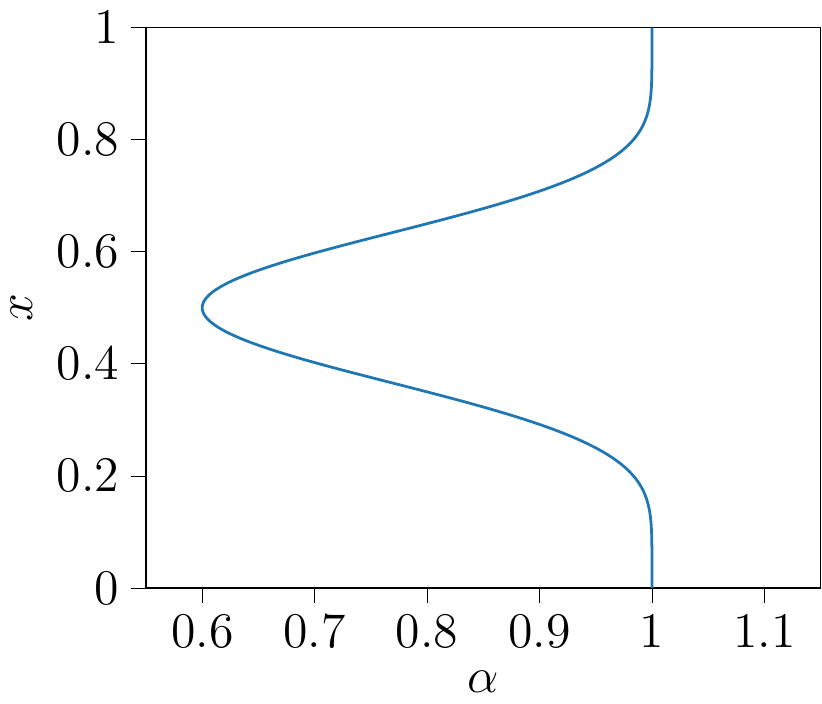}
			\caption{Examples 2-3}
			\label{fig:alpha2_3}	
		\end{subfigure}\hfill
		\begin{subfigure}{.24\linewidth}
			\includegraphics[width=\textwidth]{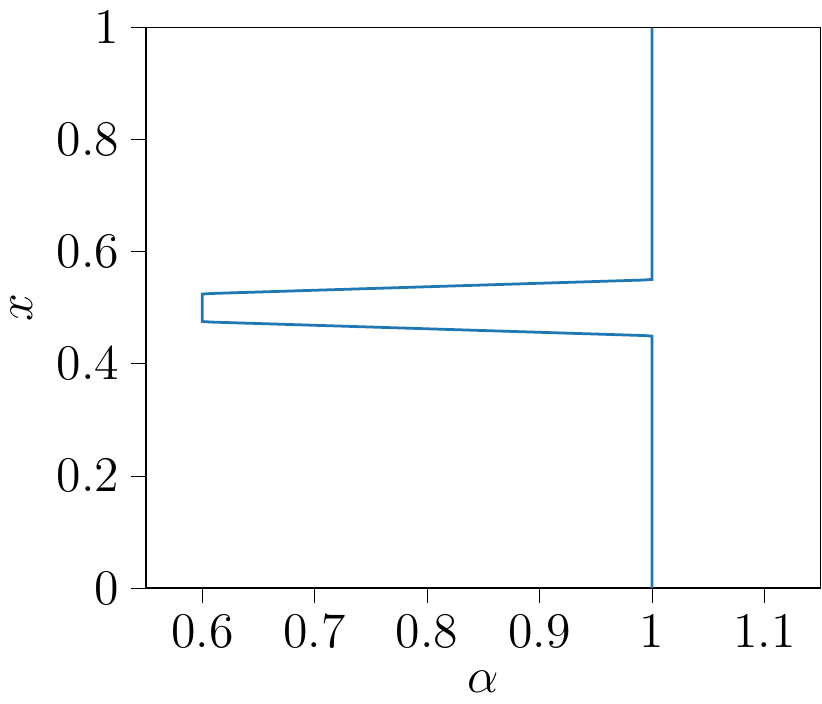}
			\caption{Example 4}
			\label{fig:alpha4}
		\end{subfigure}\hfill
		\begin{subfigure}{.24\linewidth}
			\includegraphics[width=\textwidth]{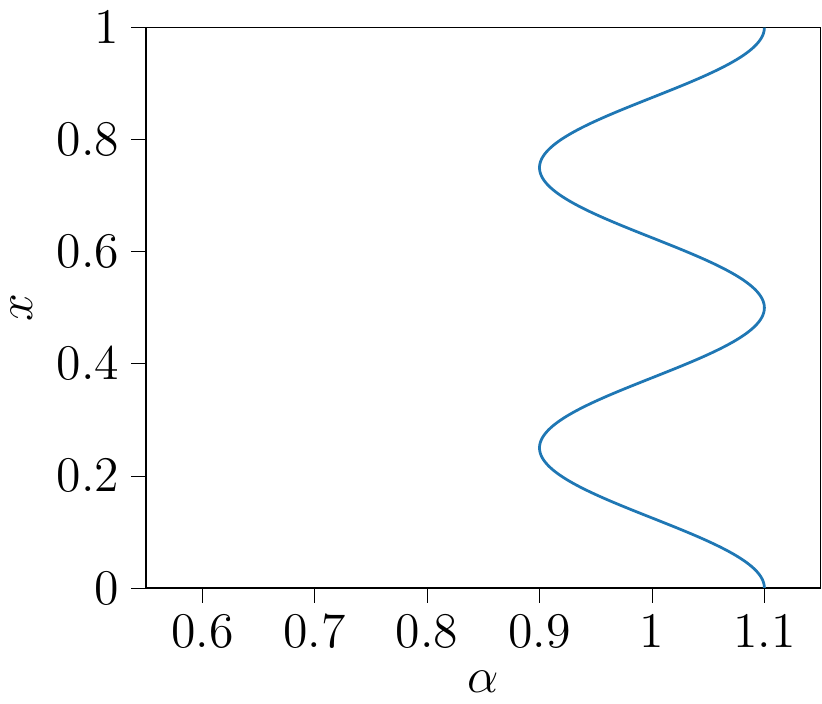}
			\caption{Example 5}
			\label{fig:alpha5}
		\end{subfigure}
		\caption{Profiles of the processor speed $\alpha$ used in the numerical experiments.  The non-standard orientation of the graphs is set to match the axes in the numerical results that follow.}
		\label{fig:alphas}
	\end{figure}
	
	\begin{example}[Agreement between models] The purpose of this example is to demonstrate that the macroscopic model approximates the microscopic model when $\veps$ and $\delta$ are sufficiently small.  We set $\beta=1$ and consider $\eta \in \{ 0.2,1,5 \}$. The initial condition, boundary condition, and processor speed are given by 
		\begin{equation}
		\rho_0(x,z) = 1.5\left(\sin( 2\pi z) \right)^6 \,\chi_{[0,0.5]}(z),
		\quad \rho_{\rm{bc}}(x,t)=0, 
		\quad \alpha(x) = 1-0.4(\sin(\pi x))^2,
		\end{equation}
		respectively.   Both models are simulated up to a final time $t=0.5$.

		For this example, the Hamilton-Jacobi simulation is performed with a $1000 \times 1000$ mesh and a time step chosen according to \eqref{eq:SSPstep} in order to generate a highly resolved numerical solution of the macroscopic model.    For the microscopic model, we use $\imax=1000$ and $\kmax=200$ when $\eta=0.2$, $\imax=\kmax=500$ when $\eta=1$, and $\imax=200$ and $\kmax=1000$ when $\eta=5$.  These solutions to the microscopic model are then used to obtain the piecewise-constant function $r$ on the $1000\times 1000$ mesh from the Hamilton-Jacobi simulation.
		
		Numerical results for $\eta=0.2$, $\eta=1.0$, and $\eta=5.0$ are shown in  \cref{fig:ODEcomparetime02}, \cref{fig:ODEcomparetime1}, and \cref{fig:ODEcomparetime5}, respectively.  
		While the results demonstrate general qualitative agreement between the models, discrepancies develop over time, especially for smaller values of $\eta$; see \cref{fig:Test1Eta02LateError,fig:Test1Eta02LateLine}.    For the worst case scenario ($\eta=0.2$), we increase the size of the discrete model by a factor of $2.5$ ( giving $\imax = 2500$ and $\kmax=500$), at which point the discrepancy between models decreases noticeably; see  \cref{fig:ODEcompareFine}.

		\begin{figure}[h!]
			\begin{subfigure}{.32\linewidth}
				\includegraphics[width=\textwidth]{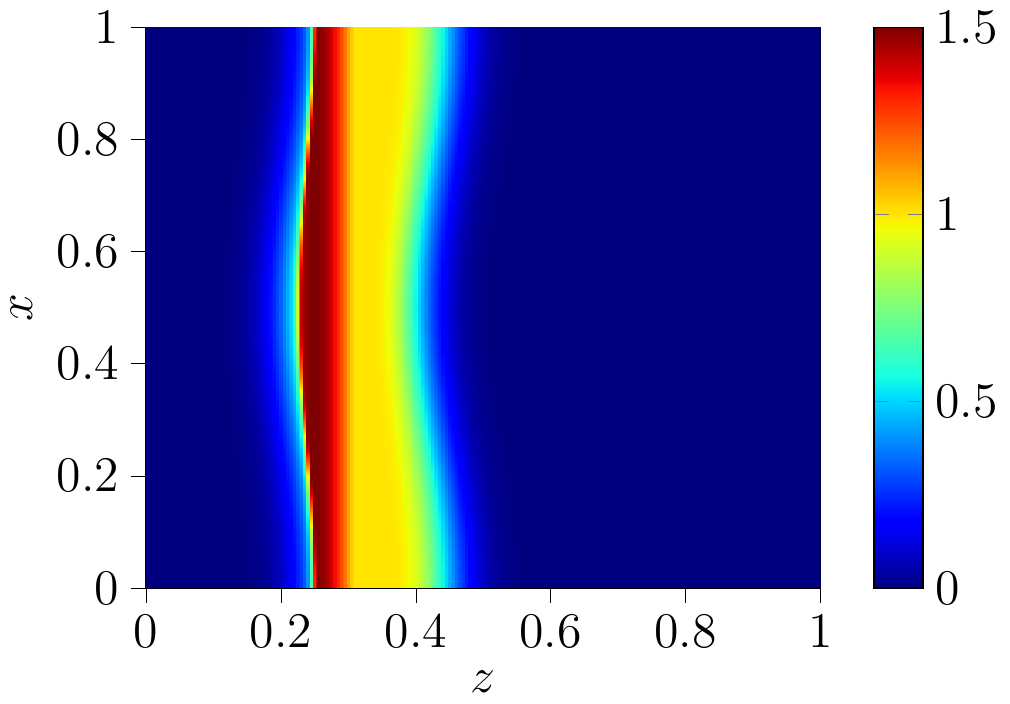}
				\caption{$r$ at $t=0.1$}	
				\label{fig:Test1Eta02EarlyODE}
			\end{subfigure}\hfill
			\begin{subfigure}{.32\linewidth}
				\includegraphics[width=\textwidth]{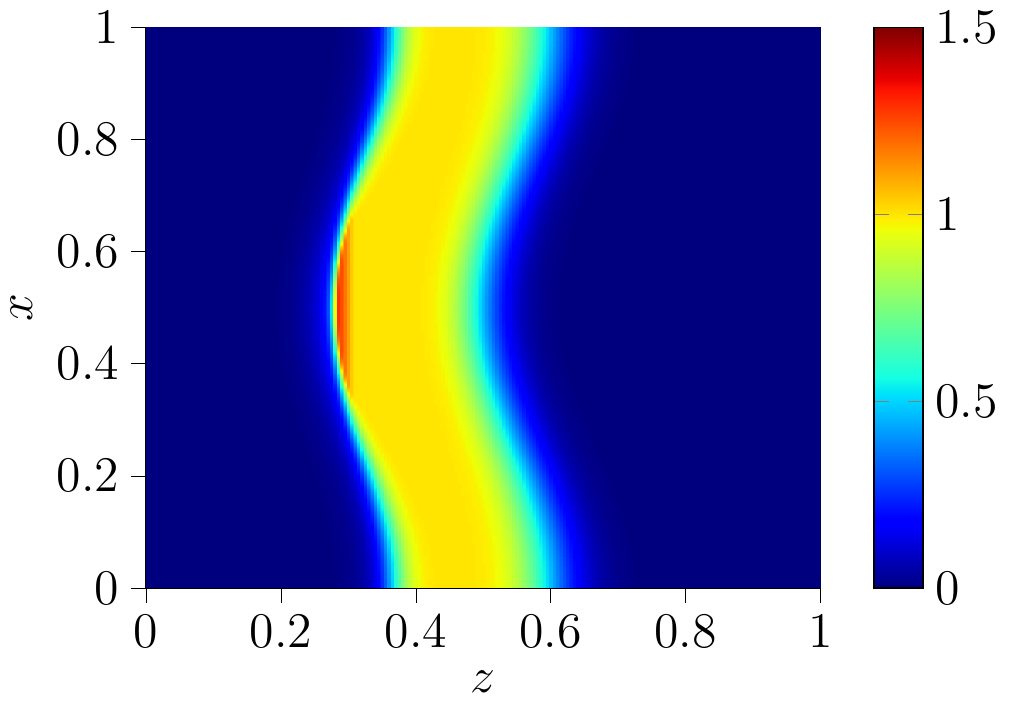}
				\caption{$r$ at $t=0.25$}	
				\label{fig:Test1Eta02MidODE}
			\end{subfigure}\hfill
			\begin{subfigure}{.32\linewidth}
				\includegraphics[width=\textwidth]{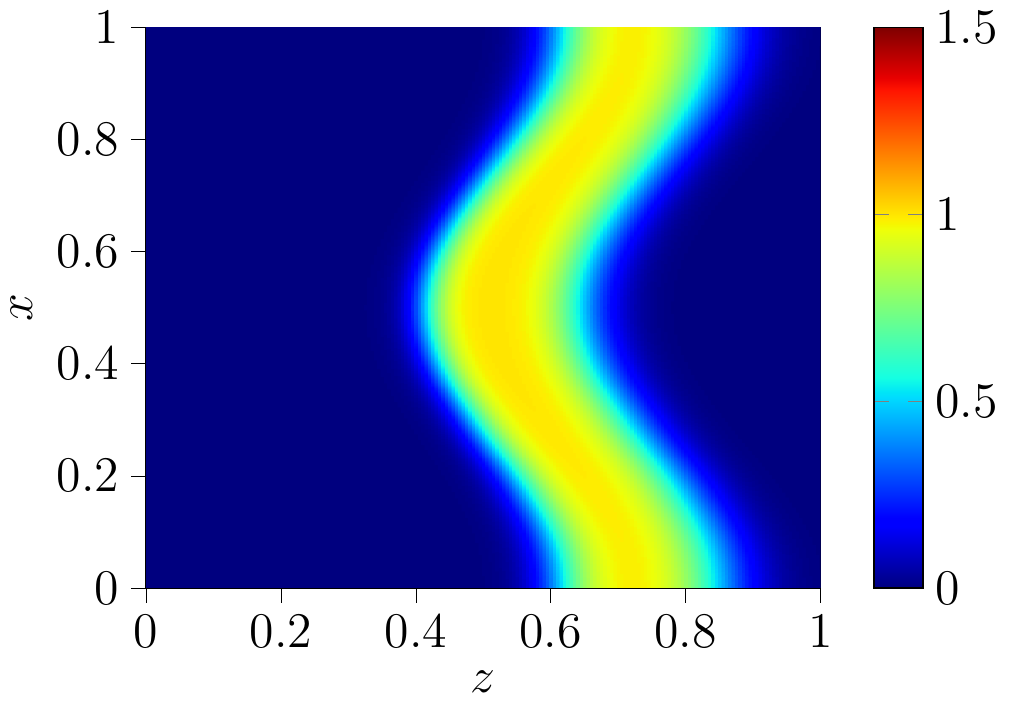}
				\caption{$r$ at $t=0.5$}	
				\label{fig:Test1Eta02LateODE}
			\end{subfigure}\\
			\begin{subfigure}{.32\linewidth}
				\includegraphics[width=\textwidth]{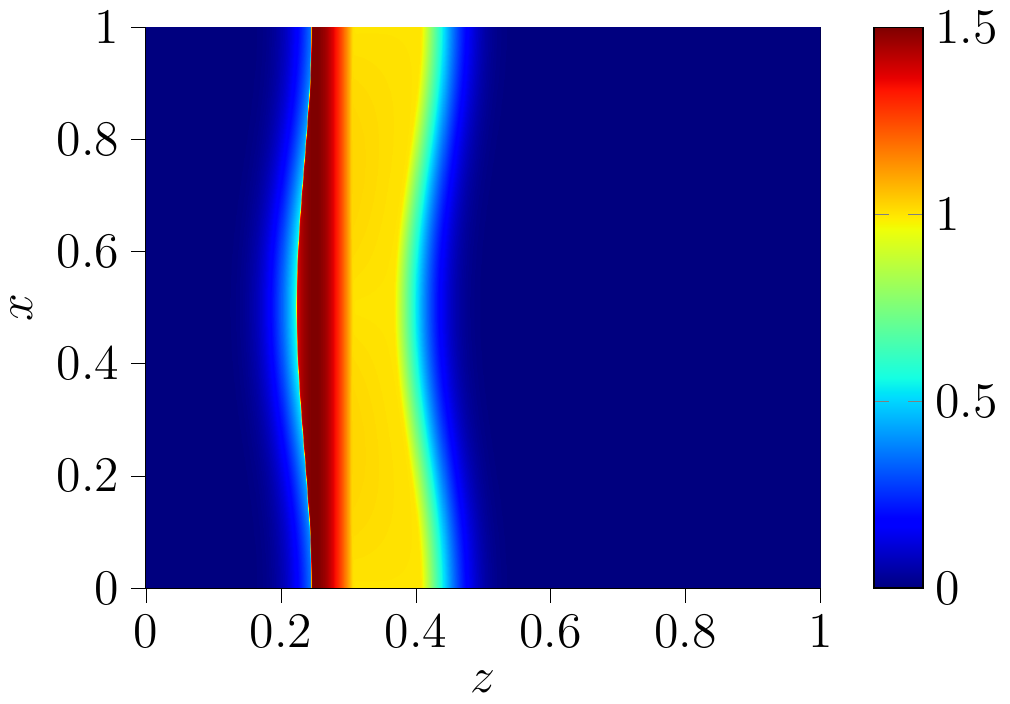}
				\caption{$\rho$ at $t=0.1$ }
				\label{fig:Test1Eta02EarlyPDE}
			\end{subfigure}\hfill
			\begin{subfigure}{.32\linewidth}
				\includegraphics[width=\textwidth]{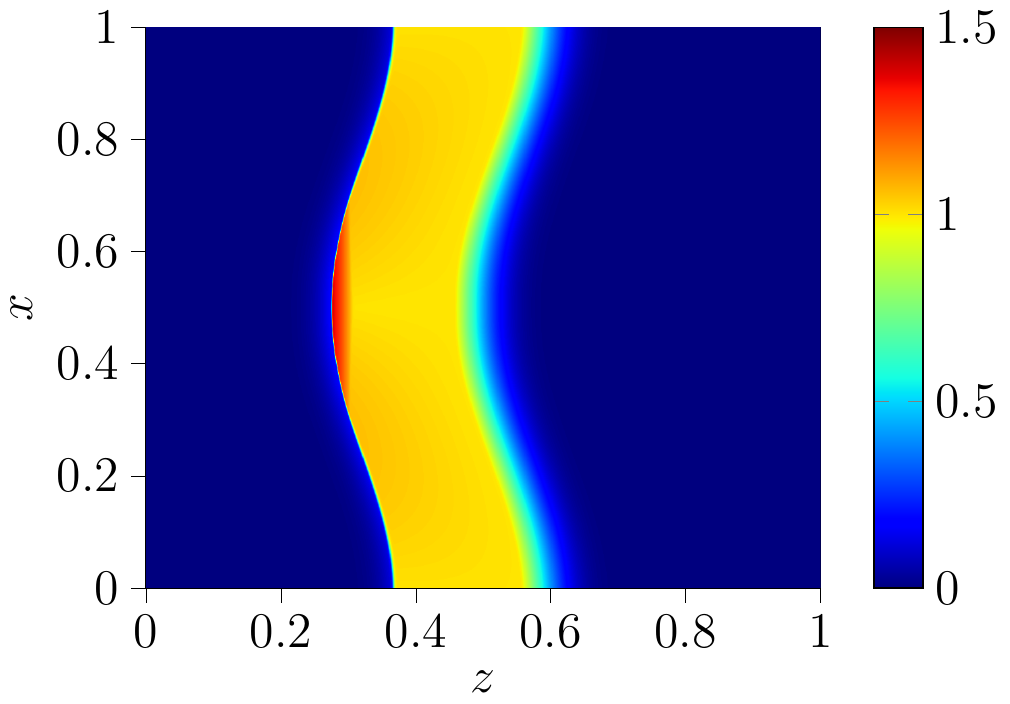}
				\caption{$\rho$ at $t=0.25$ }
				\label{fig:Test1Eta02MidPDE}
			\end{subfigure}\hfill
			\begin{subfigure}{.32\linewidth}
				\includegraphics[width=\textwidth]{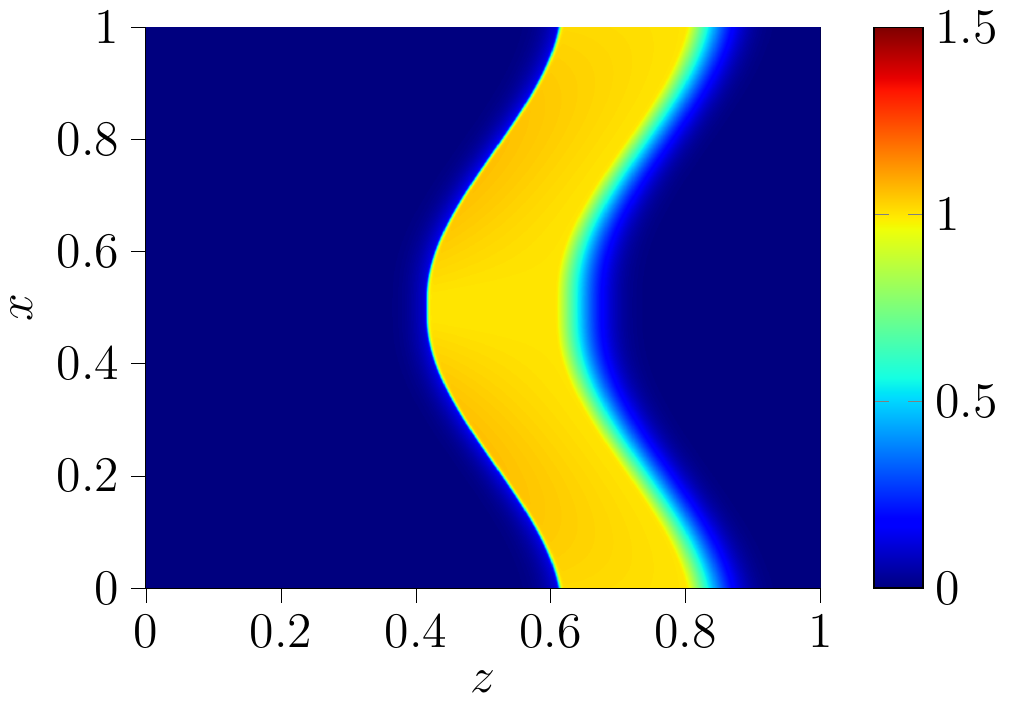}
				\caption{$\rho$ at $t=0.5$ }
				\label{fig:Test1Eta02LatePDE}
			\end{subfigure}\hfill
			\begin{subfigure}{.32\linewidth}
				\includegraphics[width=\textwidth]{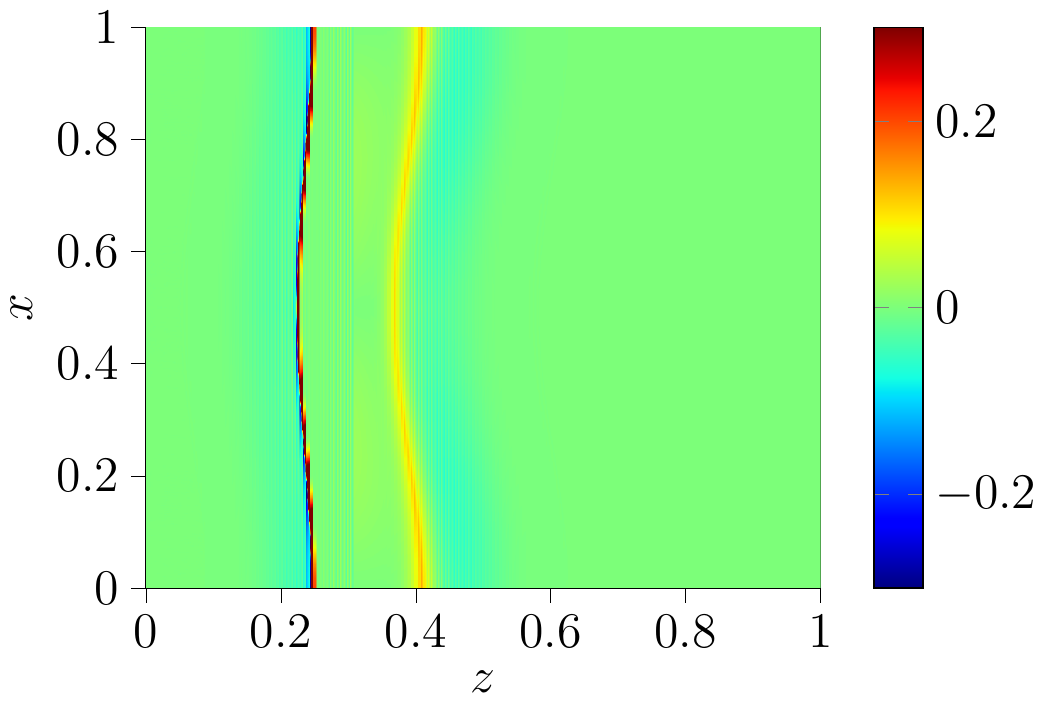}
				\caption{$\rho-r$ at $t=0.1$}
				\label{fig:Test1Eta02EarlyError}
			\end{subfigure}\hfill
			\begin{subfigure}{.32\linewidth}
				\includegraphics[width=\textwidth]{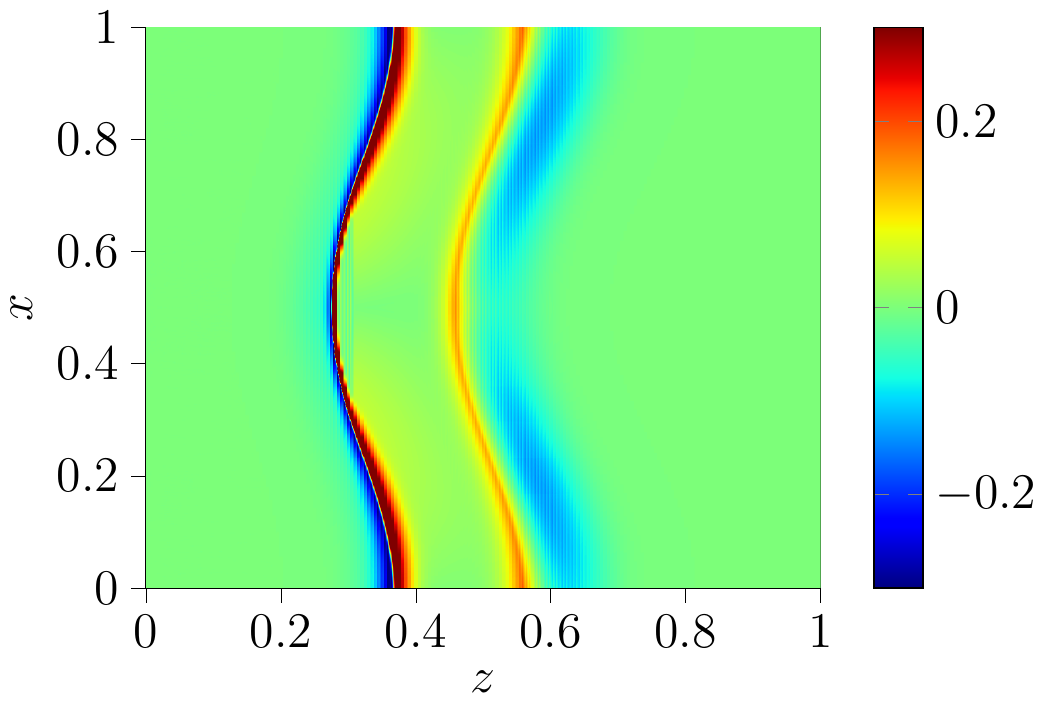}
				\caption{$\rho-r$ at $t=0.25$}
				\label{fig:Test1Eta02MidError}
			\end{subfigure}\hfill
			\begin{subfigure}{.32\linewidth}
				\includegraphics[width=\textwidth]{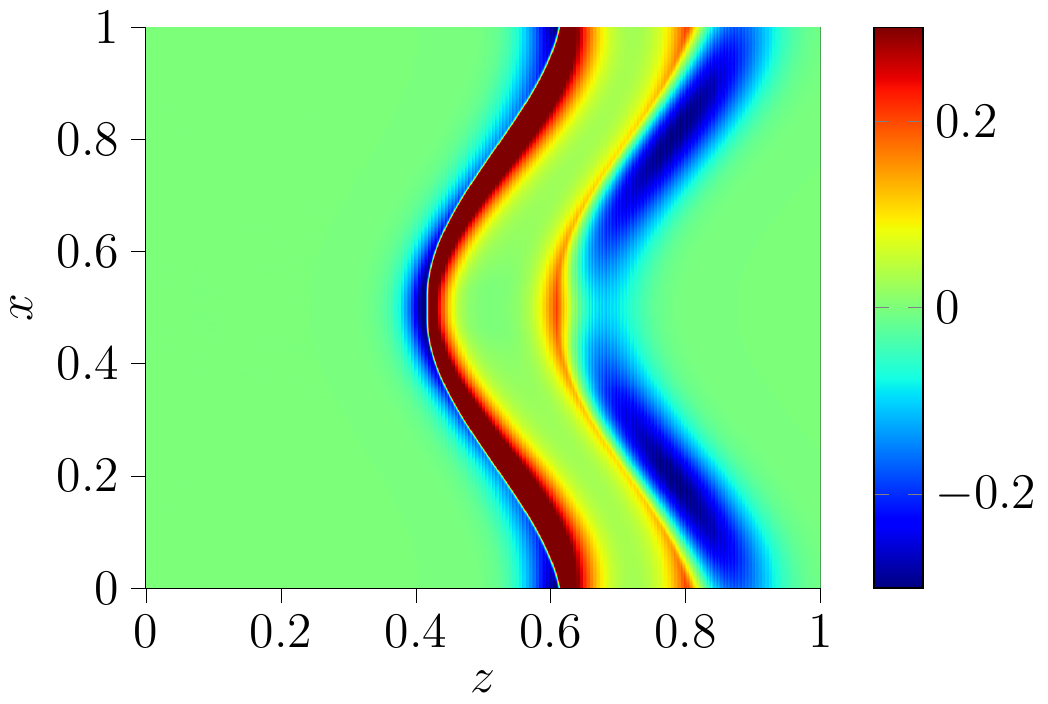}
				\caption{$\rho-r$ at $t=0.5$}
				\label{fig:Test1Eta02LateError}
			\end{subfigure}\\
			\begin{subfigure}{.3\linewidth}
				\includegraphics[width=\textwidth]{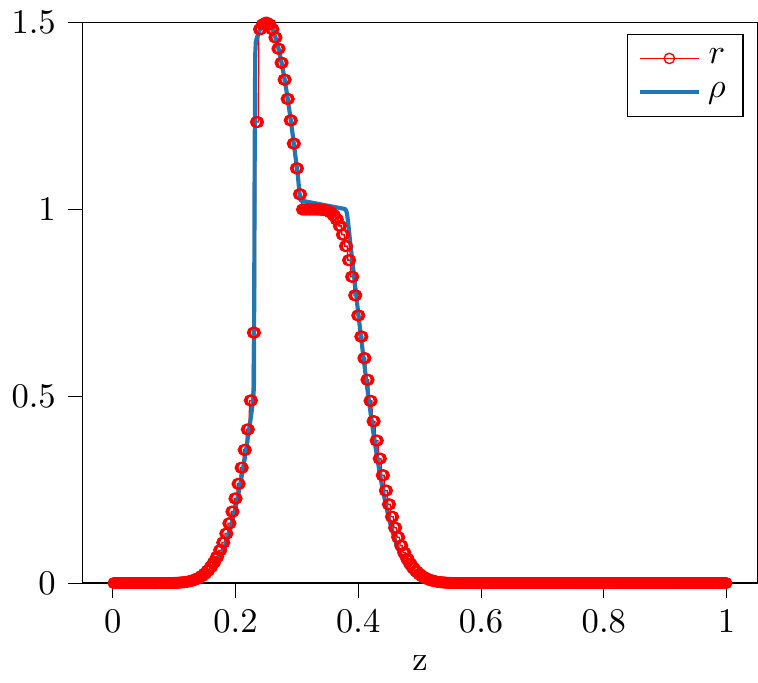}
				\caption{$\rho$,$r$ at $(x,t)=(0.3,0.1)$}
				\label{fig:Test1Eta02EarlyLine}
			\end{subfigure}\hfill
			\begin{subfigure}{.3\linewidth}
				\includegraphics[width=\textwidth]{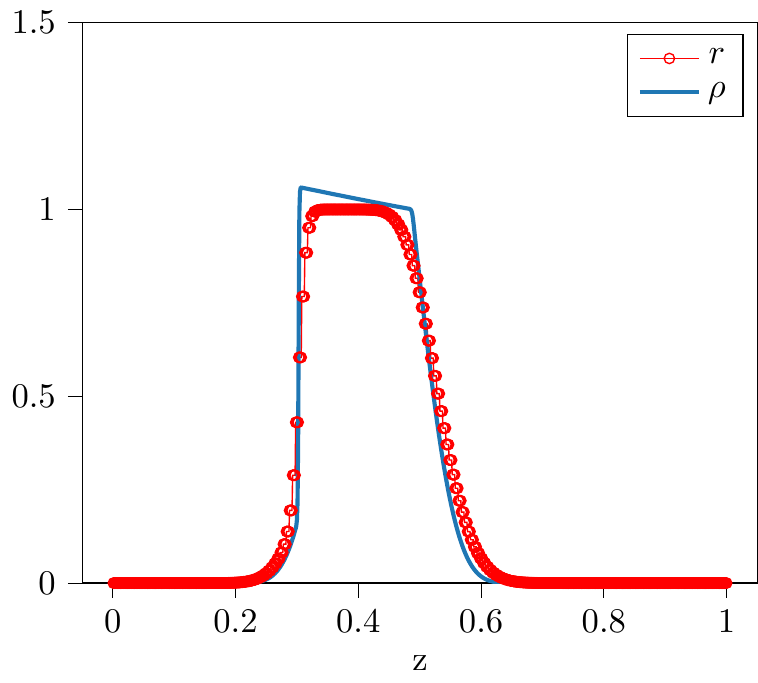}
				\caption{$\rho$,$r$ at $(x,t)=(0.3,0.25)$}
				\label{fig:Test1Eta02MidLine}
			\end{subfigure}\hfill
			\begin{subfigure}{.3\linewidth}
				\includegraphics[width=\textwidth]{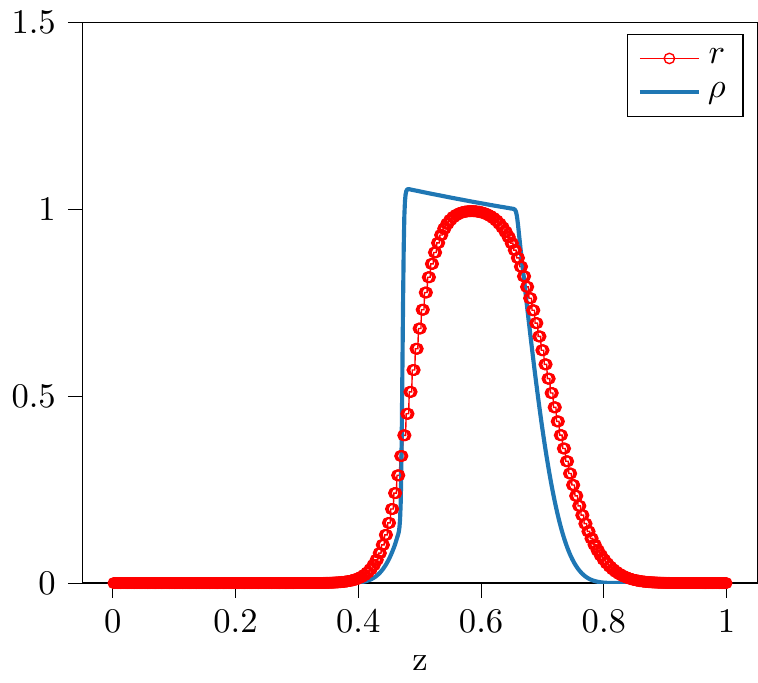}
				\caption{$\rho$,$r$ at $(x,t)=(0.3,0.5)$}
				\label{fig:Test1Eta02LateLine}
			\end{subfigure}
			\caption{Discrete solution $r$ and continuum solution $\rho$ when $\eta=0.2$.  From left to right, columns correspond to solutions
				at $t=0.1$, $t=0.25$, and $t=0.5$  Discrete solution is computed with $(\imax, \kmax) = (1000,200)$.
				Continuum solution is computed on a $10^3 \times 10^3$ mesh.  
			}
			\label{fig:ODEcomparetime02}
		\end{figure}

		\begin{figure}[h!]
			\begin{subfigure}{.32\linewidth}
				\includegraphics[width=\textwidth]{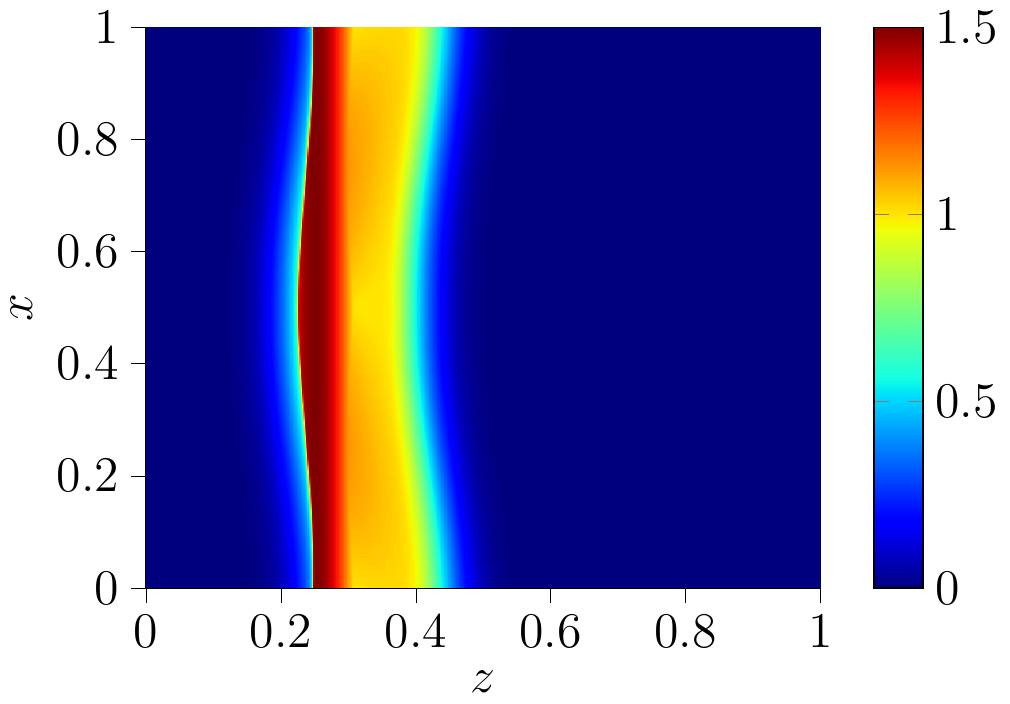}
				\caption{$r$ at $t=0.1$}	
				\label{fig:Test1Eta1EarlyODE}
			\end{subfigure}\hfill
			\begin{subfigure}{.32\linewidth}
				\includegraphics[width=\textwidth]{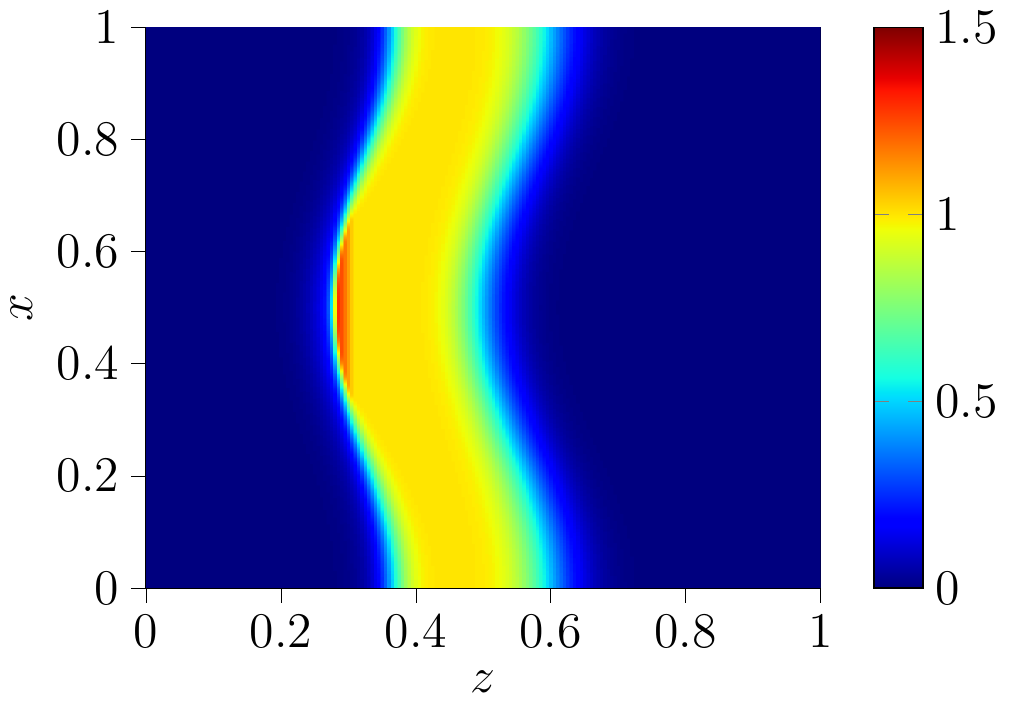}
				\caption{$r$ at $t=0.25$}	
				\label{fig:Test1Eta1MidODE}
			\end{subfigure}\hfill
			\begin{subfigure}{.32\linewidth}
				\includegraphics[width=\textwidth]{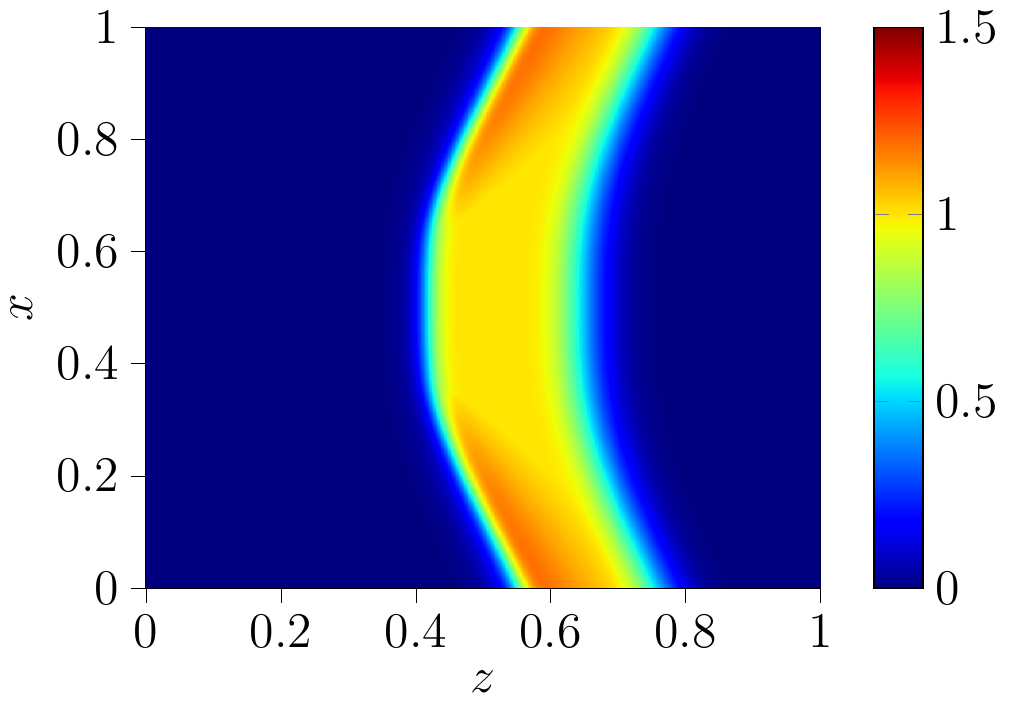}
				\caption{$r$ at $t=0.5$}	
				\label{fig:Test1Eta1LateODE}
			\end{subfigure}\\
			\begin{subfigure}{.32\linewidth}
				\includegraphics[width=\textwidth]{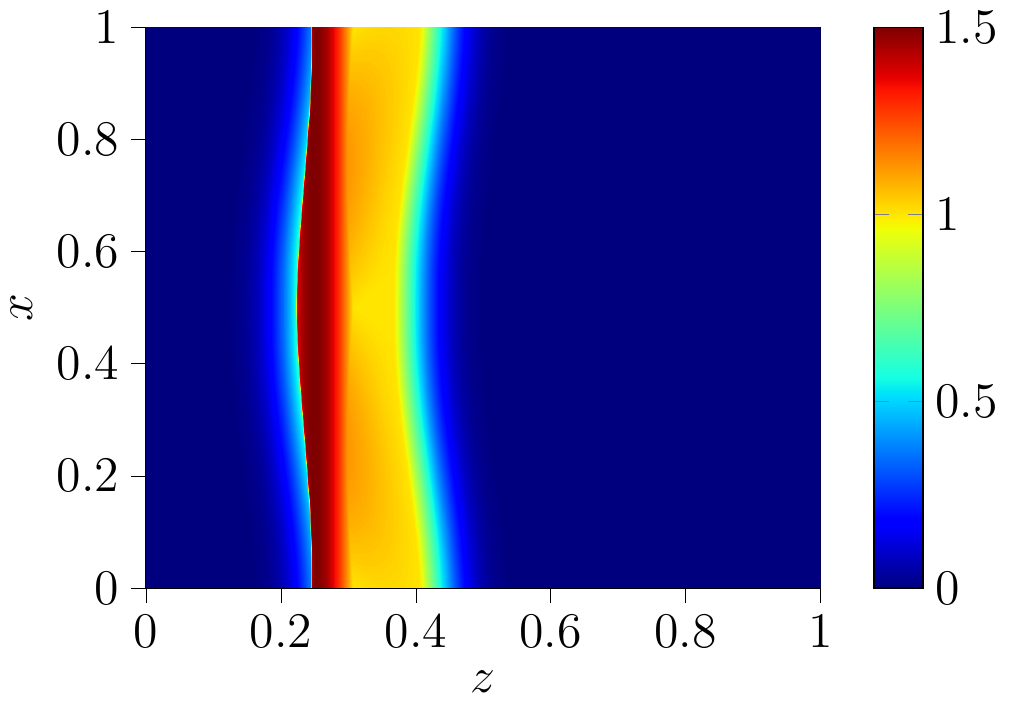}
				\caption{$\rho$ at $t=0.1$ }
				\label{fig:Test1Eta1EarlyPDE}
			\end{subfigure}\hfill
			\begin{subfigure}{.32\linewidth}
				\includegraphics[width=\textwidth]{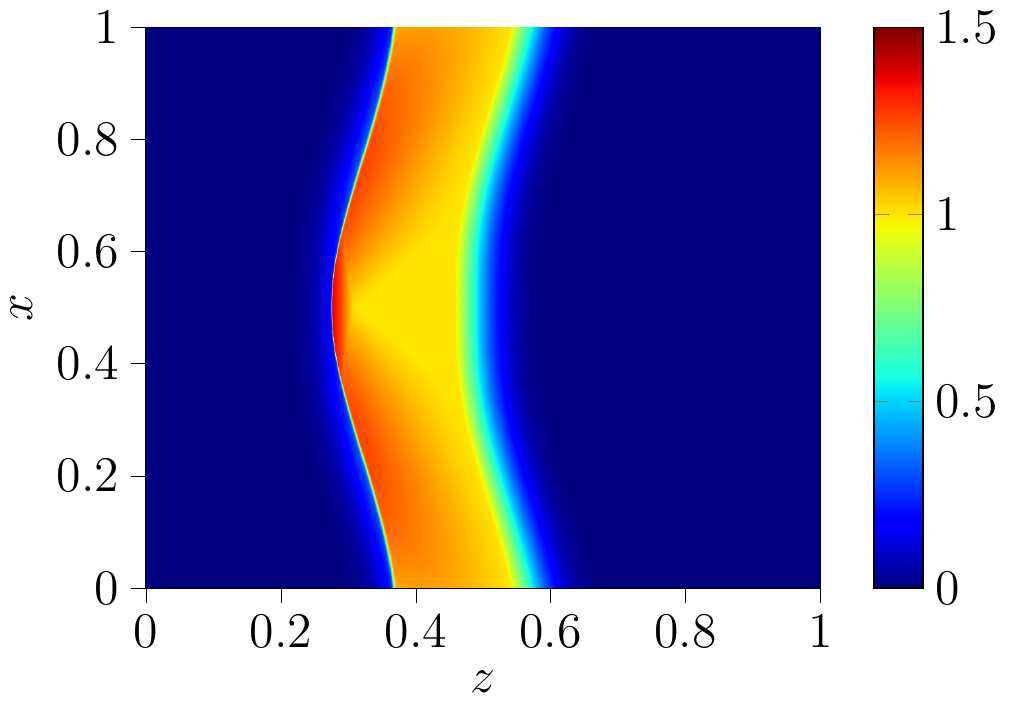}
				\caption{$\rho$ at $t=0.25$ }
				\label{fig:Test1Eta1MidPDE}
			\end{subfigure}\hfill
			\begin{subfigure}{.32\linewidth}
				\includegraphics[width=\textwidth]{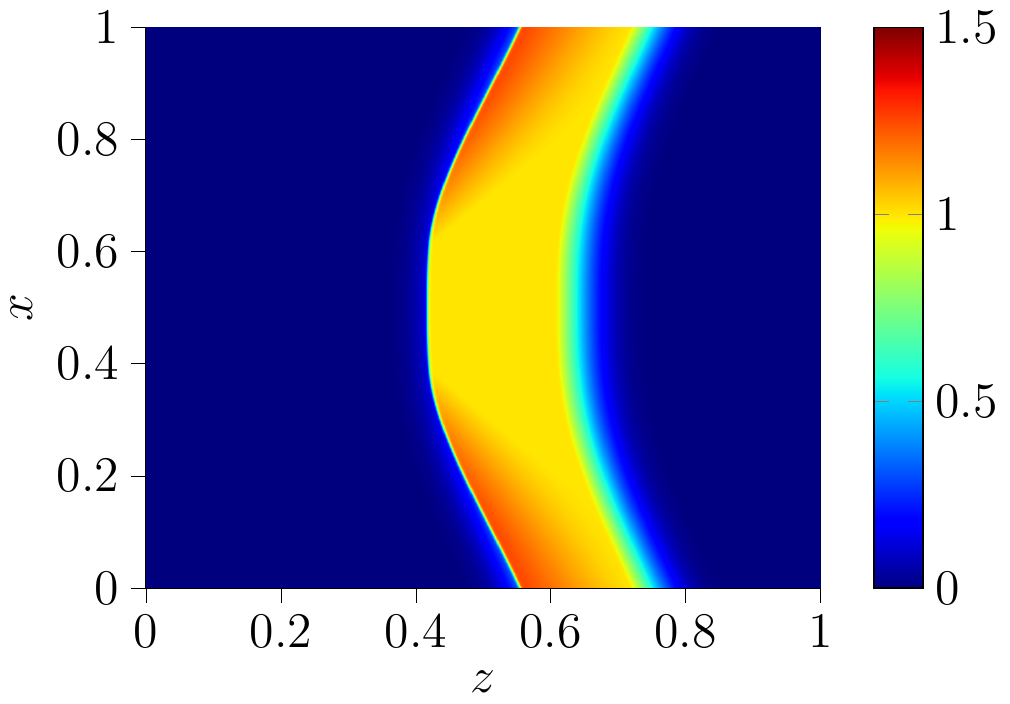}
				\caption{$\rho$ at $t=0.5$ }
				\label{fig:Test1Eta1LatePDE}
			\end{subfigure}\hfill
			\begin{subfigure}{.32\linewidth}
				\includegraphics[width=\textwidth]{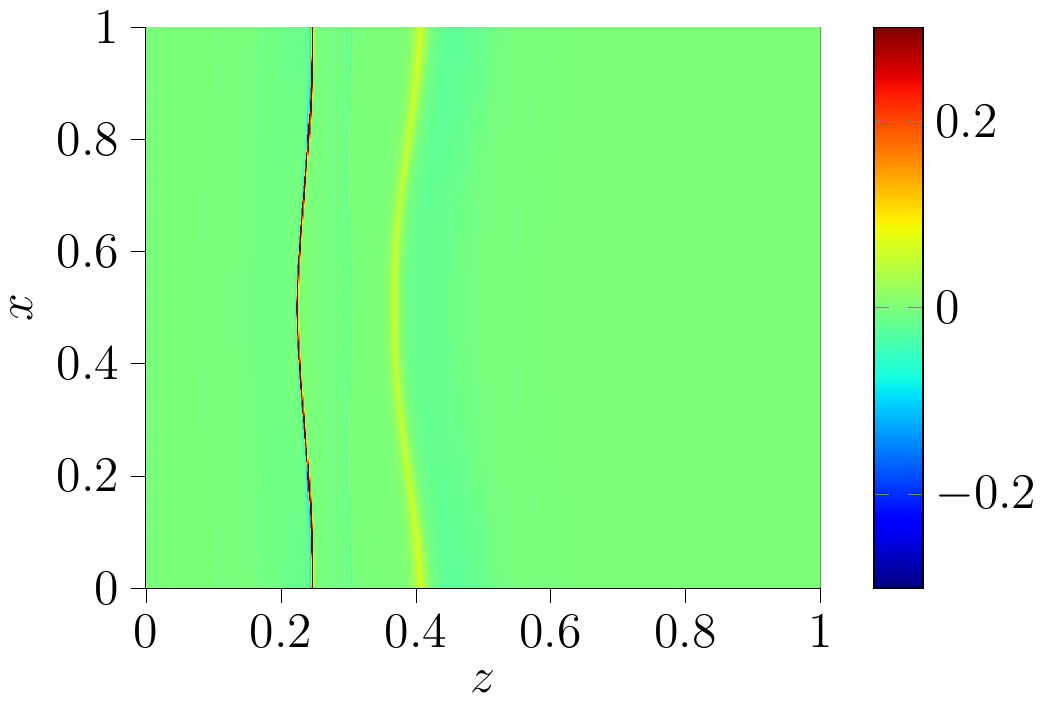}
				\caption{$\rho-r$ at $t=0.1$}
				\label{fig:Test1Eta1EarlyError}
			\end{subfigure}\hfill
			\begin{subfigure}{.32\linewidth}
				\includegraphics[width=\textwidth]{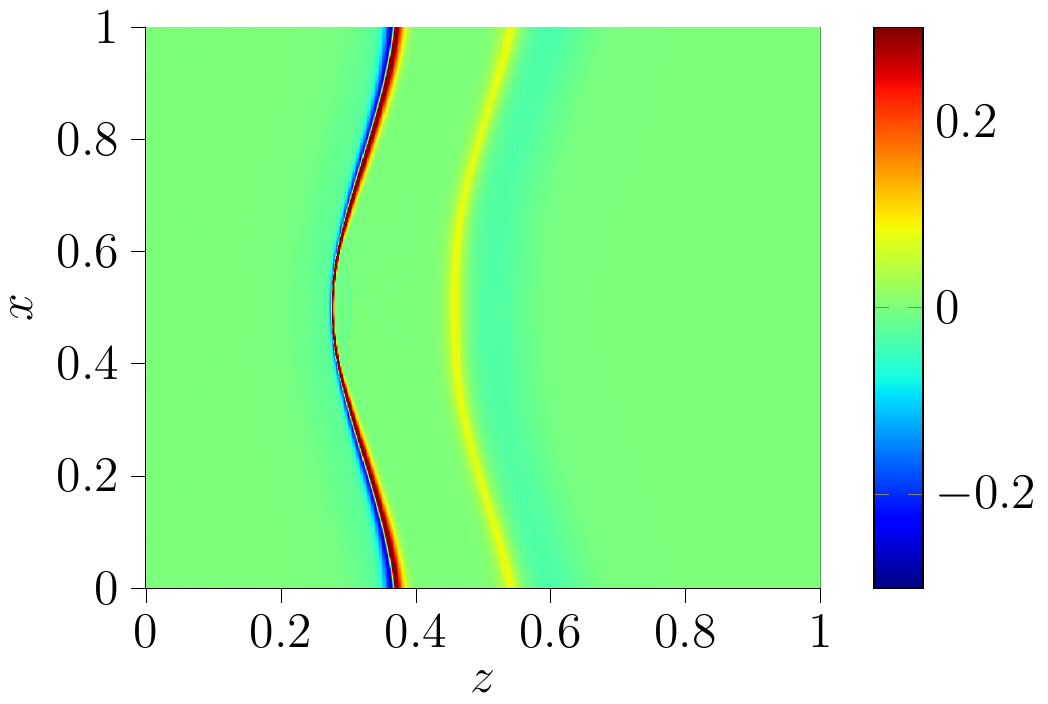}
				\caption{$\rho-r$ at $t=0.25$}
				\label{fig:Test1Eta1MidError}
			\end{subfigure}\hfill
			\begin{subfigure}{.32\linewidth}
				\includegraphics[width=\textwidth]{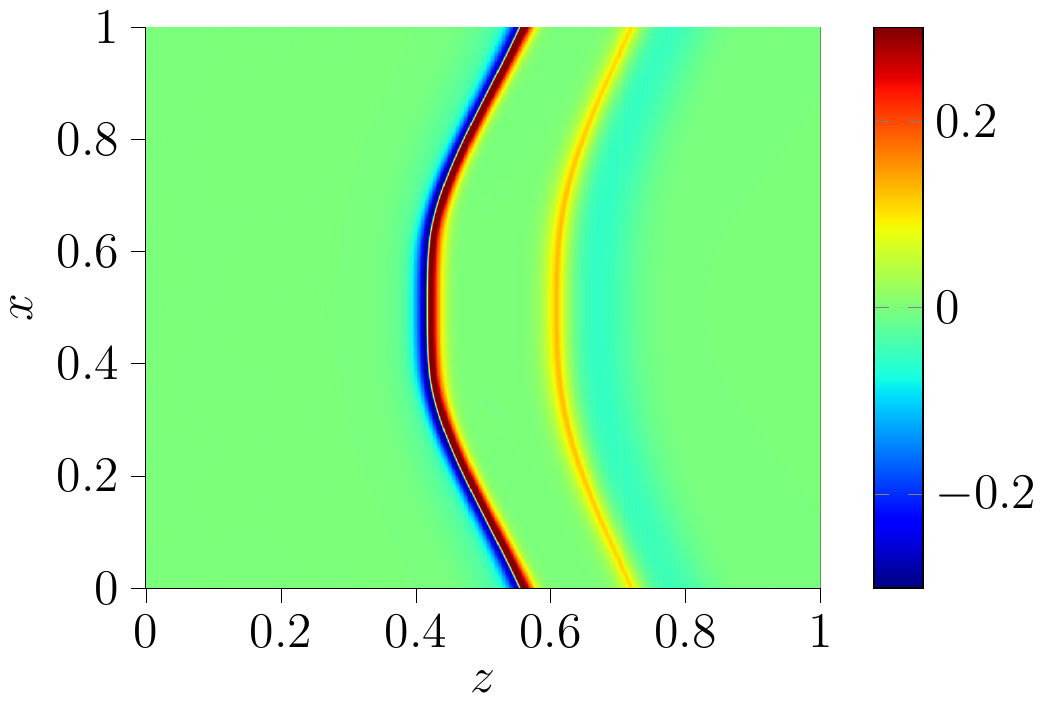}
				\caption{$\rho-r$ at $t=0.5$}
				\label{fig:Test1Eta1LateError}
			\end{subfigure}\\
			\begin{subfigure}{.3\linewidth}
				\includegraphics[width=\textwidth]{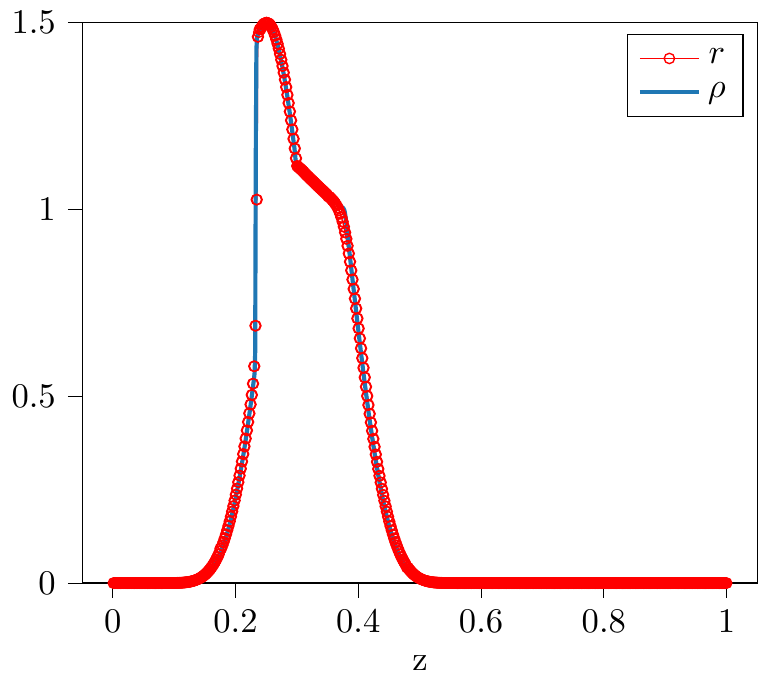}
				\caption{$\rho$,$r$ at $(x,t)=(0.3,0.1)$}
				\label{fig:Test1Eta1EarlyLine}
			\end{subfigure}\hfill
			\begin{subfigure}{.3\linewidth}
				\includegraphics[width=\textwidth]{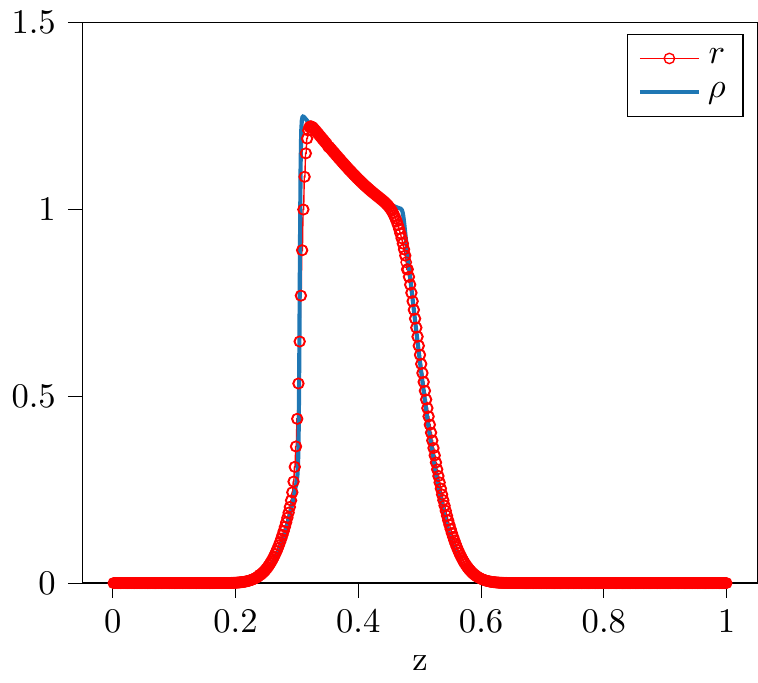}
				\caption{$\rho$,$r$ at $(x,t)=(0.3,0.25)$}
				\label{fig:Test1Eta1MidLine}
			\end{subfigure}\hfill
			\begin{subfigure}{.3\linewidth}
				\includegraphics[width=\textwidth]{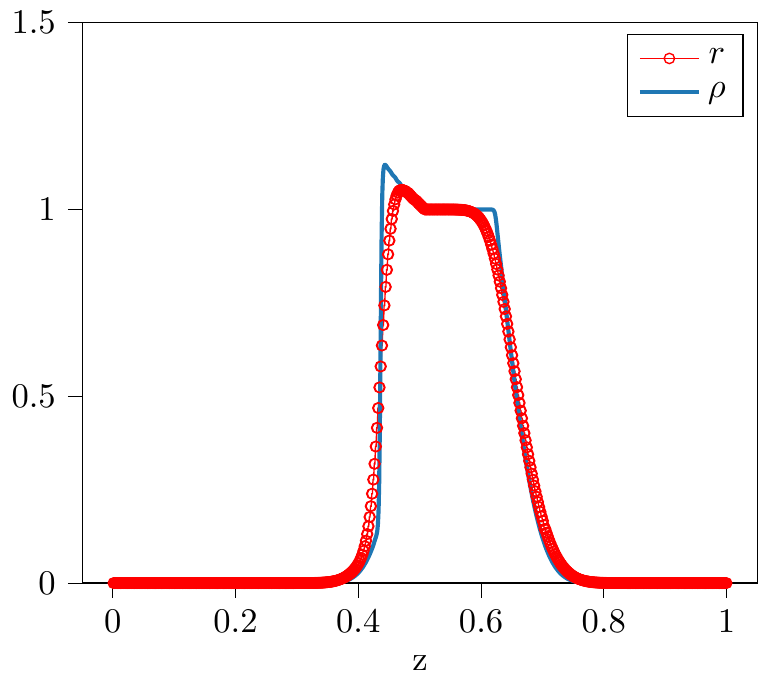}
				\caption{$\rho$,$r$ at $(x,t)=(0.3,0.5)$}
				\label{fig:Test1Eta1LateLine}
			\end{subfigure}
			\caption{Discrete $r$ and continuum $\rho$ solutions when $\eta=1$ case.  From left to right, column correspond to solutions
				at $t=0.1$, $t=0.25$, and $t=0.5$  Discrete solution is computed with $(\imax, \kmax) = (500,500)$.
				Continuum solution is computed on a $10^3 \times 10^3$ mesh.  
			}
			\label{fig:ODEcomparetime1}
		\end{figure}
		\begin{figure}[h!]
			\centering
			\begin{subfigure}{.32\linewidth}
				\includegraphics[width=\textwidth]{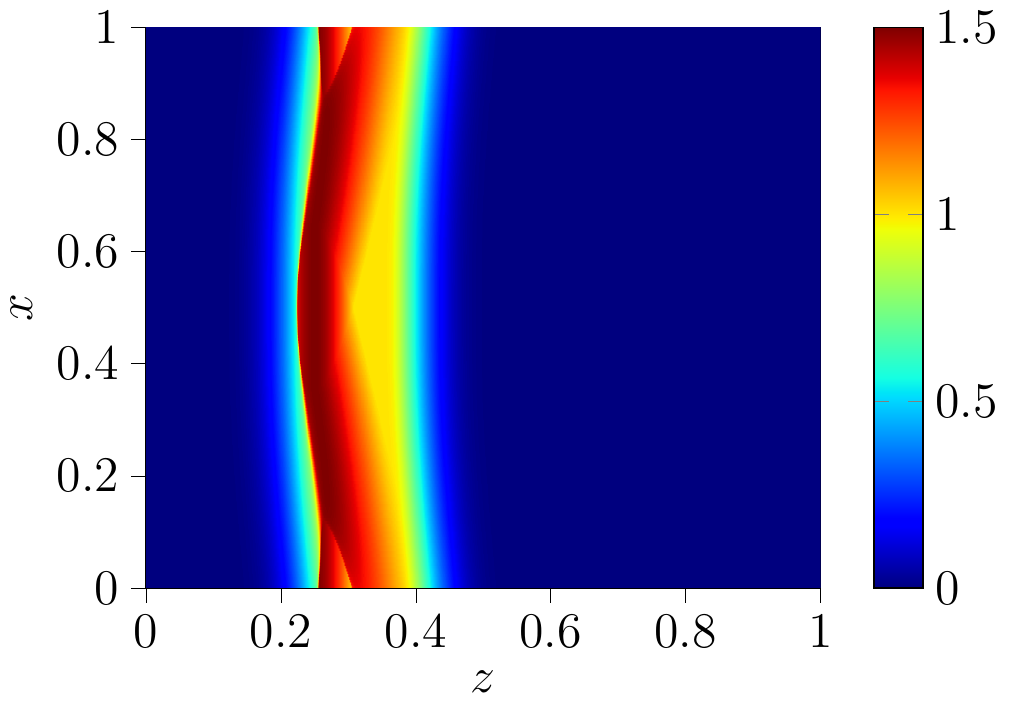}
				\caption{$r$ at $t=0.1$}
				\label{fig:Test1Eta5EarlyODE}
			\end{subfigure}\hfill
			\begin{subfigure}{.32\linewidth}
				\includegraphics[width=\textwidth]{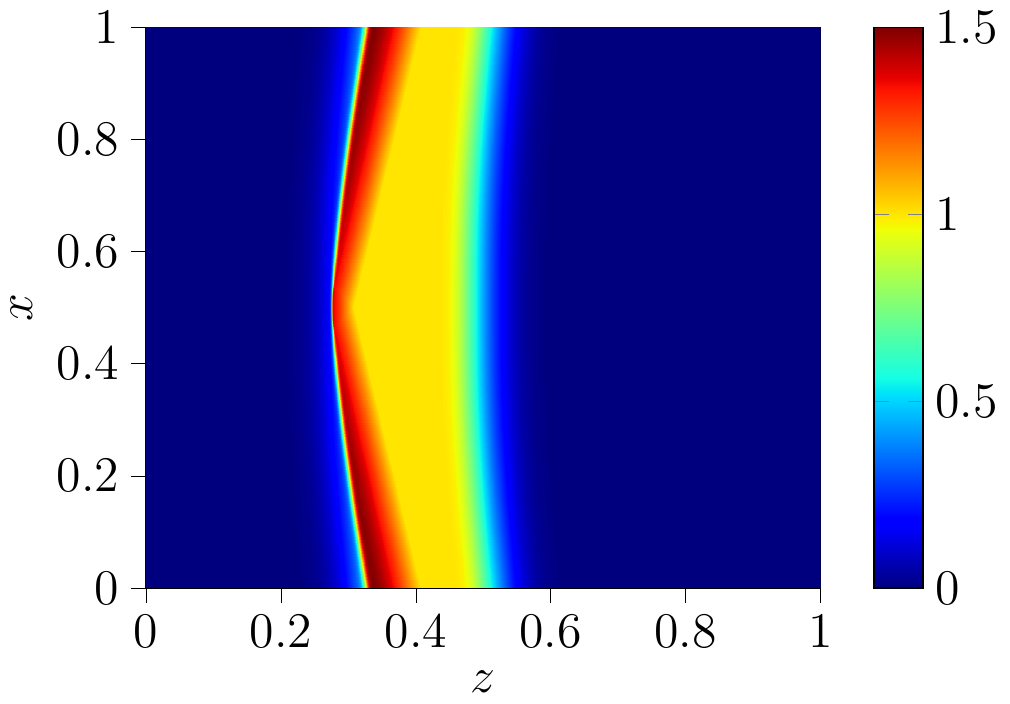}
				\caption{$r$ at $t=0.25$ }	
				\label{fig:Test1Eta5MidODE}
			\end{subfigure}\hfill
			\begin{subfigure}{.32\linewidth}
				\includegraphics[width=\textwidth]{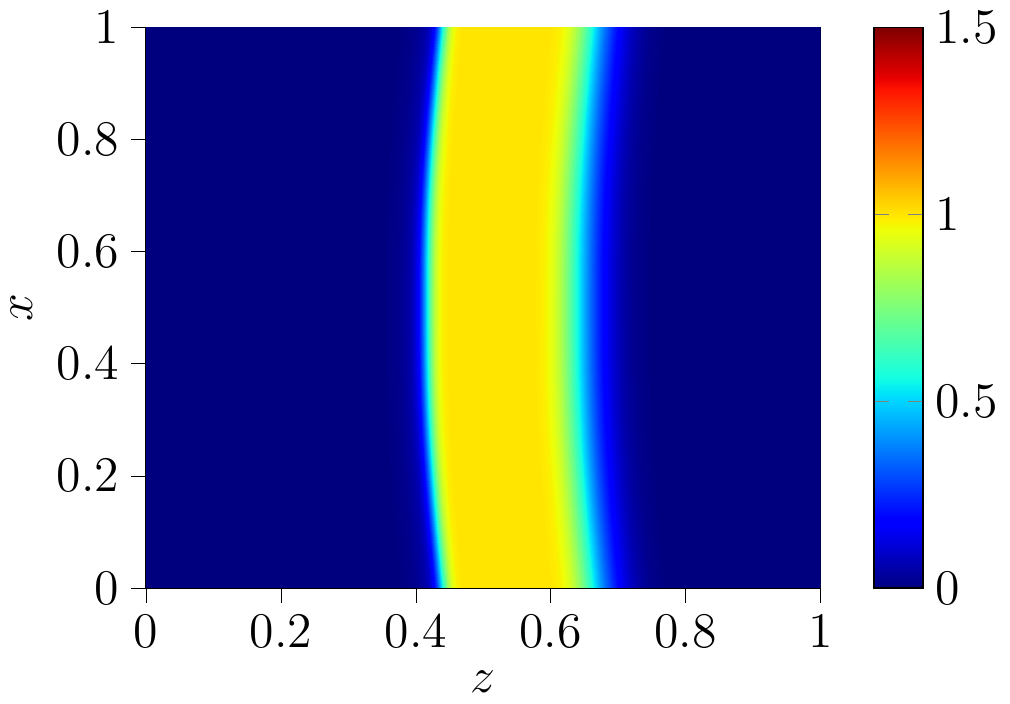}
				\caption{$r$ at $t=0.5$}	
				\label{fig:Test1Eta5LateODE}
			\end{subfigure}\hfill
			\begin{subfigure}{.32\linewidth}
				\includegraphics[width=\textwidth]{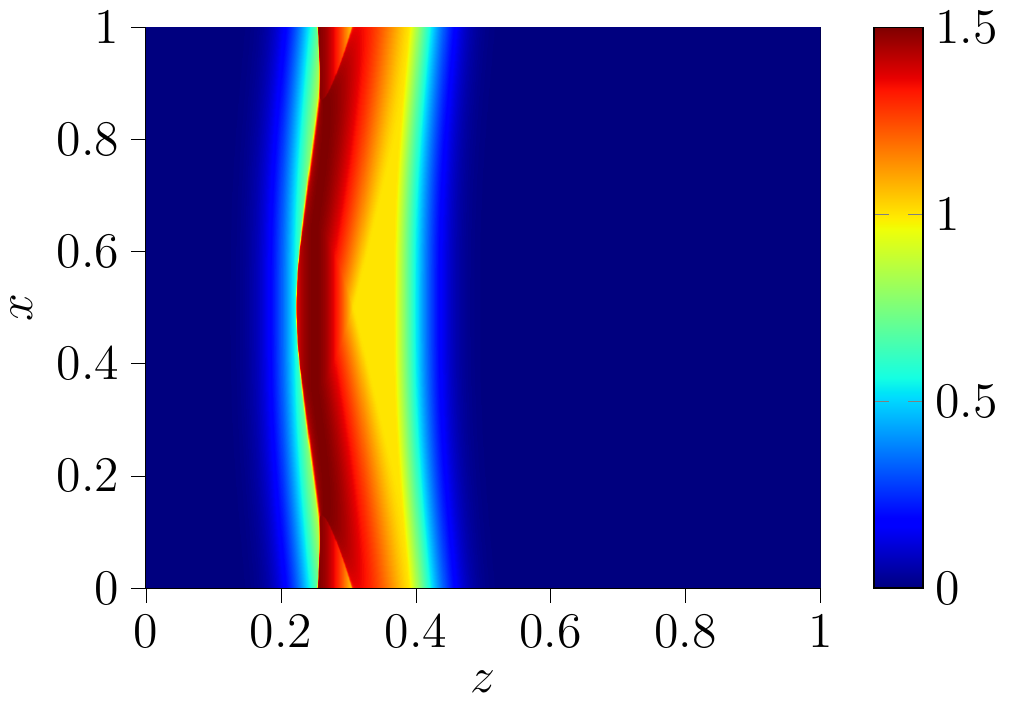}
				\caption{$\rho$ at $t=0.1$}
				\label{fig:Test1Eta5EarlyPDE}
			\end{subfigure}\hfill
			\begin{subfigure}{.32\linewidth}
				\includegraphics[width=\textwidth]{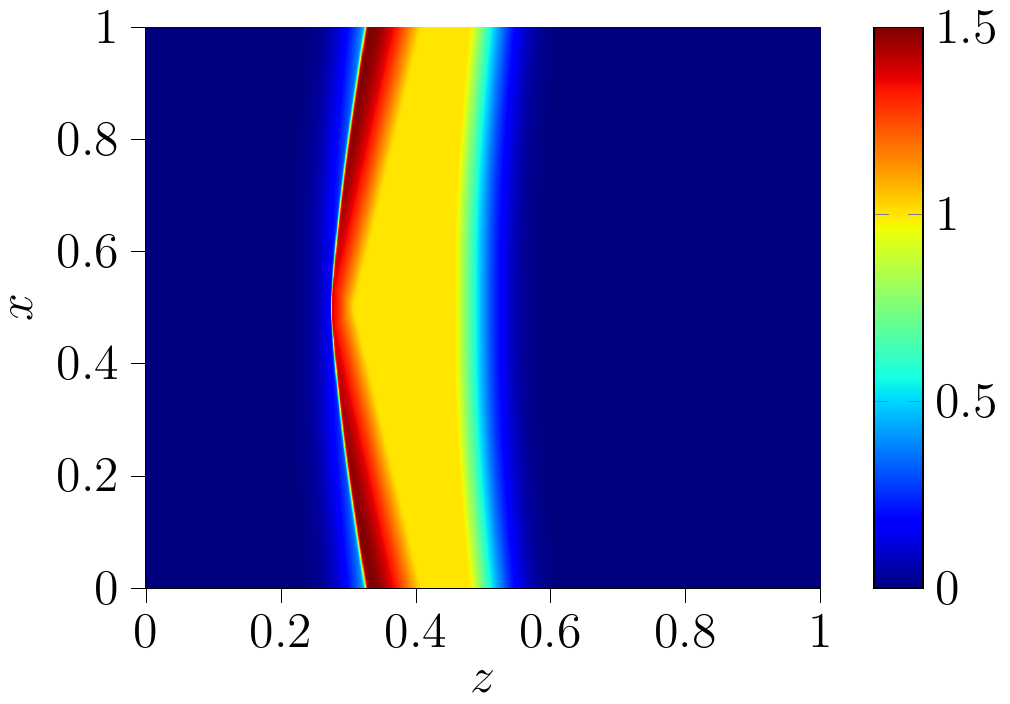}
				\caption{$\rho$ at $t=0.25$ }
				\label{fig:Test1Eta5MidPDE}
			\end{subfigure}\hfill
			\begin{subfigure}{.32\linewidth}
				\includegraphics[width=\textwidth]{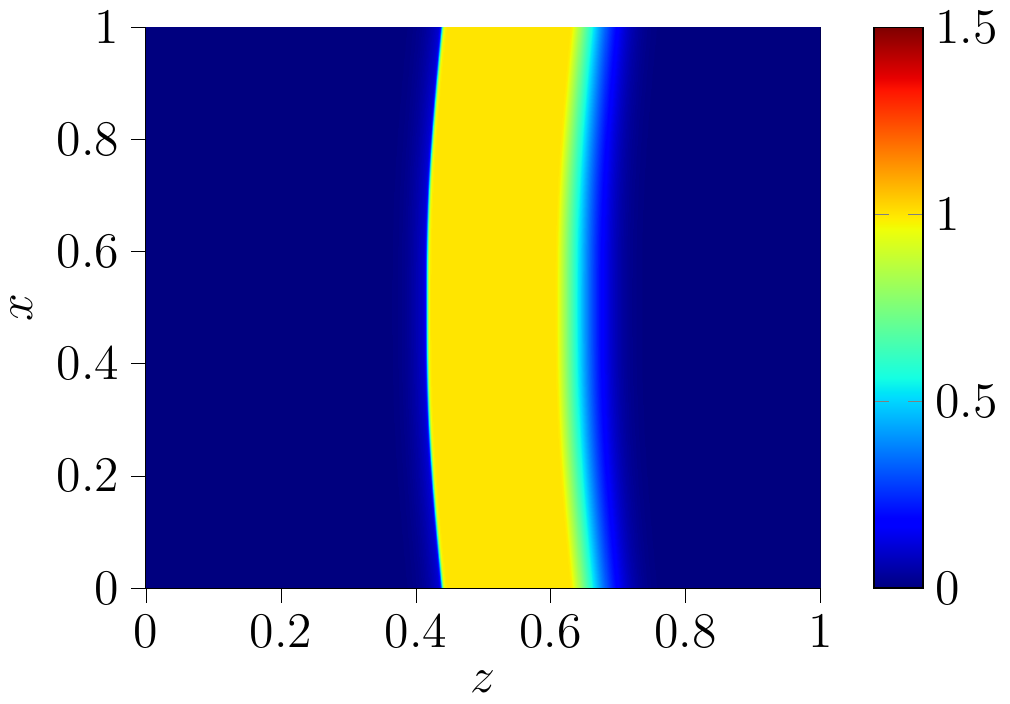}
				\caption{$\rho$ at $t=0.5$ }
				\label{fig:Test1Eta5LatePDE}
			\end{subfigure}\hfill
			\begin{subfigure}{.32\linewidth}
				\includegraphics[width=\textwidth]{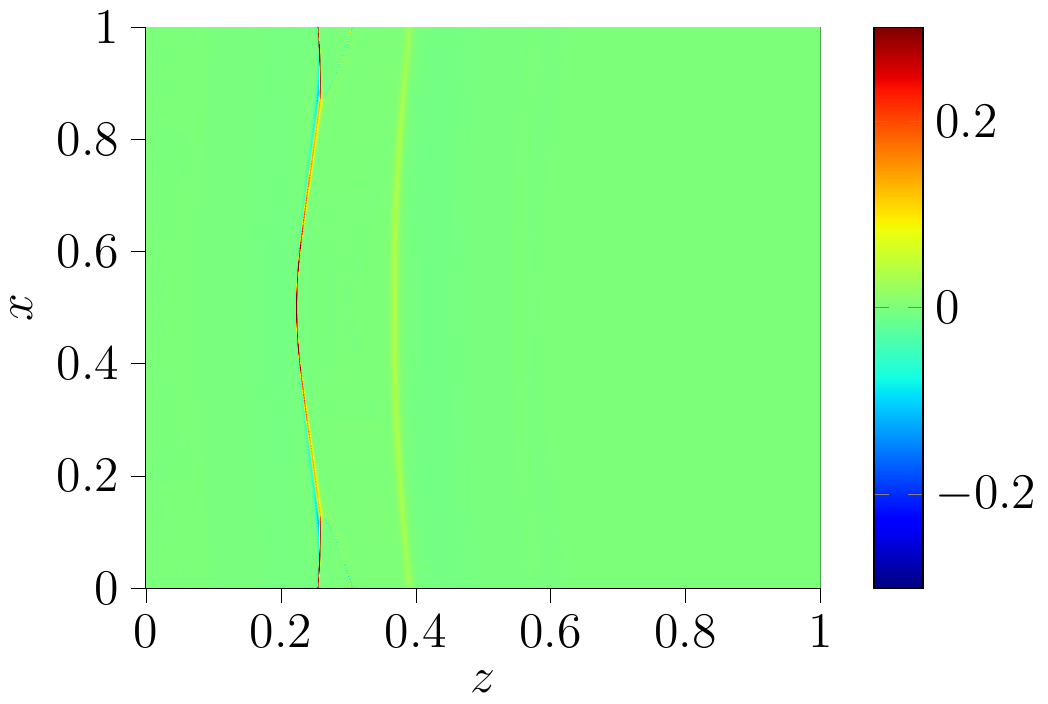}
				\caption{$\rho-r$ at $t=0.1$}
				\label{fig:Test1Eta5EarlyError}
			\end{subfigure}\hfill
			\begin{subfigure}{.32\linewidth}
				\includegraphics[width=\textwidth]{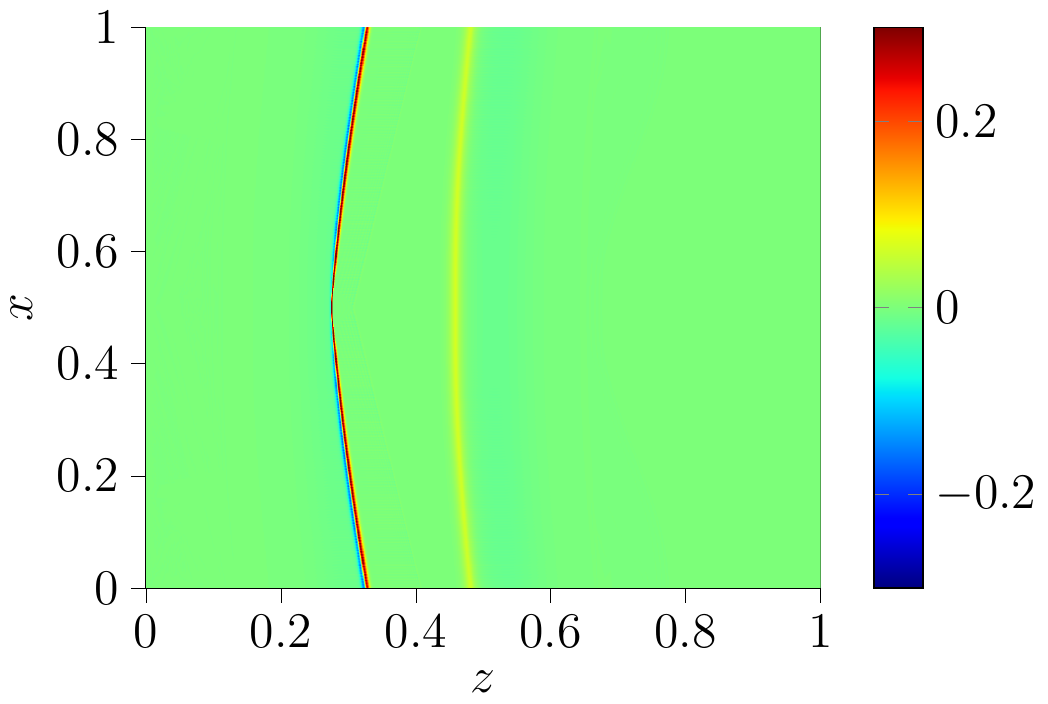}
				\caption{$\rho-r$ at $t=0.25$}
				\label{fig:Test1Eta5MidError}
			\end{subfigure}\hfill
			\begin{subfigure}{.32\linewidth}
				\includegraphics[width=\textwidth]{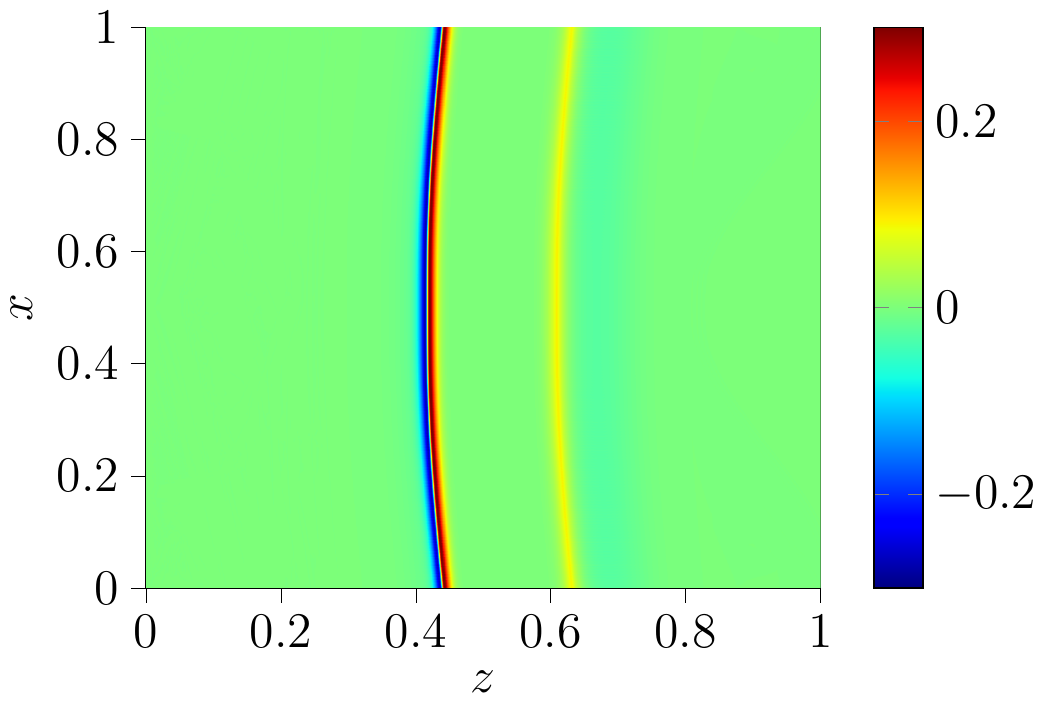}
				\caption{$(\rho-r)$ at $t=0.50$}
				\label{fig:Test1Eta5LateError}
			\end{subfigure}\hfill
			\begin{subfigure}{.3\linewidth}
				\includegraphics[width=\textwidth]{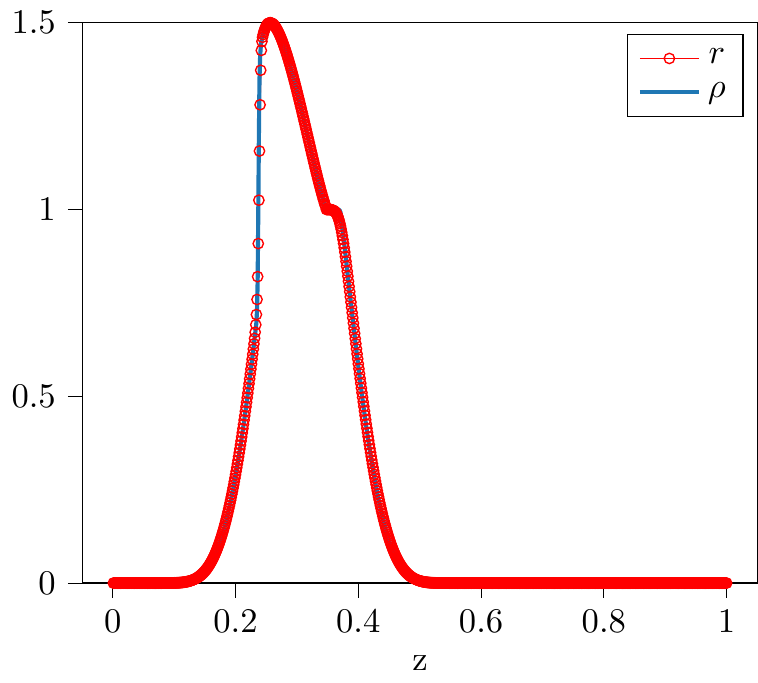}
				\caption{$\rho$,$r$ at $(x,t)=(0.3,0.1)$}
				\label{fig:Test1Eta5EarlyLine}
			\end{subfigure}\hfill
			\begin{subfigure}{.3\linewidth}
				\includegraphics[width=\textwidth]{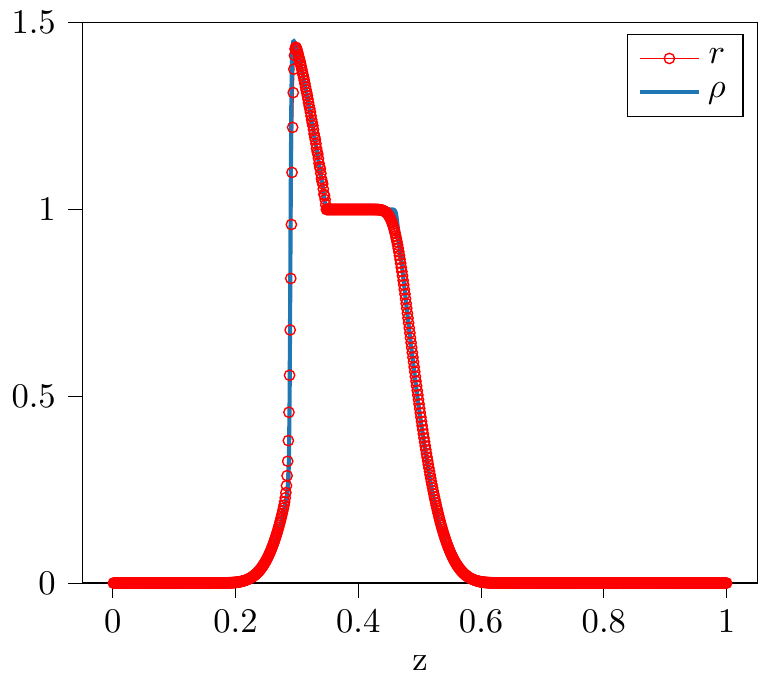}
				\caption{$\rho$,$r$ at $(x,t)=(0.3,0.25)$}
				\label{fig:Test1Eta5MidLine}
			\end{subfigure}\hfill
			\begin{subfigure}{.3\linewidth}
				\includegraphics[width=\textwidth]{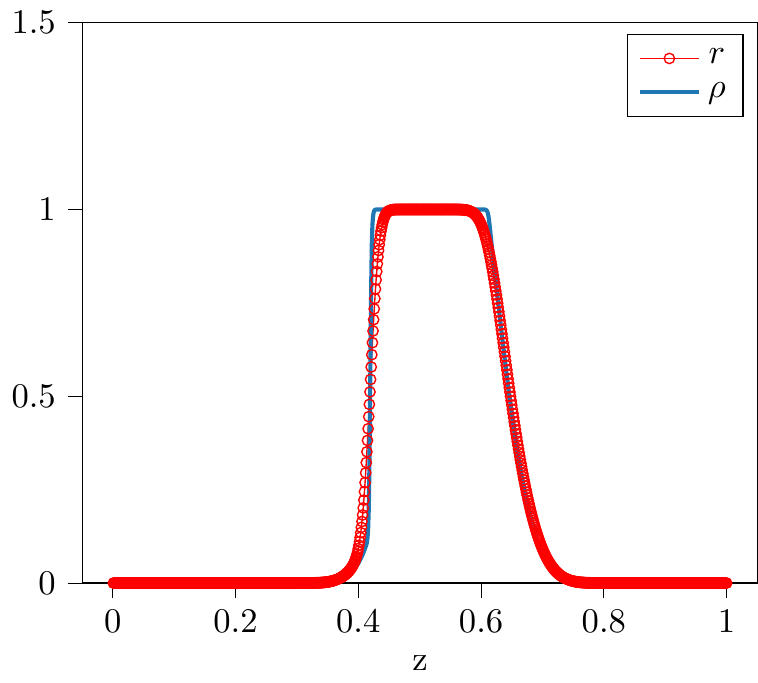}
				\caption{$\rho$,$r$ at $(x,t)=(0.3,0.5)$}
				\label{fig:Test1Eta5LateLine}
			\end{subfigure}
			\caption{Discrete $r$ and continuum $\rho$ solutions when $\eta=5$ case.  From left to right, column correspond to solutions
				at $t=0.1$, $t=0.25$, and $t=0.5$  Discrete solution is computed with $(\imax, \kmax) = (200,1000)$.
				Continuum solution is computed on a $10^3 \times 10^3$ mesh.  
			}
			\label{fig:ODEcomparetime5}
		\end{figure}
		
		\begin{figure}[h!]
			\begin{subfigure}{.3\linewidth}
				\includegraphics[width=\textwidth]{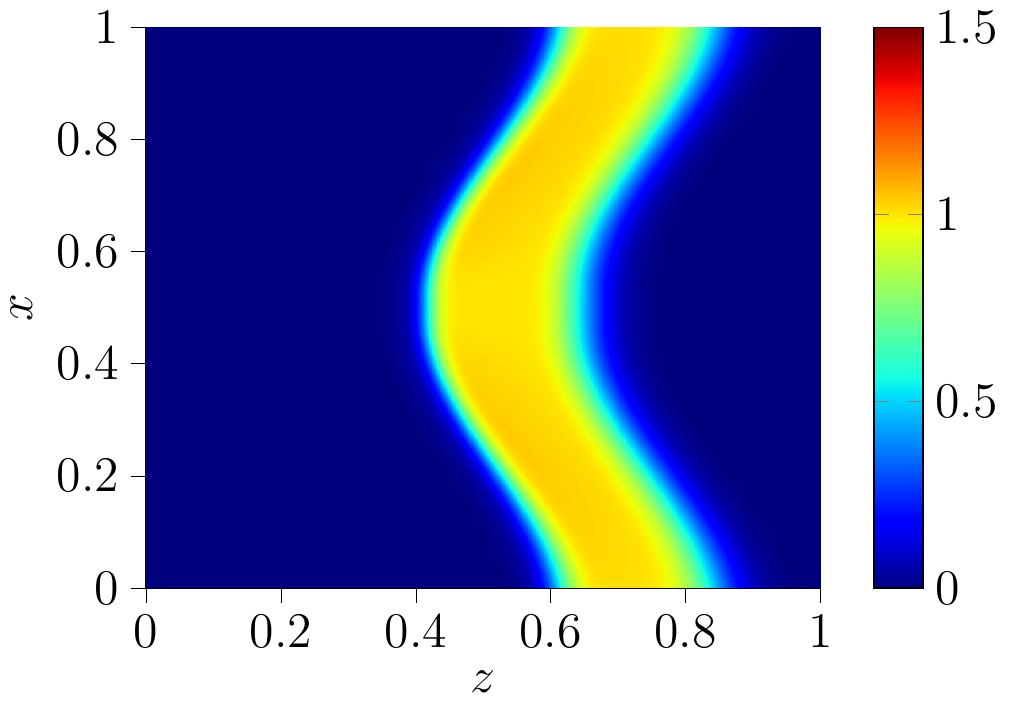}
				\caption{$r$ at $t=0.5$}
			\end{subfigure}\hfill
			\begin{subfigure}{.3\linewidth}
				\includegraphics[width=\textwidth]{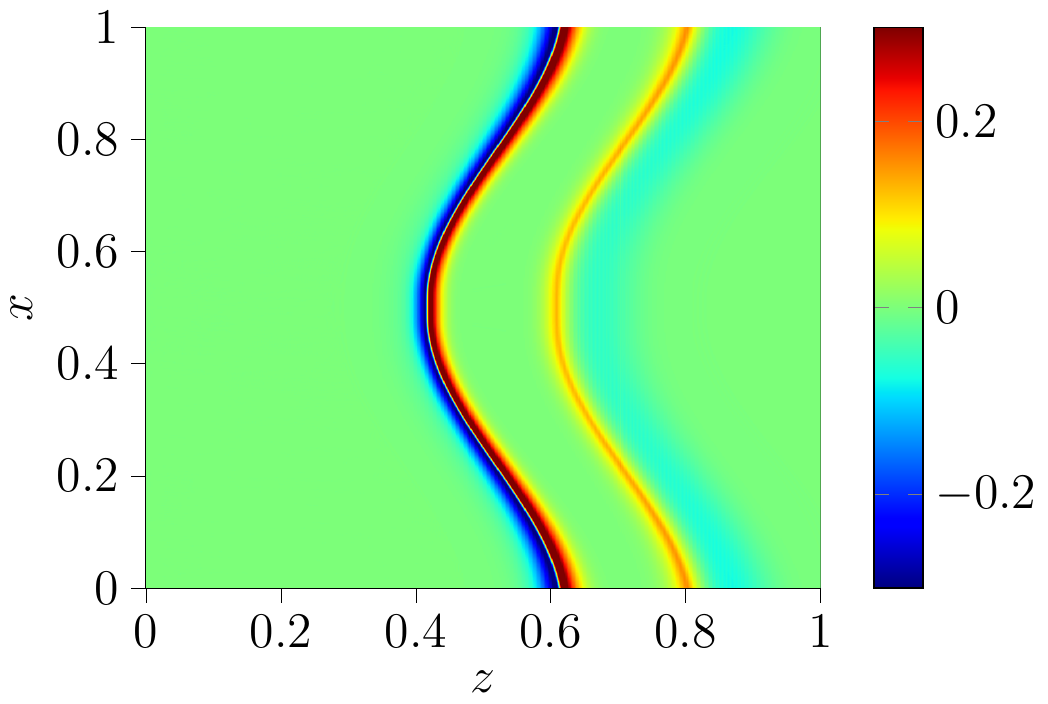}
				\caption{$\rho-r$ at $t=0.5$}
			\end{subfigure}\hfill
			\begin{subfigure}{.3\linewidth}
				\includegraphics[width=\textwidth]{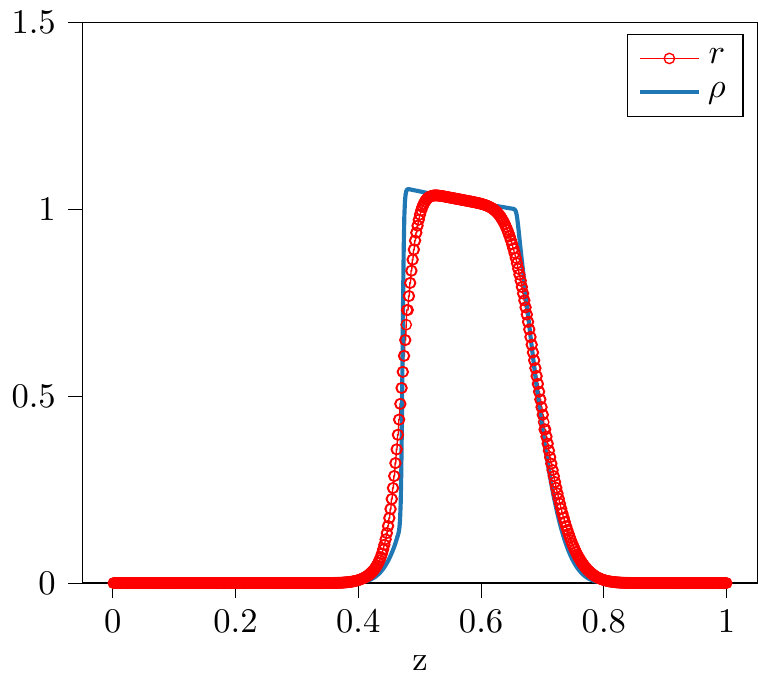}
				\caption{$\rho$,$r$ at $(x,t)=(0.3,0.5)$}
			\end{subfigure}
			\caption{Comparison of the discrete model with $(\imax, \kmax) = (2500,500)$ and the continuum model for $\eta=0.2$ at time $t=0.5$.  As expected, the discrete model shows better agreement with the continuum model than the previous version with only $(\imax, \kmax) = (1000,200)$ processors and stages; cf. \cref{fig:ODEcomparetime02}}
			\label{fig:ODEcompareFine}
		\end{figure}
	
	\end{example}

	For the remaining examples,  the Hamilton-Jacobi simulations are performed on a coarser mesh of $100\times 100$.

	\begin{example}[Variations in $\eta$]
		In this example, we examine the effect of $\eta$ on solutions to the macroscopic model while $\beta = 1.0$ is fixed.
		The initial condition, boundary condition, and processor speed are given by 
		\begin{equation}\label{eq:etaVarTest}
		\rho_0(x,z) = 1.5\chi_{z\leq 0.2}(x,z) \quad \rho_{\rm{bc}}(x,t) =0, \quad  \alpha(x) = 1-0.4(\sin(\pi x))^6,
		\end{equation}
		respectively.
		It is expected that the slower processor speed around $x=0.5$ will slow down neighboring processors due to neighbor-based throttling, encoded in the definition of $w_2$ in \eqref{eq:w2}.  Moreover, the effect should become more global in $x$ as $\eta$ increases,  since larger values of $\eta$ correspond to a larger number of stages per processor.  Indeed as the stages increase, interactions between neighbors begin to have a cumulative global effect.  This trend can be observed by comparing results across the first three rows of \cref{fig:etaVar} and in the line-outs in the final row.

		\begin{figure}[h!]\centering
			\begin{subfigure}{0.32\linewidth}
				\includegraphics[width=\textwidth]{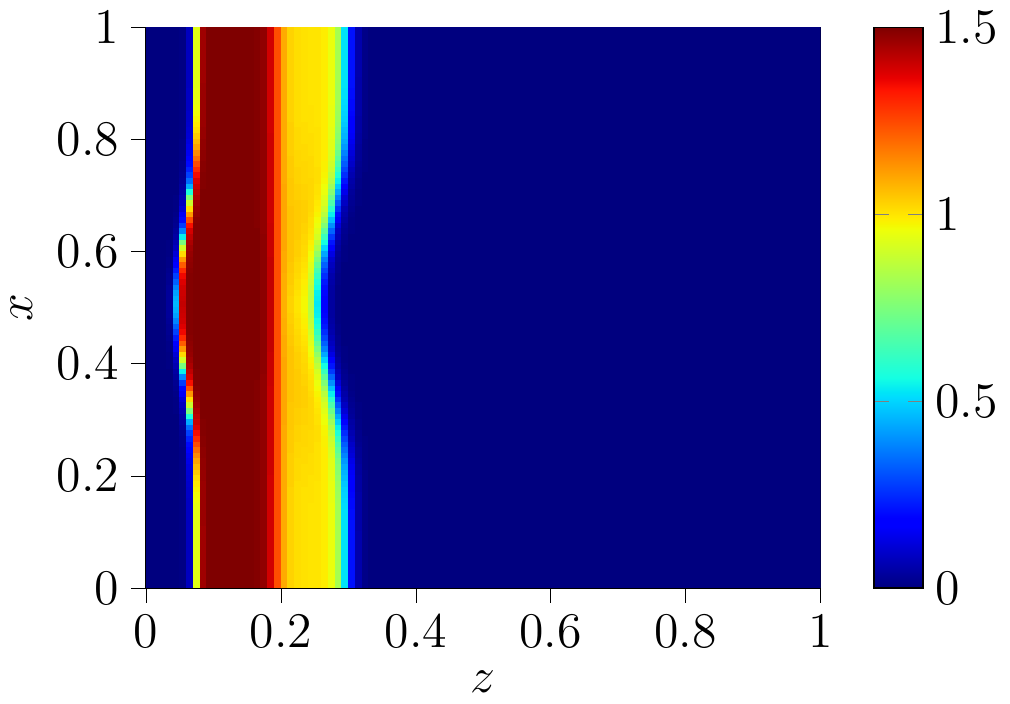}
				\caption{$\rho(x,z,0.1)$ when $\eta=0.2$}
				\label{fig:Test2Eta0_2Time0_1}
			\end{subfigure}
			\begin{subfigure}{0.32\linewidth}
				\includegraphics[width=\textwidth]{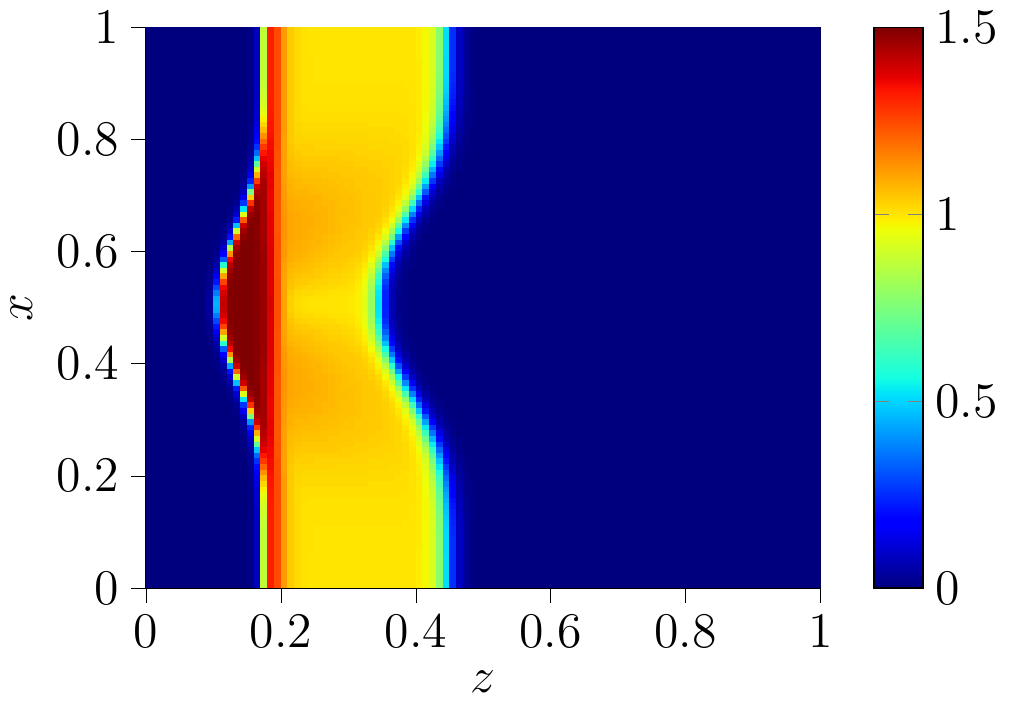}
				\caption{$\rho(x,z,0.25)$ when $\eta=0.2$}
				\label{fig:Test2Eta0_2Time0_25}
			\end{subfigure}
			\begin{subfigure}{0.32\linewidth}
				\includegraphics[width=\textwidth]{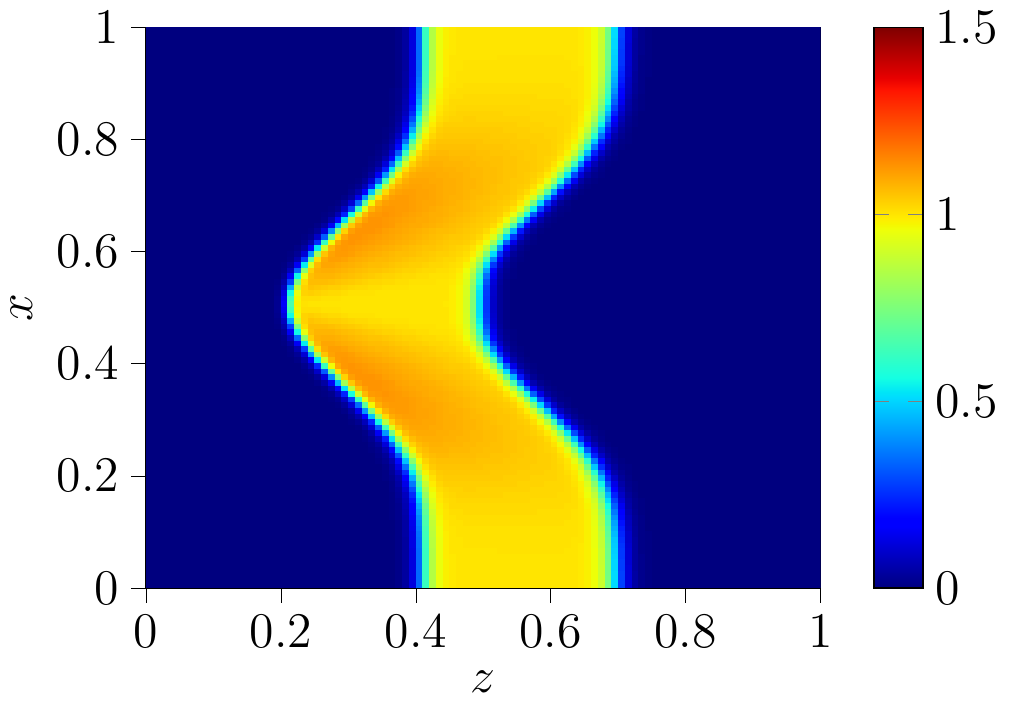}
				\caption{$\rho(x,z,0.5)$ when $\eta=0.2$}
				\label{fig:Test2Eta0_2Time0_5}
			\end{subfigure}
			\\
			\begin{subfigure}{0.32\linewidth}
				\includegraphics[width=\textwidth]{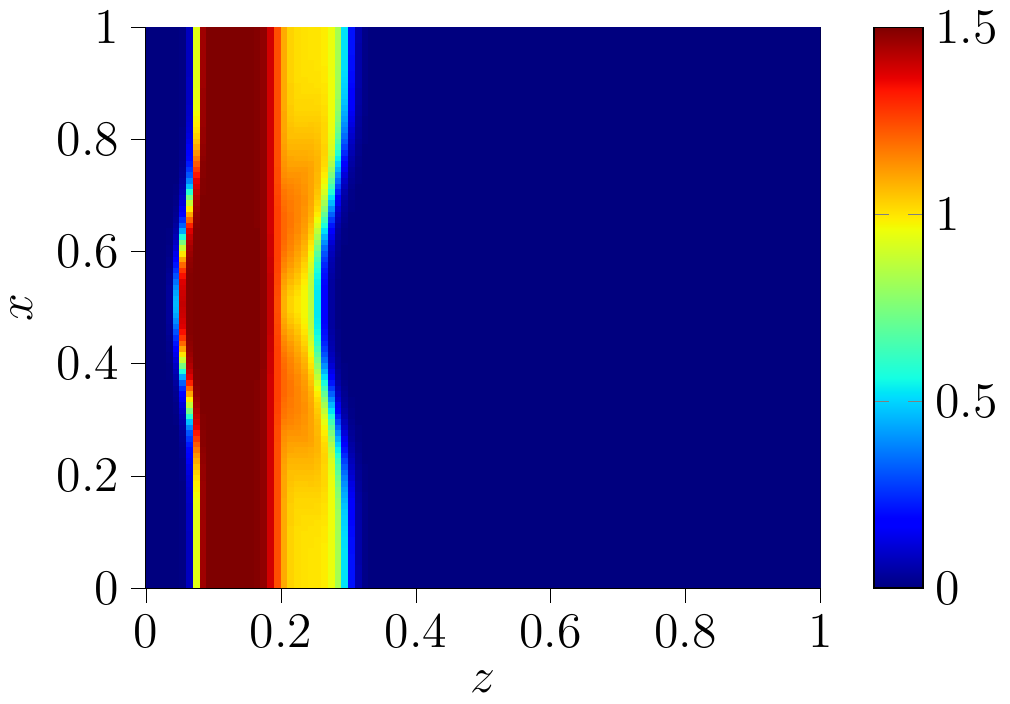}
				\caption{$\rho(x,z,0.1)$ when $\eta=1.0$}
				\label{fig:Test2Eta1_0Time0_1}
			\end{subfigure}
			\begin{subfigure}{0.32\linewidth}
				\includegraphics[width=\textwidth]{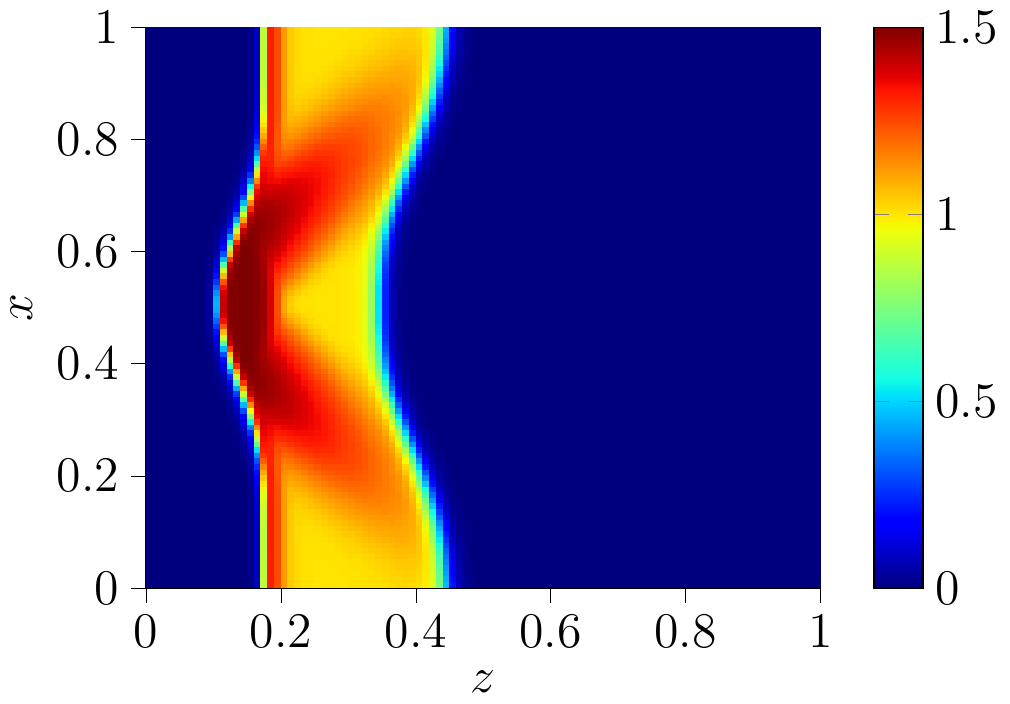}
				\caption{$\rho(x,z,0.25)$ when $\eta=1.0$}
				\label{fig:Test2Eta1_0Time0_25}
			\end{subfigure}
			\begin{subfigure}{0.32\linewidth}
				\includegraphics[width=\textwidth]{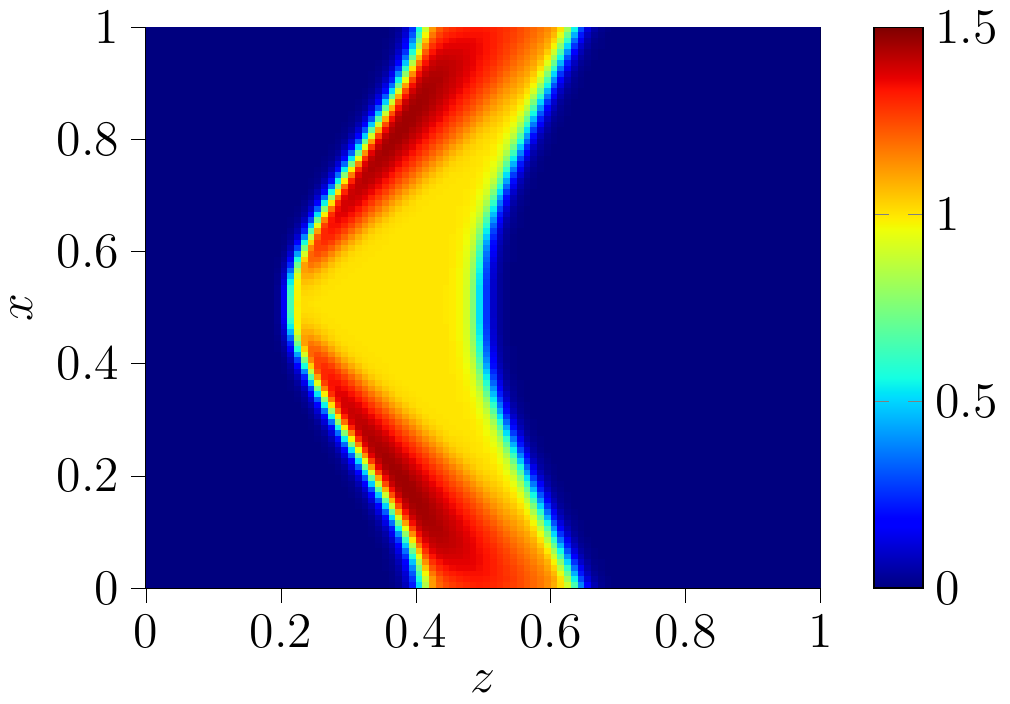}
				\caption{$\rho(x,z,0.5)$ when $\eta=1.0$}
				\label{fig:Test2Eta1_0Time0_5}
			\end{subfigure}
			\\
			\begin{subfigure}{0.32\linewidth}
				\includegraphics[width=\textwidth]{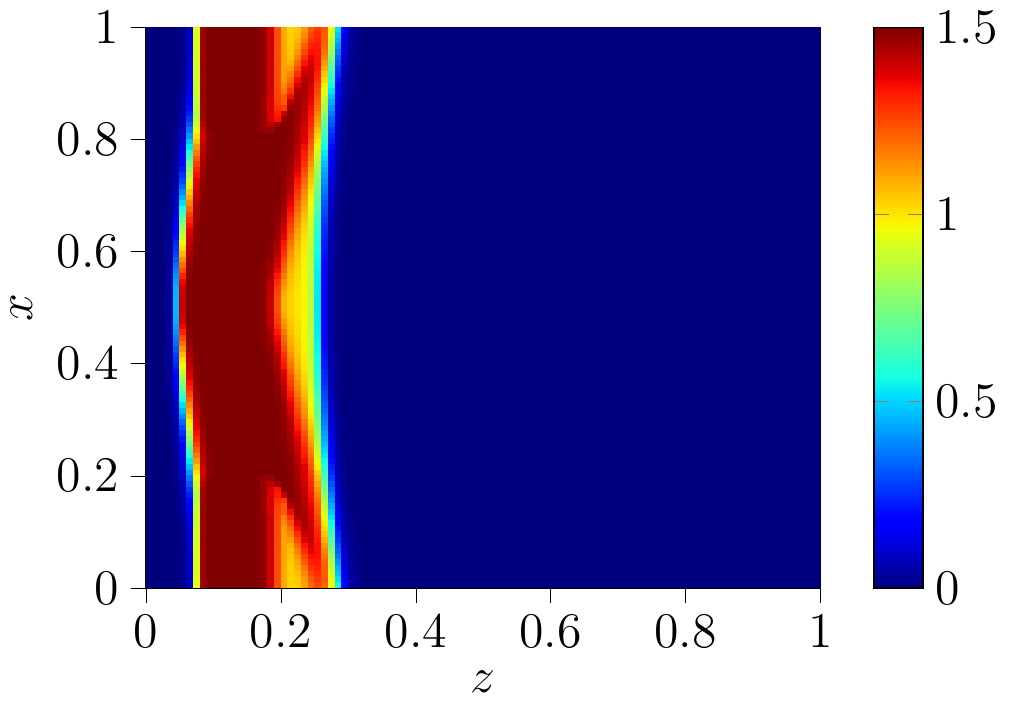}
				\caption{$\rho(x,z,0.1)$ when $\eta=5.0$}
				\label{fig:Test2Eta5_0Time0_1}
			\end{subfigure}
			\begin{subfigure}{0.32\linewidth}
				\includegraphics[width=\textwidth]{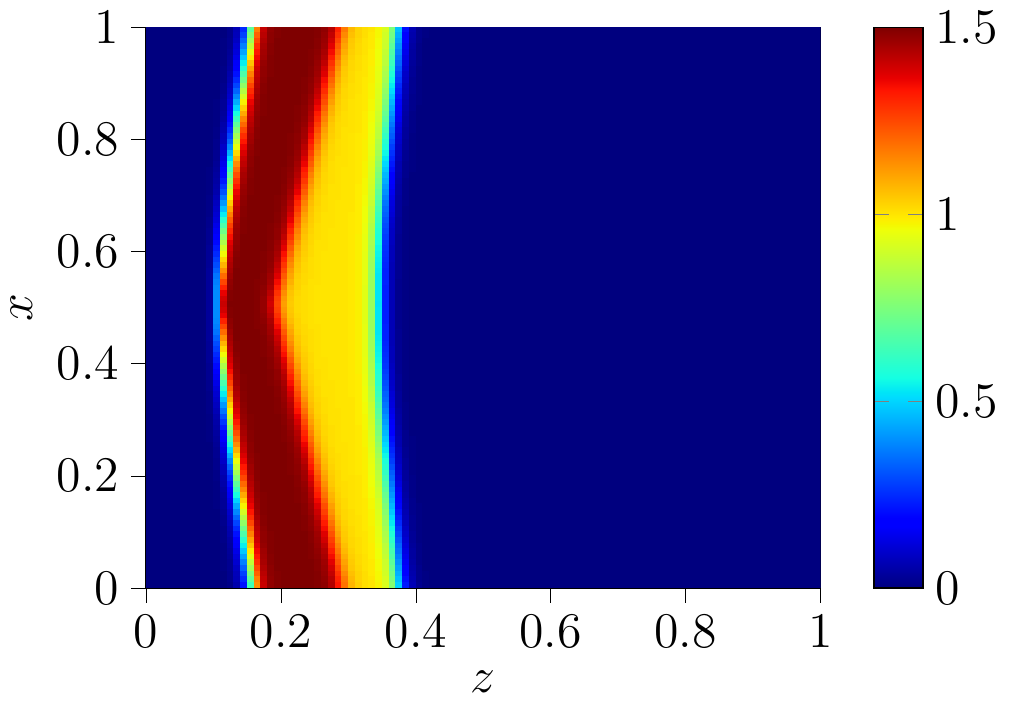}
				\caption{$\rho(x,z,0.25)$ when $\eta=5.0$}
				\label{fig:Test2Eta5_0Time0_25}
			\end{subfigure}
			\begin{subfigure}{0.32\linewidth}
				\includegraphics[width=\textwidth]{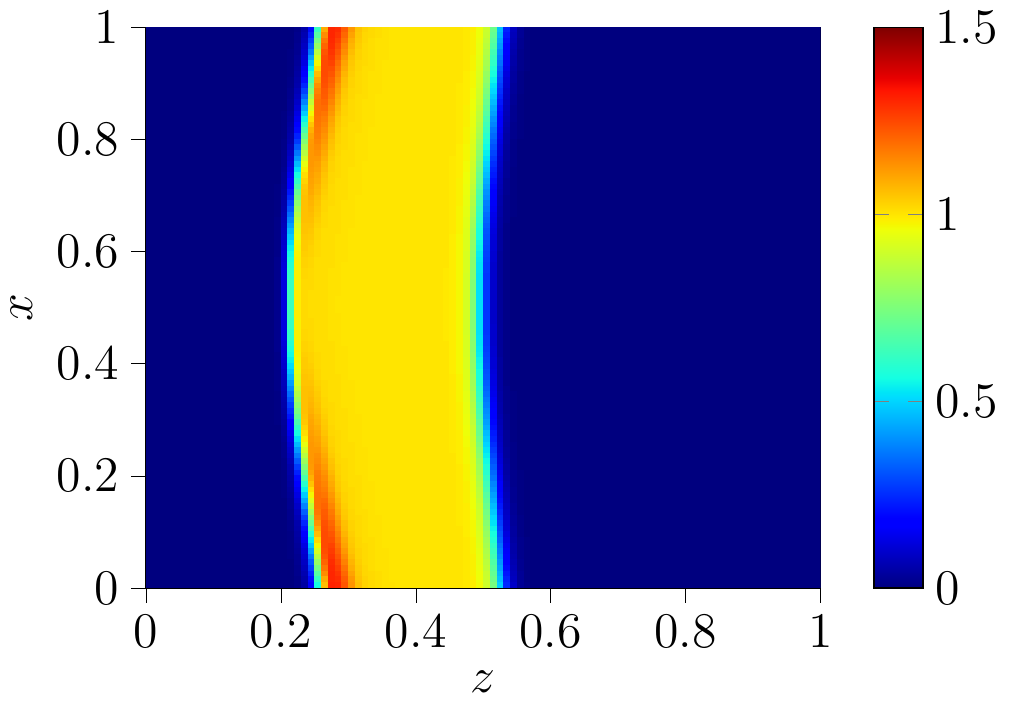}
				\caption{$\rho(x,z,0.5)$ when $\eta=5.0$}
				\label{fig:Test2Eta5_0Time0_5}
			\end{subfigure}\\
			\begin{subfigure}{0.32\linewidth}
				\includegraphics[width=\textwidth]{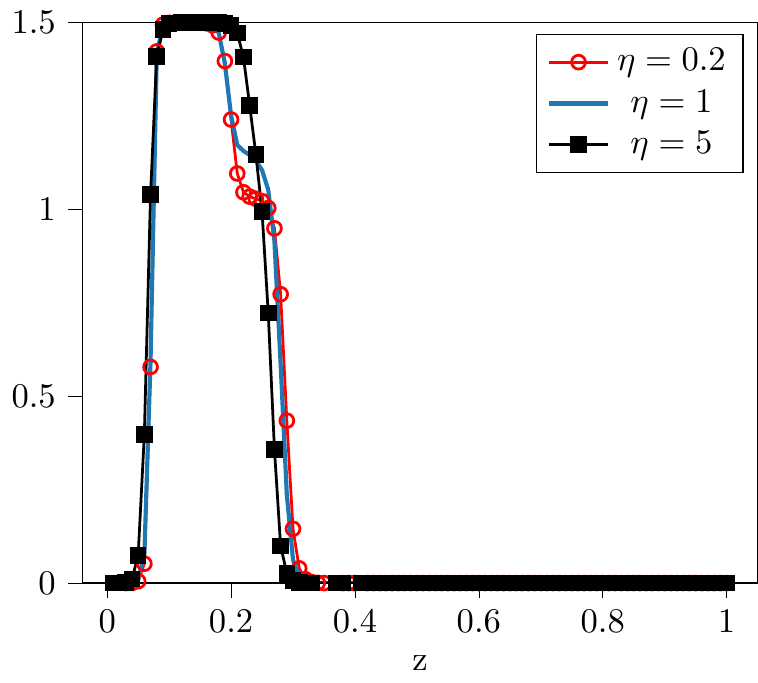}
				\caption{ $\rho(0.3,z,0.1)$}
				\label{fig:Test2LineTime0_1}
			\end{subfigure}
			\begin{subfigure}{0.32\linewidth}
				\includegraphics[width=\textwidth]{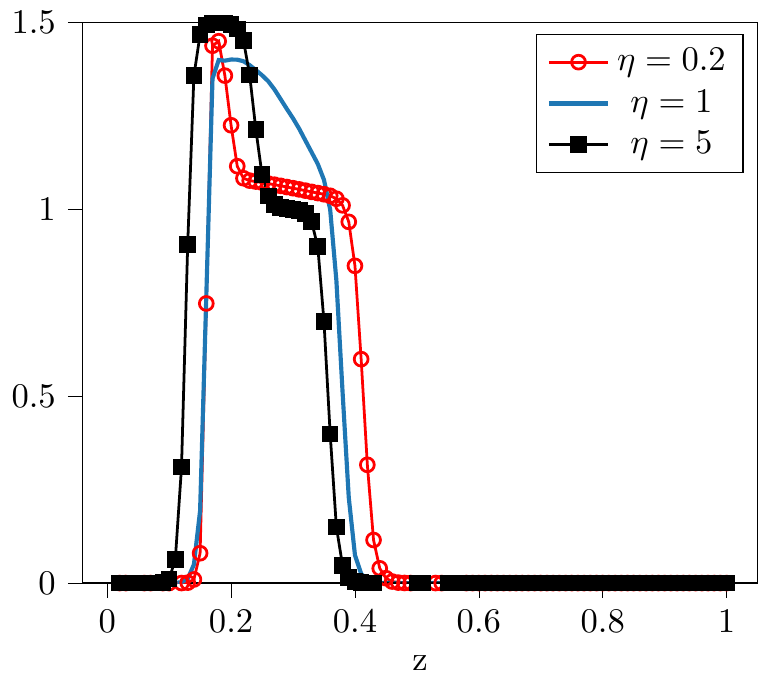}
				\caption{$\rho(0.3,z,0.25)$}
				\label{fig:Test2LineTime0_25}
			\end{subfigure}
			\begin{subfigure}{0.32\linewidth}
				\includegraphics[width=\textwidth]{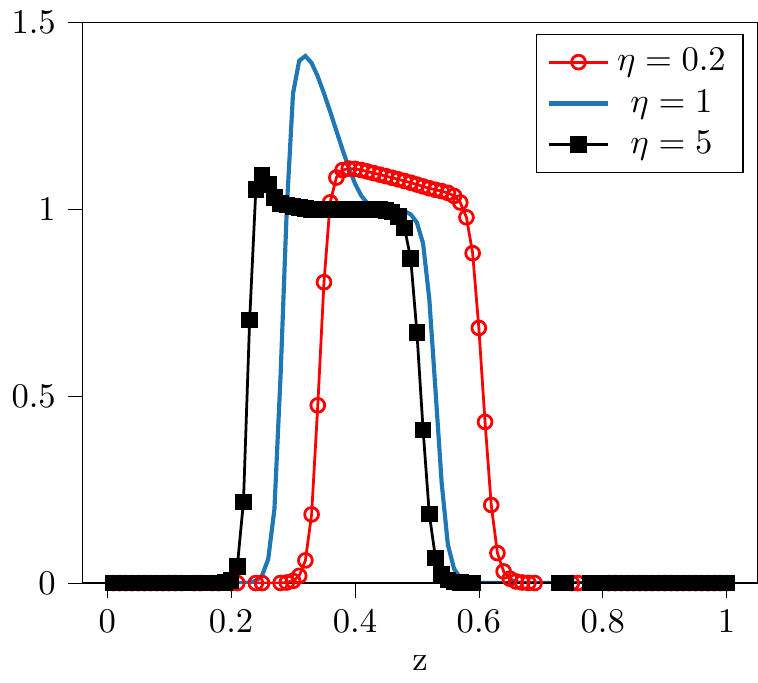}
				\caption{$\rho(0.3,z,0.5)$ }
				\label{fig:Test2LineTime0_5}		
			\end{subfigure}
			\caption{The effects on $\rho$ due to variations in $\eta$.  As $\eta$ increases the throttling  effect of a local slowdown spreads more quickly, and data is not processed as quickly.}
			\label{fig:etaVar}
		\end{figure}
	\end{example}

	\begin{example}[Variations in $\beta$]\label{ex:var_beta}
		In this example, we examine the effect of $\beta$ on solutions to the macroscopic model, while holding $\eta = 1.0$ fixed.   
		The initial condition, boundary condition, and processor speed are again given by \eqref{eq:etaVarTest}.
		
		Based on the definition of the function $w_2$ in \eqref{eq:w2}, the expectation is that smaller values of $\beta$ will lead to reduced throttling effects. Such behavior is confirmed by the numerical results in \cref{fig:betaVar}.
		\begin{figure}[h!]
			\centering
			\begin{subfigure}{0.32\linewidth}
				\includegraphics[width=\textwidth]{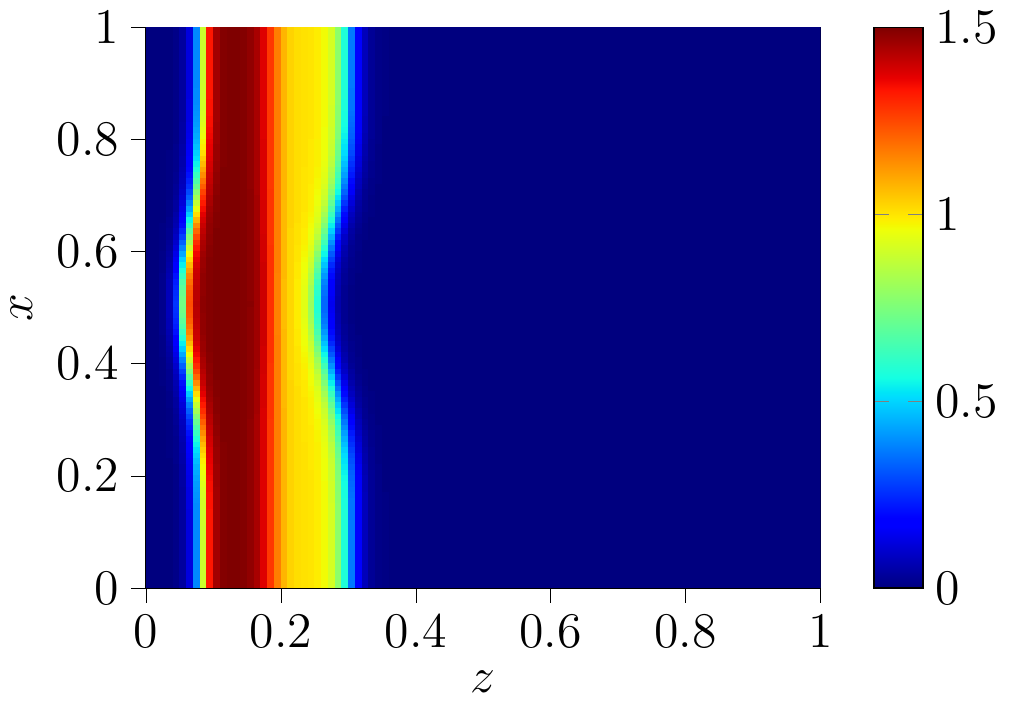}
				\caption{$\rho(x,z,0.1)$  with $\beta=0.1$}
			\end{subfigure}
			\begin{subfigure}{0.32\linewidth}
				\includegraphics[width=\textwidth]{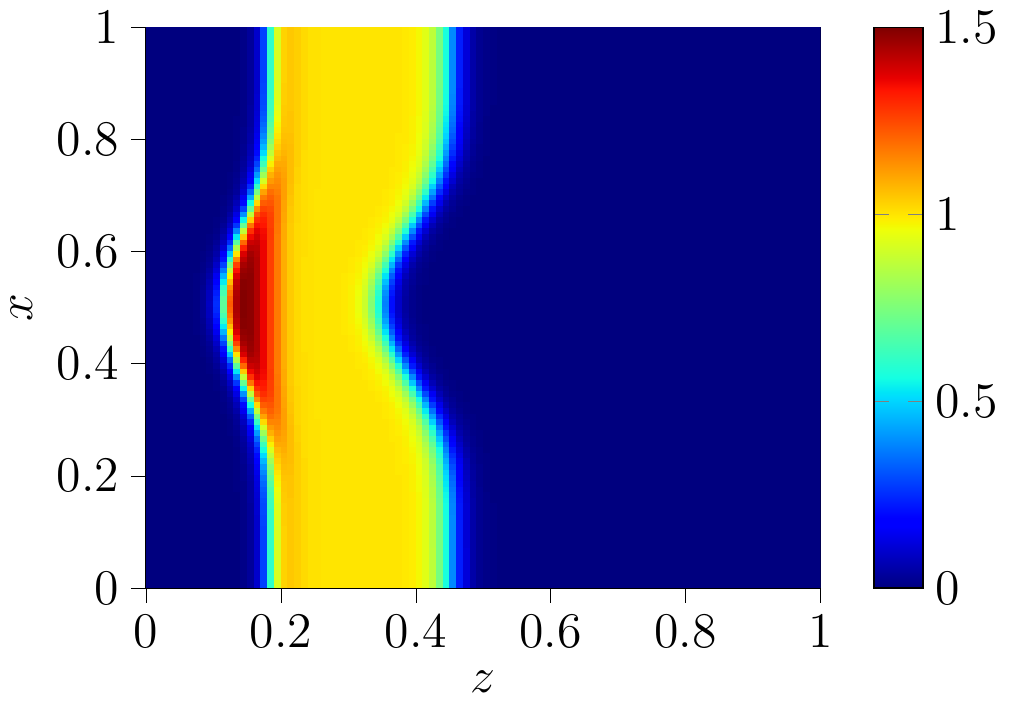}
				\caption{$\rho(x,z,0.25)$  with $\beta=0.1$}
			\end{subfigure}
			\begin{subfigure}{0.32\linewidth}
				\includegraphics[width=\textwidth]{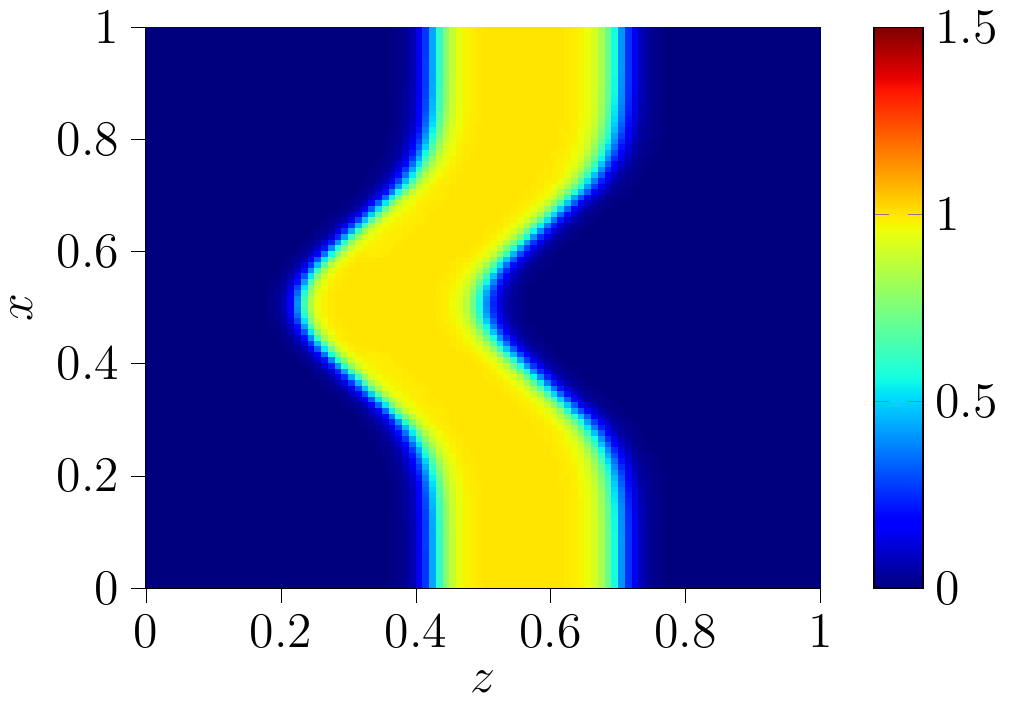}
				\caption{$\rho(x,z,0.5)$  with $\beta=0.1$}
			\end{subfigure}
			\\
			\begin{subfigure}{0.32\linewidth}
				\includegraphics[width=\textwidth]{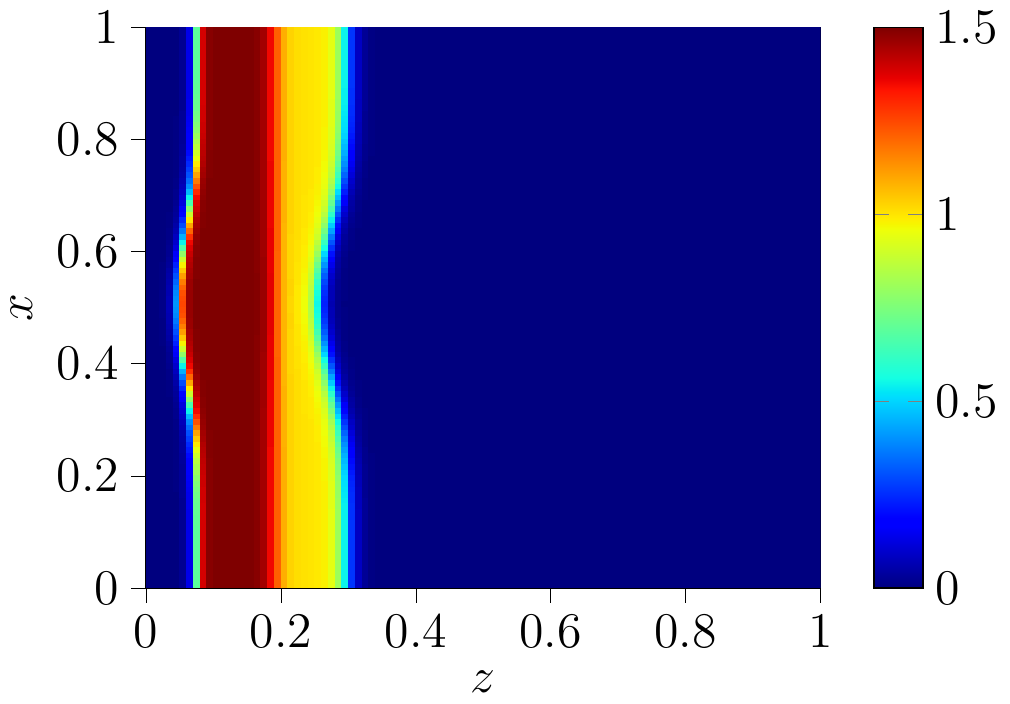}
				\caption{$\rho(x,z,0.1)$  with $\beta=0.5$}
			\end{subfigure}
			\begin{subfigure}{0.32\linewidth}
				\includegraphics[width=\textwidth]{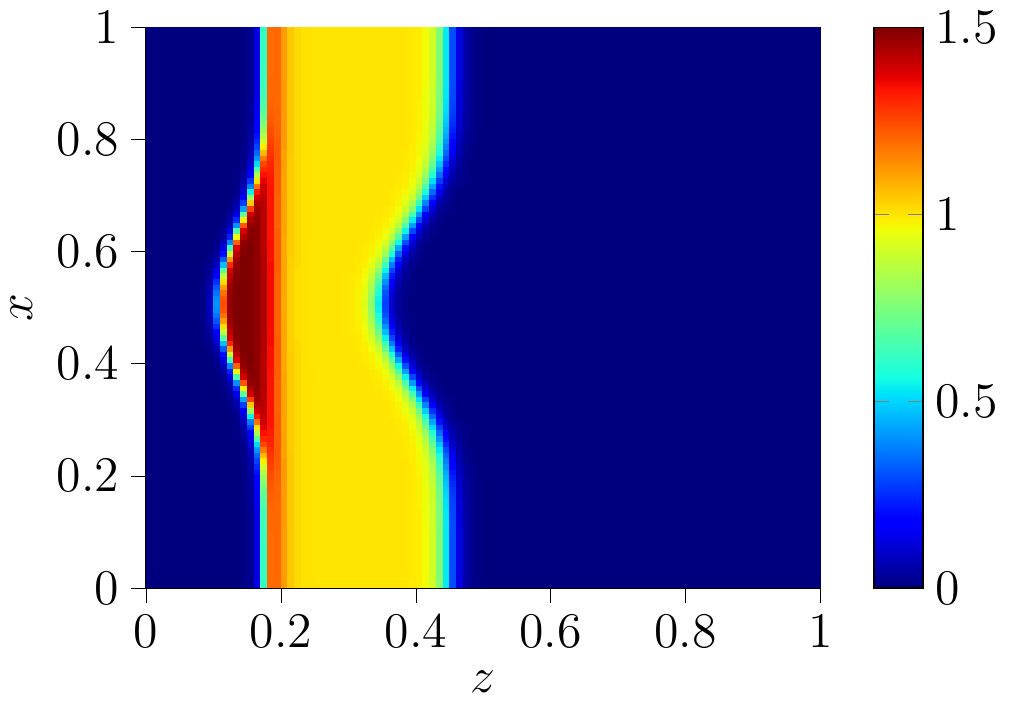}
				\caption{$\rho(x,z,0.25)$  with $\beta=0.5$}
			\end{subfigure}
			\begin{subfigure}{0.32\linewidth}
				\includegraphics[width=\textwidth]{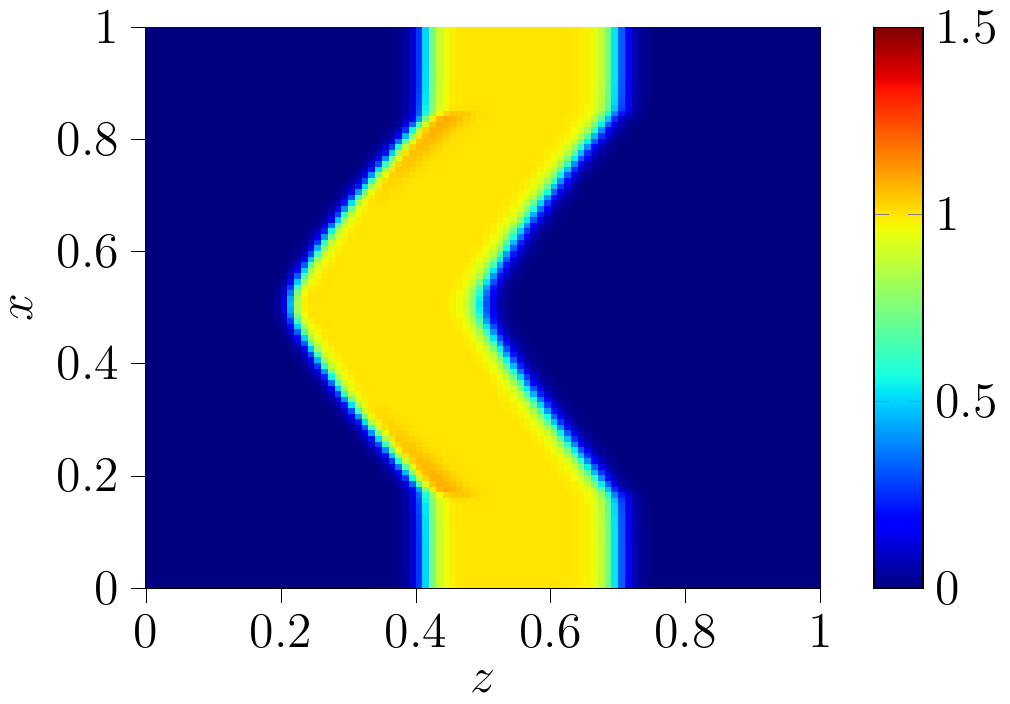}
				\caption{$\rho(x,z,0.5)$  with $\beta=0.5$}
			\end{subfigure}
			\\
			\begin{subfigure}{0.32\linewidth}
				\includegraphics[width=\textwidth]{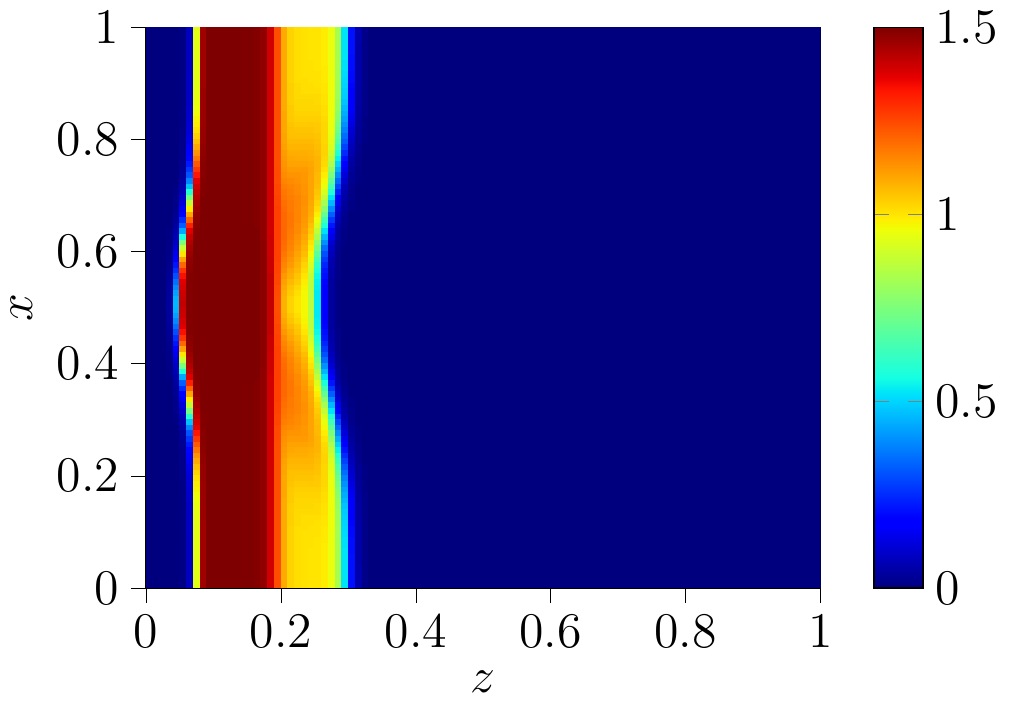}
				\caption{$\rho(x,z,0.1)$  with $\beta=1$}
			\end{subfigure}
			\begin{subfigure}{0.32\linewidth}
				\includegraphics[width=\textwidth]{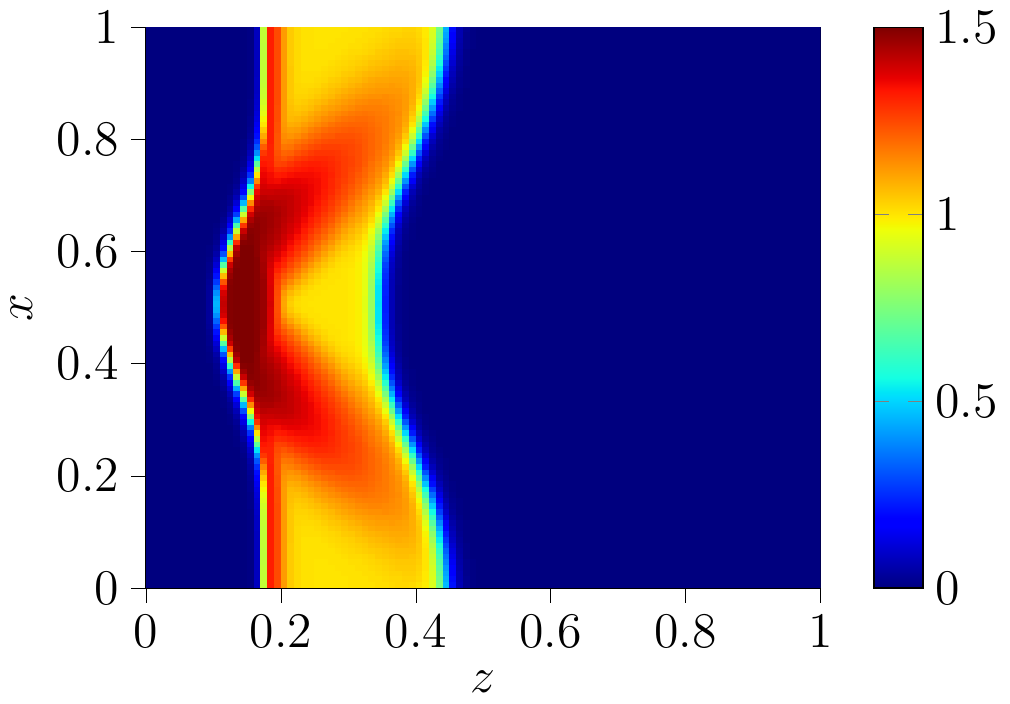}
				\caption{$\rho(x,z,0.25)$  with $\beta=1$}
			\end{subfigure}
			\begin{subfigure}{0.32\linewidth}
				\includegraphics[width=\textwidth]{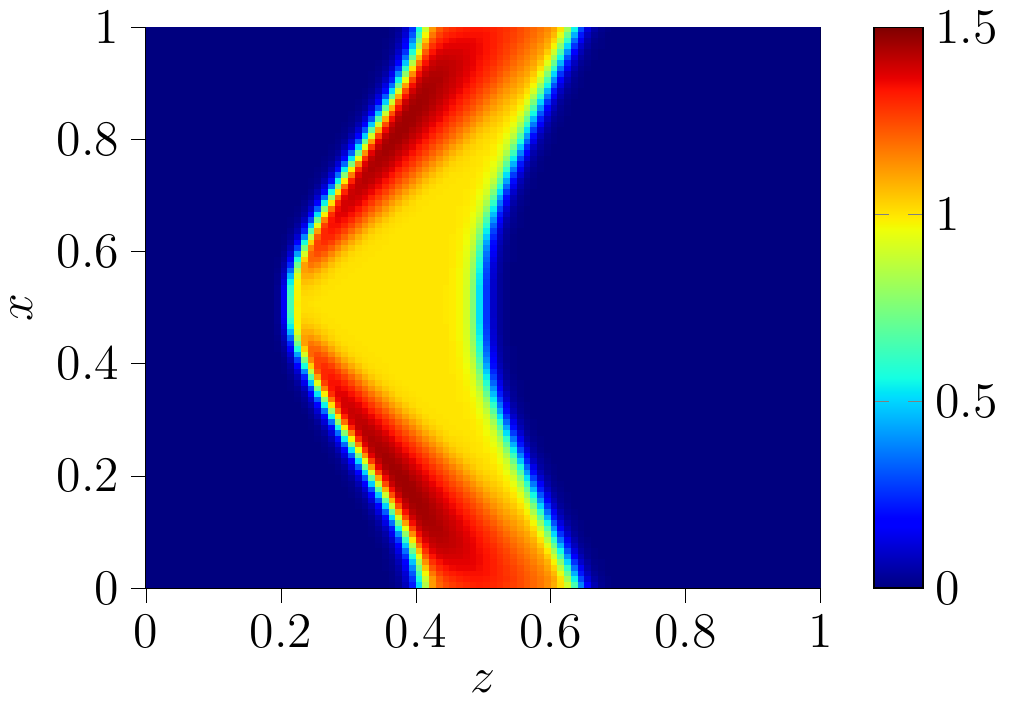}
				\caption{$\rho(x,z,0.5)$  with $\beta=1$}
			\end{subfigure}\\
			\begin{subfigure}{0.32\linewidth}
				\includegraphics[width=\textwidth]{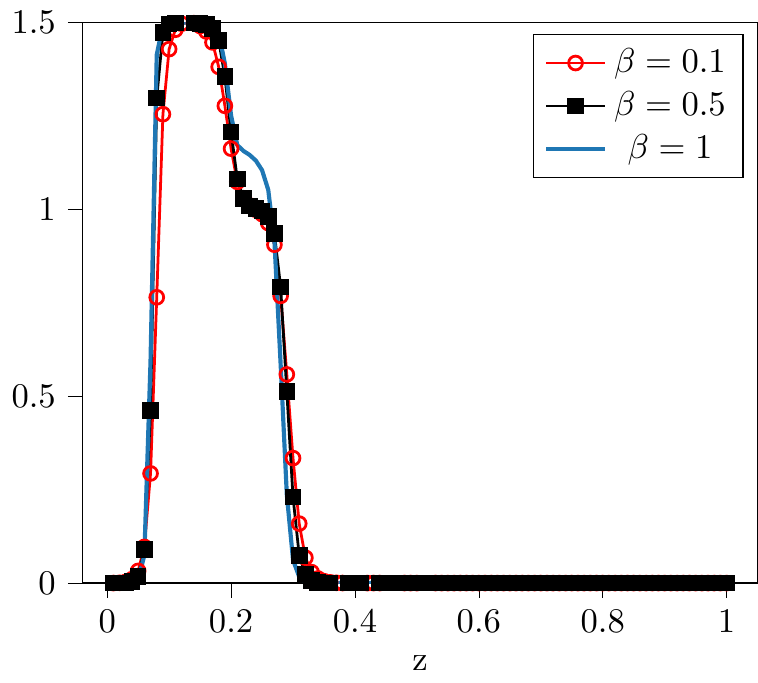}
				\caption{ $\rho(0.3,z,0.1)$}
			\end{subfigure}
			\begin{subfigure}{0.32\linewidth}
				\includegraphics[width=\textwidth]{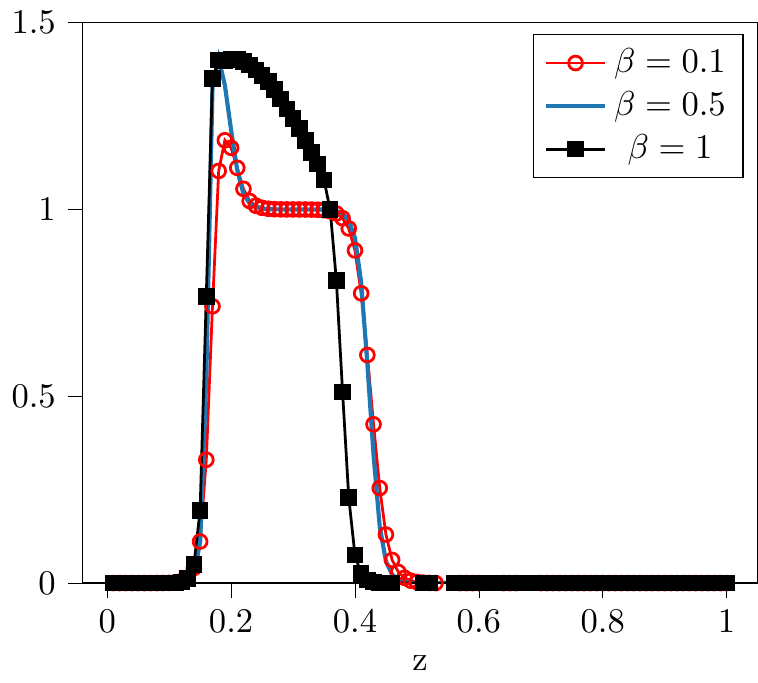}
				\caption{$\rho(0.3,z,0.25)$}
			\end{subfigure}
			\begin{subfigure}{0.32\linewidth}
				\includegraphics[width=\textwidth]{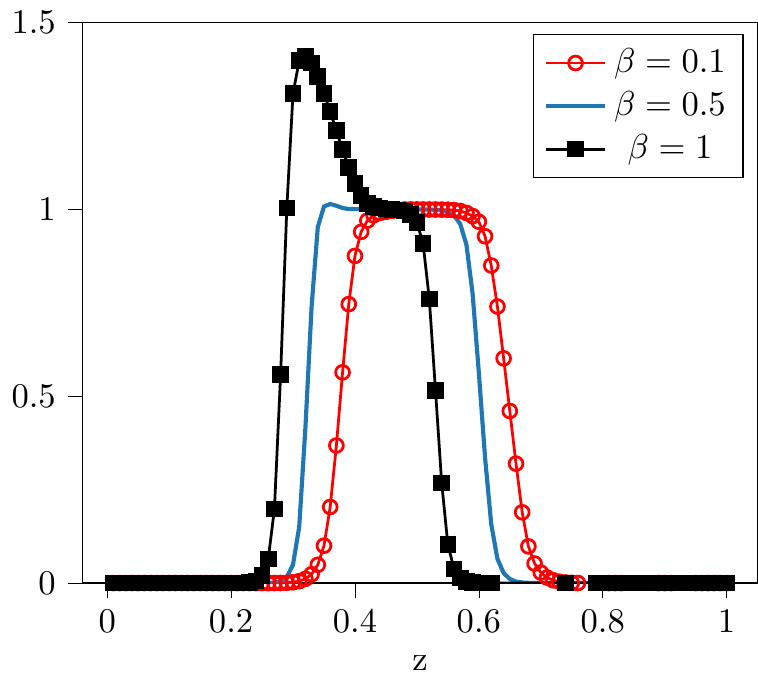}
				\caption{$\rho(0.3,z,0.5)$ }
			\end{subfigure}
			\caption{Plots of the solution $\rho$ from Example \ref{ex:var_beta} for different values of $\beta$.  Larger values of $\beta$ lead to more throttling.}
			\label{fig:betaVar}
		\end{figure}
		
	\end{example}

	\begin{example}[Highly localized slowdown]
		In this example, we explore the effects of a highly localized slowdown in processor speed when $\eta=\beta=1$.  The initial and boundary conditions are given in \eqref{eq:etaVarTest}, while the processor speed is given by
		$\alpha(x) = 1-0.4c(x)$, where
		\begin{equation}
		c(x) = \begin{cases}
		0 & |x-.5|>.05\\
		40x-18& x\in[.45,.475]\\
		-40x+22& x\in[.525,.55]\\
		1 & |x-.5|<.025
		\end{cases}.
		\end{equation}
		In particular, $\alpha\neq1$ only on the interval $(0.45,0.55)$.  
		Simulation results from this example are shown in \cref{fig:slowdown}.  At early times, slower processors in the center of the $x$ domain prohibit neighboring processors from moving data to later stages of the calculation (i.e. along the $z$-direction).  The result is a buildup of data in the neighboring processors.  As time progresses, the build-up of data spreads as throttled processors near the initial slowdown around $x=0.5$ begin to effect neighbors further away.  Eventually these buildups dissipate as the slower processors begin catch up with their throttled neighbors.
		\begin{figure}[h!]
			\centering
			\begin{subfigure}{0.32\linewidth}
				\includegraphics[width=\textwidth]{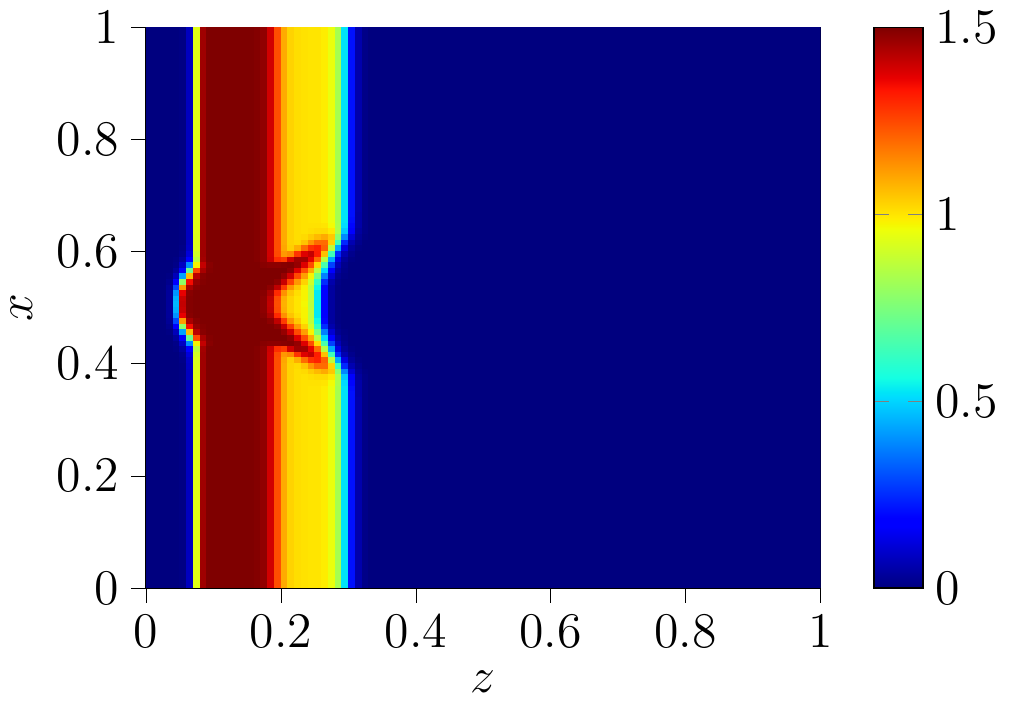}
				\caption{$\rho(x,z,0.1)$}
			\end{subfigure}
			\begin{subfigure}{0.32\linewidth}
				\includegraphics[width=\textwidth]{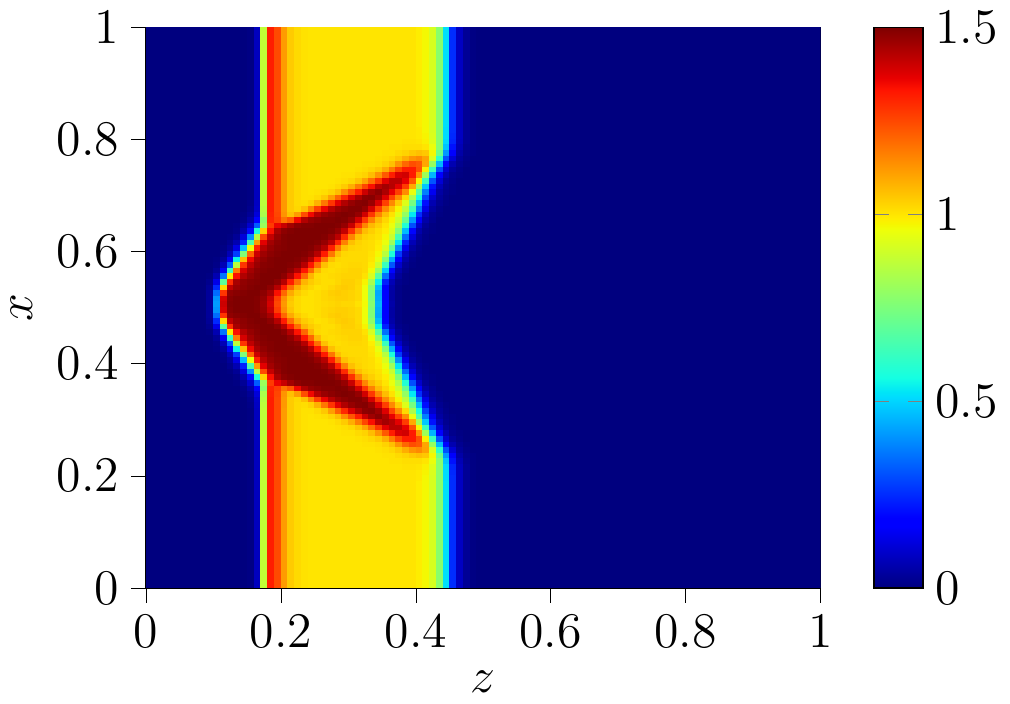}
				\caption{$\rho(x,z,0.25)$}
			\end{subfigure}
			\begin{subfigure}{0.32\linewidth}
				\includegraphics[width=\textwidth]{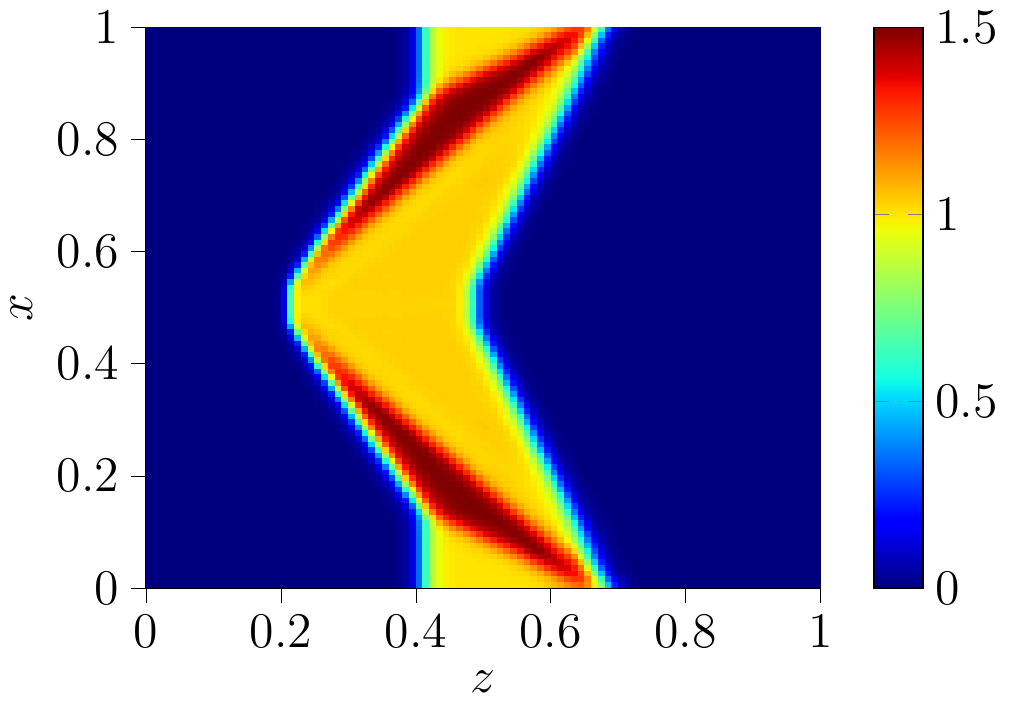}
				\caption{$\rho(x,z,0.5)$}
			\end{subfigure}
			\caption{The effect of a highly localized slowdown on $\rho$}
			\label{fig:slowdown}
		\end{figure}
	\end{example}

	\begin{example}[Long-term behavior ]
		In previous examples, we have observed that under some conditions, solutions eventually resemble a traveling profile of the form
		\begin{equation}
		\label{eq:long-time profile}
		\rho_*(x,z,t) = \chi_{[\zeta_0(x), \zeta_1(x)]} (z-st),
		\end{equation}
		where $s$ is a positive constant and the profiles $\xi_0$ and $\xi_1$ are constant in time and satisfy $\zeta_1(x) < \zeta_1(x)$ for all $x \in [0,1)$.  Our intuition is that for a wide range of conditions, traveling profiles are of this type will arise after  sufficiently long times, if the $z$ domain is extended to $(0,\infty)$. Moreover the shape of $\zeta_1$ and $\zeta_2$ is closely related to the initial data and the shape of $\alpha$.
		\footnote{A more systematic study of such profiles in special case can be found in \cite{hauck2019qualitative}.}
		Rather than make a precise conjecture at this point, we instead provide an example which further demonstrates our intuition. 
		Initial and boundary conditions are given in \eqref{eq:etaVarTest}.  Because the domain in $z$ is limited, we introduce relatively small variations in $\alpha$, which allow the system to settle faster:
		\begin{equation}
		\alpha(x) = 1+0.1\cos(4\pi x).
		\end{equation}
		Simulation results for this example are presented in \cref{fig:alphaVar}.  When $t=0.5$, the solution has nearly settled to a profile of the form \eqref{eq:long-time profile}, with cusps that appear where the waves caused by throttling meet, at $x=0.5$ and at the periodic boundary. In particular the solution has the periodicity of $\alpha$.
		\begin{figure}[h!]
			\centering
			\begin{subfigure}{0.32\linewidth}
				\includegraphics[width=\textwidth]{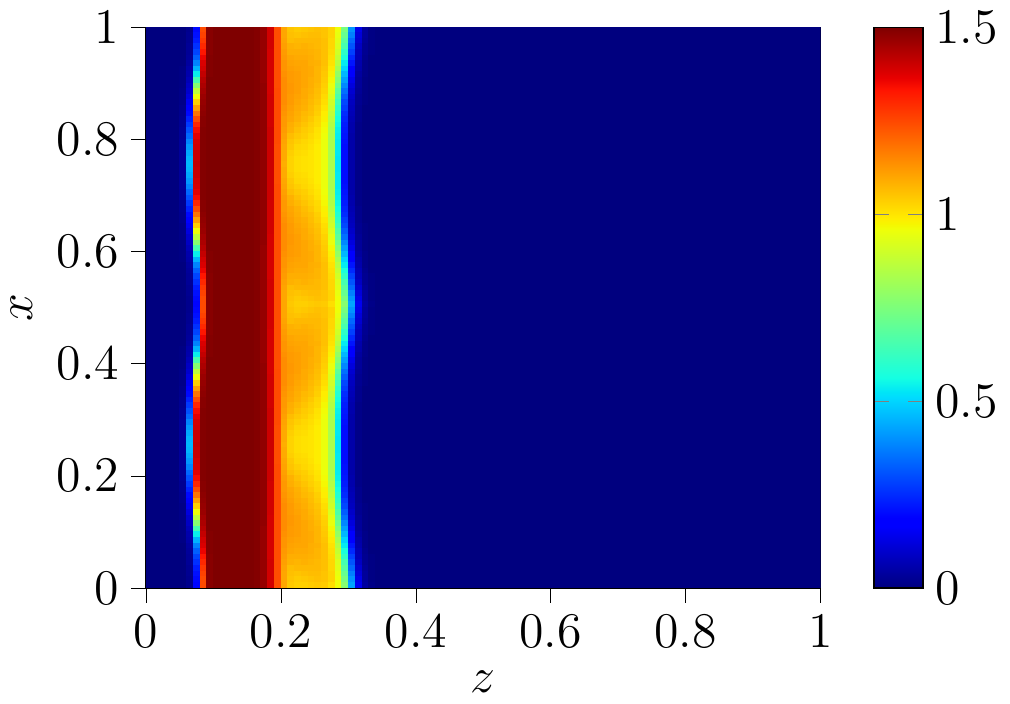}
				\caption{$\rho(x,z,0.1)$}
			\end{subfigure}
			\begin{subfigure}{0.32\linewidth}
				\includegraphics[width=\textwidth]{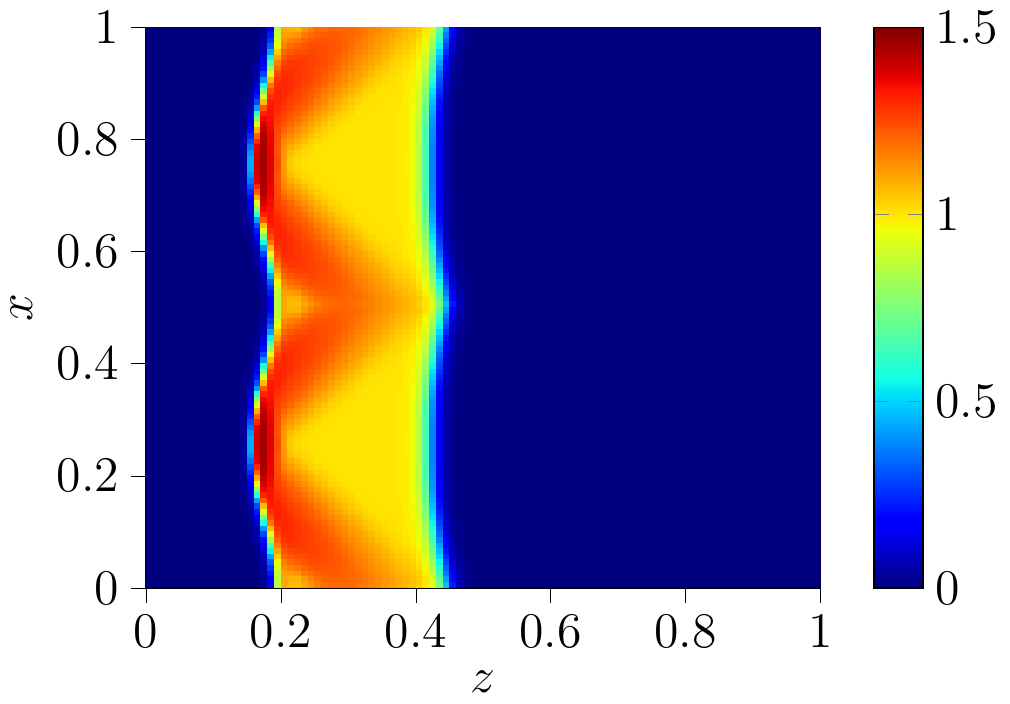}
				\caption{$\rho(x,z,0.25)$}
			\end{subfigure}
			\begin{subfigure}{0.32\linewidth}
				\includegraphics[width=\textwidth]{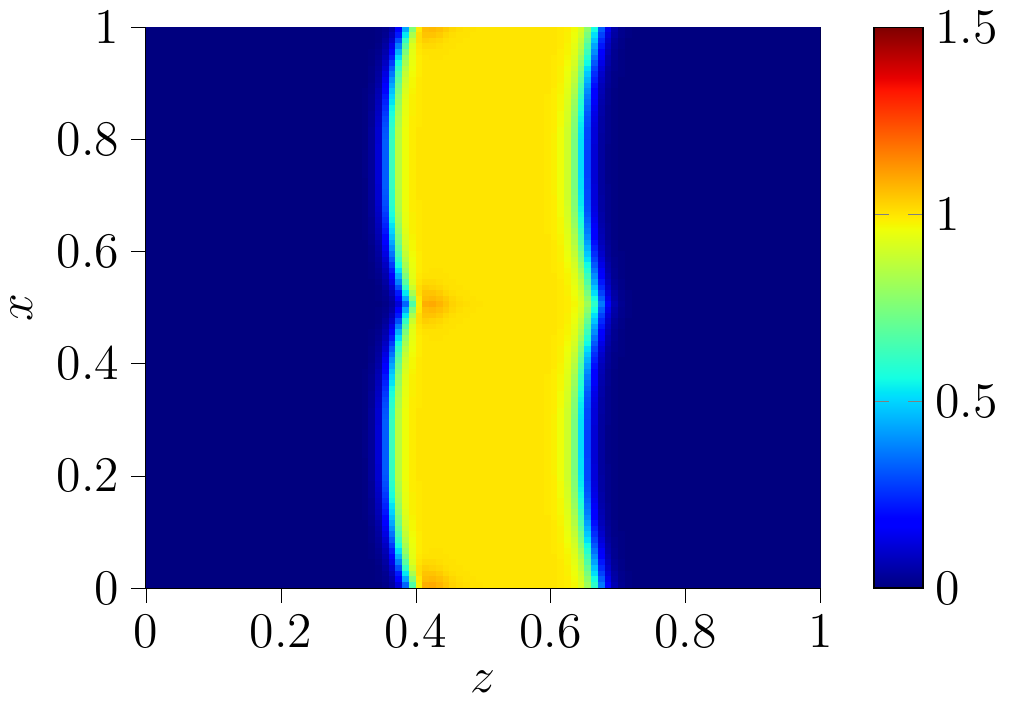}
				\caption{$\rho(x,z,0.5)$}
			\end{subfigure}
			\caption{The effect of small variation in processor speed on $\rho$.  After sufficiently time, a profile emerges with the periodicity of $\alpha$.}
			\label{fig:alphaVar}
		\end{figure}

	\end{example}

\section{Conclusion}
\label{sec:conclusion}

We have presented a simple discrete model of a network of processors in a high performance computing environment where the computational throughput depends on the on the availability of data from neighboring processors.  
This discrete, microscopic-level model has been then used to derive a continuum-level model which treats computational progress as an Eulerian fluid flow.
Currently, the existence and uniqueness of solutions to the partial differential equation in this fluid model is open.  
However, a Hamilton-Jacobi model is available for which we can establish the existence and uniqueness of continuous viscosity solutions; the solution for the governing equation corresponds to the total amount of data that has been processed through a particular stage in the computation. 
Numerical experiments have shown that this continuum model can capture the asymptotic behavior of the discrete model.  
Additionally, we have used these experiments to give an initial understanding of solutions' dependence on parameters associated with the parallelism of the modeled computation as well as the effects heterogeneities in processing capacity.

In future work, we intend to explore control strategies for $\alpha$ that can alleviate bottlenecks caused by local slowdowns in the processor speed.  We will also extend the model to allow for more complicated interactions, including stochastic effects, and explore strategies for optimal communication.  Finally, we hope to tune the parameters of the model with data taken from processor components of a real supercomputer and then compare predictions of the macroscopic model with the real global behavior of the supercomputer.


\ifpdf
\hypersetup{
	pdftitle={An Example Article},
	pdfauthor={D. Doe, P. T. Frank, and J. E. Smith}
}
\fi




	\appendix
	
	\section{Hamilton-Jacobi Theory}
	\label{sec:appendix}
	
	We recall a few standard definitions from the theory of nonlinear second-order Hamilton-Jacobi equations as in, for instance, \cite{CraIshLio92,Bar13}:
	\begin{definition}[Degenerately elliptic function]
		Let $F:\R^n\times\R^n\times\mathcal{S}^n\rightarrow\R$ be given, where $\mathcal{S}^n$ is the set of symmetric $n\times n$ matrices.  Then we say $F$ is degenerately elliptic if $F(x,r,p,X)\leq F(x,r,p,Y)$ whenever $Y\leq X.$ 
	\end{definition} 
	\begin{definition}[Modulus function]
		We call a function $\sigma:[0,\infty)\rightarrow[0,\infty)$ a {modulus function} if $\sigma(0)=0$ and it is nondecreasing.
	\end{definition}  
	In contrast to the notational convention in \cref{sec:continuum}, we follow in this appendix the convention of the viscosity literature and refer to time- and space-dependent functions as $u(t,x)$.
	\begin{definition}[Parabolic boundary]
		If $U=(0,T]\times D$ where $D\subset\R^n$ and $T\geq 0$, then $\p_PU:=\{0\}\times D\cup[0,T]\times \p D$ is called the parabolic boundary of $U$.
	\end{definition}
	\begin{definition}[Semicontinuous envelope]
		The upper (respectively, lower) semicontinuous envelopes of a function $u:V\rightarrow[-\infty,\infty]$ are 
		\begin{align*}
		u^*(x)&=\lim_{r\downarrow0}\sup\{u(y):y\in V,|y-x|\leq r\},\\
		u_*(x)&=\lim_{r\downarrow0}\inf\{u(y):y\in V,|y-x|\leq r\}.
		\end{align*}
		They are, respectively, the smallest upper semicontinuous function greater than $u$ and the largest lower semicontinuous function less than $u$.  
	\end{definition} 
	\begin{definition}[Viscosity solutions]
		Let $f:U\times\R\times\R^n\times \mathcal{S}^n\rightarrow\R$ be given.
		An upper (resp. lower ) semicontinuous function $u$ is a viscosity subsolution (resp. supersolution) of  
		\begin{subequations}
			\label{eq:gen_HJ}
			\begin{align}
			u_t+f(t,x,u,\nabla_x u,\nabla_x^2 u) &=0,\qquad (t,x) \in U,\label{eq:gen_HJ_int}\\
			h(t,x,u,\nabla_xu)&=0,\qquad (t,x)\in \p_PU,\label{eq:gen_HJ_BC}
			\end{align}
		\end{subequations}
		on $(0,T]\times D$ if at every $(t,x)\in(0,T]\times D$,  when $u-\psi$ is locally maximized (resp. minimized) at $(t,x)$ and $\psi$ is $C^2( (0,T]\times D)$ we have
		\begin{subequations}
			\label{eq:sub_soln}
			\begin{align}
			\psi_t(t,x)+f(t,x,u,\nabla_x\psi(t,x),\nabla_x^2\psi(t,x))&\leq 0, \quad (t,x)\in U \label{eq:sub_soln_int}\\
			\min\{\psi_t(t,x)+f(t,x,u,\nabla_x\psi(t,x),\nabla_x^2\psi(t,x)),
			\label{eq:sub_soln_bndry}\\
			h(t,x,u(t,x),\nabla_x\psi(t,x))\}&\leq 0, \quad (t,x)\in\p_PU \nonumber
			\end{align}
		\end{subequations}
		(respectively,
		\begin{subequations}
			\label{eq:super_soln}
			\begin{align}
			\psi_t(t,x)+f(t,x,u,\nabla_x\psi(t,x),\nabla_x^2\psi(t,x))&\geq 0,\quad (t,x)\in U
			\label{eq:super_soln_int}\\
			\max\{\psi_t(t,x)+f(t,x,u,\nabla_x\psi(t,x),\nabla_x^2\psi(t,x)),
			\label{eq:super_soln_bndry}\\
			h(t,x,u(t,x),\nabla_x\psi(t,x))\}&\geq 0, \quad (t,x)\in \p_PU. \nonumber
			\end{align}
		\end{subequations}
		A function $u$ is a viscosity solution of \eqref{eq:gen_HJ} if its upper semicontinuous envelope is a viscosity subsolution and its lower semicontinuous envelope is a viscosity supersolution.
	\end{definition}
	
	Next, we recall the following general comparison theorem, which is Theorem 4.1 of \cite{GigGotIsh91}.
	\begin{lemma}[Generalized comparison principle]\label{thm:Gig_comp}
		Consider the system \cref{eq:gen_HJ} on $U = (0,T)\times \Omega$ where $\Omega\subseteq\R^n$ is a possibly unbounded domain and $T>0$.  Assume that $f$ satisfies the following assumptions:
		\begin{enumerate}
			\item $f$ is continuous on $U\times\R\times(\R^n\setminus\{0\})\times \mathcal{S}^n$.
			\item $f$ is degenerately elliptic.
			\item $-\infty<f_*(t,x,r,0,O)=f^*(t,x,r,0,O)<\infty$ for all $(t,x,r)\in U\times \R$, where $O$ is the zero matrix.
			\item For every $R>0$, we have 
			\begin{equation}
			\sup\{|f(t,x,r,p,X)|:|p|,|X|\leq R,(t,x,r,p,S)\in U\times\R\times(\R^n\setminus\{0\})\times \mathcal{S}^n\}
			\end{equation}
			is finite.
			\item For every $H>0$, there is a constant $c_0$ such that $r\mapsto f(t,x,r,p,X) +c_0r$ is nondecreasing for all $(t,x,r,p,X)\in U\times\R\times(\R^n\setminus\{0\})\times \mathcal{S}^n$ with $|r|\leq H$.  
			\item For every $R>\rho>0$ there is a modulus function $\sigma_{R_\rho}$ such that 
			\begin{equation}
			|f(t,x,r,p,X)-f(t,x,r,q,Y)|\leq \sigma_{R_\rho}(|p-q|+|X-Y|)
			\end{equation} for $(t,x,r)\in U\times\R,$ $\rho\leq|p|,~|q|\leq R$, and $|X|,|Y|\leq R$.
			\item There is a constant $\rho_0>0$ and a modulus function $\sigma_1$ such that 
			\begin{align}
			f^*(t,x,r,p,X)-f^*(t,x,r,0,O)&\leq\sigma_1(|p|+|X|)\\
			f_*(t,x,r,p,X)-f_*(t,x,r,0,O)&\geq-\sigma_1(|p|+|X|)
			\end{align}
			for $(t,x,r)\in U\times \R$ and $|p|,|X|\leq\rho_0$.
			\item There is a modulus function $\sigma_2$ such that 
			\begin{equation}
			|f(t,x,r,p,X)-f(t,y,r,p,X)|\leq \sigma_2(|x-y|(|p|+1))
			\end{equation}
			for any $y\in \Omega$ and $(t,x,r,p,X)\in U\times\R\times(\R^n\setminus\{0\})\times \mathcal{S}^n.$ 
		\end{enumerate} 
		Then if $u^-$ and $u^+$ are viscosity subsolutions and supersolutions of \eqref{eq:gen_HJ}, respectively, such that for some $K>0$ independent of $t,x,y\in(0,T]\times\Omega\times\Omega$:
		\begin{itemize}
			\item $u^-(t,x)\leq K(|x|+1)$ and $u^+(t,x)\geq -K(|x|+1);$
			\item $(u^-)^*(t,x)-(u^+)_*(t,y)\leq m_T(|x-y|)$ on $\p_p\big( (0,T]\times(\Omega\times\Omega)\big);$
			\item $(u^-)^*(t,x)-(u^+)_*(t,y)\leq K(|x-y|+1)$ on $\p_p\big( (0,T]\times(\Omega\times\Omega)\big).$
		\end{itemize}  
		Then there is a modulus function $\sigma$ such that
		\begin{equation}
		\label{eq:comparison_formula}
		(u^-)^*(t,x)-(u^+)_*(t,y)\leq \sigma(|x-y|) .
		\end{equation}
		
	\end{lemma}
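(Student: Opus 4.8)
The plan is to prove this comparison estimate by the classical doubling-of-variables argument for second-order viscosity solutions, built on the parabolic theorem on sums (\cite{CraIshLio92}), with the penalizations tailored to the unbounded domain $\Omega$ and to the singular behavior of $f$ at $p=0$. I would argue by contradiction, assuming that \eqref{eq:comparison_formula} fails for every modulus $\sigma$; equivalently, that there is a fixed $\theta>0$ together with points at which $(u^-)^*(t,x)-(u^+)_*(t,y)$ exceeds $\theta+m_T(|x-y|)$ while $|x-y|$ stays small. To capture this I would study, for small parameters $\epsilon,\gamma,\nu>0$, the auxiliary function
\begin{equation}
\Psi(t,x,y) = (u^-)^*(t,x) - (u^+)_*(t,y) - \frac{|x-y|^2}{2\epsilon} - \gamma\big(\zeta(x)+\zeta(y)\big) - \frac{\nu}{T-t},
\end{equation}
where $\zeta$ is a coercive weight growing superlinearly (for instance $\zeta(x)=1+|x|^2$), chosen to dominate the linear bounds $u^-(t,x)\le K(|x|+1)$ and $u^+(t,y)\ge -K(|y|+1)$, and the term $\nu/(T-t)$ forces the maximum away from $t=T$.

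Because of the growth bounds and the coercive penalization, $\Psi$ attains a maximum at some point $(\hat t,\hat x,\hat y)$, and the standard penalization estimate gives $|\hat x-\hat y|^2/\epsilon\to0$, so in particular $|\hat x-\hat y|\to0$ as $\epsilon\to0$; moreover $\gamma|\hat x|,\gamma|\hat y|\to0$ as $\gamma \to 0$. If the maximizing point lies on the parabolic boundary, the hypothesis $(u^-)^*(t,x)-(u^+)_*(t,y)\le m_T(|x-y|)$ already yields the desired estimate, so I may assume an interior maximum. At such a point I would invoke the parabolic theorem on sums to produce a real number $a$ and symmetric matrices $X,Y$ with
\begin{equation}
\Big(a+\tfrac{\nu}{(T-\hat t)^2},\,p+\gamma\nabla\zeta(\hat x),\,X\Big)\in\overline{\mathcal{P}}^{2,+}u^-(\hat t,\hat x),\quad \Big(a,\,p-\gamma\nabla\zeta(\hat y),\,Y\Big)\in\overline{\mathcal{P}}^{2,-}u^+(\hat t,\hat y),
\end{equation}
where $p=(\hat x-\hat y)/\epsilon$, together with the matrix bound that forces $X\le Y$ up to an error vanishing with $\gamma$.

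The core of the argument is then to subtract the subsolution and supersolution inequalities and extract a contradiction. Since $X\le Y$ (modulo $O(\gamma)$), degenerate ellipticity (Hypothesis 2, i.e.\ $f$ is nonincreasing in the matrix slot) removes the second-order terms in the favorable direction, replacing $f(\cdot,X)$ by $f(\cdot,Y)$ from below. The remaining difference in the spatial slot is controlled by Hypothesis 8 through $\sigma_2(|\hat x-\hat y|(|p|+1))$; here the decisive cancellation is that $|\hat x-\hat y|\,|p|\approx|\hat x-\hat y|^2/\epsilon\to0$, so this error is negligible. The discrepancy between the two gradients $p\pm\gamma\nabla\zeta$ is handled, for gradients bounded away from $0$, by the modulus in Hypothesis 6, while the properness condition Hypothesis 5 controls the dependence on $r$. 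Sending $\gamma\to0$, then $\epsilon\to0$, and finally $\nu\to0$ drives every error term to zero and contradicts the gap $\theta>0$.

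The main obstacle I anticipate is the singular set $\{p=0\}$: the Hamiltonian $f$ is assumed continuous only off $p=0$, so when the penalized gradient at the maximizing point degenerates, Hypothesis 6 is unavailable and one cannot naively pass to the limit. This is exactly what Hypotheses 3 and 7 are designed to address: Hypothesis 3 asserts that $f$ extends continuously (with $f_*=f^*$ finite) at $(p,X)=(0,O)$, while Hypothesis 7 supplies the modulus $\sigma_1$ bounding the variation of $f^*$ and $f_*$ away from $(0,O)$ for small $(p,X)$. I would therefore split the analysis according to whether $|p|$, $|X|$, $|Y|$ are large or small: in the regular regime Hypotheses 6 and 8 apply directly, whereas in the singular regime near $(0,O)$ I would sandwich $f(\hat t,\hat x,\cdots)$ and $f(\hat t,\hat y,\cdots)$ between $f^*(\cdot,0,O)=f_*(\cdot,0,O)$ up to $\sigma_1\big(|p|+\max(|X|,|Y|)\big)$ and again let the parameters tend to zero. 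Making these two regimes overlap correctly, and ensuring that the superlinear penalization needed for the unbounded domain does not interfere with the small-gradient analysis, is the delicate part of the proof.
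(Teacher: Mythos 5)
First, a structural point: the paper itself does not prove this lemma at all --- it is recalled verbatim as Theorem 4.1 of \cite{GigGotIsh91}, so your attempt has to be measured against the proof given in that reference. Your overall strategy --- doubling of variables with a coercive penalization to handle the unbounded domain, the parabolic theorem on sums from \cite{CraIshLio92}, degenerate ellipticity to dispose of the second-order terms, Hypothesis 8 to control the $x$-dependence through $\sigma_2(|\hat x-\hat y|(|p|+1))$, reduction to an interior maximum via the parabolic-boundary hypothesis, and a separate treatment of the singularity at $p=0$ --- is indeed the skeleton of the argument in \cite{GigGotIsh91}, so the approach is the right one in outline.

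There is, however, a genuine gap at precisely the most delicate point, the singular regime. With the quadratic penalization $|x-y|^2/(2\epsilon)$, the degenerate case is a coincidence point $\hat x=\hat y$, where $p=(\hat x-\hat y)/\epsilon=0$. At such a point the theorem on sums delivers matrices $X,Y$ that are bounded only in terms of the Hessian of the penalization, i.e.\ of size $O(1/\epsilon)$; nothing forces them to be small. But Hypothesis 7 is applicable only when $|p|,|X|\leq\rho_0$, and Hypothesis 6 only when $|p|$ is bounded away from zero, so neither is available exactly where you propose to ``sandwich'' $f$ near $(0,O)$: your two regimes do not overlap, which is the difficulty you flag at the end but do not resolve. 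The standard resolution --- and the one underlying Theorem 4.1 of \cite{GigGotIsh91}, following the device of Chen--Giga--Goto and Evans--Spruck for singular equations --- is to replace the quadratic penalization by one such as $|x-y|^4/\epsilon$ (more generally, a function whose gradient \emph{and} Hessian both vanish on the diagonal). Then, at a coincidence point, the test function itself shows that one may take the gradient and matrix entries of the semijets to be $(0,O)$, so the sub- and supersolution inequalities can be written with $f_*$ and $f^*$ evaluated at $(0,O)$, where Hypotheses 3 and 7 apply and the contradiction closes. Without this modification (or an equivalent device), the argument cannot be completed when the maximum lands on the diagonal, so the sketch as written does not constitute a proof.
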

	This generalized comparison principle, coupled with an argument which uses the framework given in \cite{CraIshLio92}, known as Perron's method, gives the existence and uniqueness of a viscosity solution to \eqref{eq:HJ_prelim}.  Specifically, we note the parabolic version of this framework uses a result like the following, which is Lemma 2.3.15 from \cite{ImbSil13}
	\begin{lemma}[Perron process for parabolic equations]\label{lem:max_subsol}
		Consider \eqref{eq:gen_HJ} where $f$ is degenerate elliptic and continuous.  Assume that $u^+$ and $u^-$ are viscosity supersolutions and subsolutions, respectively.  Then there exists a viscosity solution $u$ such that $u^-\leq u\leq u^+$.
	\end{lemma}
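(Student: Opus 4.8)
The plan is to prove \cref{lem:max_subsol} by carrying out the standard Perron construction adapted to the parabolic boundary-value setting of \eqref{eq:gen_HJ}. First I would define the candidate solution as the supremum of all subsolutions sandwiched between the given barriers:
\begin{equation}
u(t,x) := \sup\left\{ v(t,x) : v \text{ is a viscosity subsolution of \eqref{eq:gen_HJ} and } u^- \leq v \leq u^+ \right\}.
\end{equation}
Since $u^-$ itself belongs to this family, the set is nonempty, and the bound $v \leq u^+$ guarantees $u$ is well-defined with $u^- \leq u \leq u^+$ pointwise. It remains to verify that $u$ is in fact a viscosity solution, i.e. that its upper semicontinuous envelope $u^*$ is a subsolution and its lower semicontinuous envelope $u_*$ is a supersolution.

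The subsolution property of $u^*$ follows from the classical stability result for viscosity subsolutions: the supremum (more precisely, the upper semicontinuous envelope of the supremum) of a family of subsolutions of a degenerate elliptic equation is again a subsolution, both for the interior relation \eqref{eq:gen_HJ_int} and for the parabolic boundary relation \eqref{eq:gen_HJ_BC}. This uses only the continuity of $f$ and $h$ together with the standard test-function argument, and the closedness of the subsolution condition under upper envelopes. The genuinely delicate half is showing that $u_*$ is a supersolution. Here I would argue by contradiction: suppose $u_*$ fails the supersolution inequality at some point $(t_0,x_0)$, meaning there is a smooth test function $\psi$ touching $u_*$ from below at $(t_0,x_0)$ for which the relevant inequality in \eqref{eq:super_soln} is violated (strictly, after a small perturbation). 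One then uses $\psi$ to build a local bump perturbation $u_\kappa := \max\{u, \psi + \kappa\}$ for small $\kappa > 0$, supported in a small neighborhood of $(t_0,x_0)$, which is still a subsolution (by the local strict subsolution property of $\psi$ and the gluing/max stability of subsolutions) and still satisfies $u_\kappa \leq u^+$ in that neighborhood because $u_*(t_0,x_0) < u^+_*(t_0,x_0)$ strictly — this last point is where the supersolution barrier $u^+$ is essential. But $u_\kappa > u$ somewhere, contradicting the maximality of $u$ in the defining supremum.

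The main obstacle, and the step requiring the most care, is the contradiction argument for the supersolution property, specifically the bump construction near a boundary point on $\p_P U$. On the parabolic boundary the supersolution condition \eqref{eq:super_soln_bndry} is a \emph{max} of the interior operator and the boundary operator $h$, so when I perturb $u$ upward by $\psi+\kappa$ I must simultaneously respect both the interior inequality and the boundary inequality, and verify that the strict violation of the max-condition localizes correctly so that the perturbed function remains a genuine subsolution in the boundary sense of \eqref{eq:sub_soln_bndry}. A secondary subtlety is ensuring that $u_\kappa \leq u^+$ is not destroyed by the perturbation; this is handled by choosing the neighborhood small enough that the strict gap $u_*(t_0,x_0) < (u^+)_*(t_0,x_0)$ (which must hold at any point where $u_*$ fails to be a supersolution, since $u^+$ \emph{is} a supersolution) persists. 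Once both envelope properties are established, $u$ is by definition a viscosity solution trapped between $u^-$ and $u^+$, completing the proof. I would note that all the ingredients — stability of subsolutions under suprema, the max-perturbation gluing lemma, and the localization argument — are standard and I would cite \cite{ImbSil13} and \cite{CraIshLio92} for the detailed technical lemmas rather than reproduce them in full.
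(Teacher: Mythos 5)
Your outline is correct and is precisely the standard Perron construction: the supremum of all subsolutions trapped between the barriers, stability of the subsolution property under upper envelopes of suprema, and the bump-perturbation contradiction argument for the supersolution property of $u_*$, including the correct observation that the strict gap $u_*(t_0,x_0) < (u^+)_*(t_0,x_0)$ is forced at any putative failure point because $u^+$ is itself a supersolution. The paper gives no proof of \cref{lem:max_subsol} at all --- it simply quotes the result as Lemma 2.3.15 of \cite{ImbSil13} --- and your sketch is the same argument carried out in that cited reference (with the parabolic-boundary min/max conditions handled as you describe), so your proposal agrees with the paper's intended source.
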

	
	The results above are summarized in the following theorem.
	
	\begin{theorem}[Unique viscosity solution]
		\label{thm:visc_solution}
		Suppose that $f$ satisfies the conditions of \cref{thm:Gig_comp} and \cref{lem:max_subsol} and that there exists a viscosity supersolution $u^+$ and a viscosity subsolution $u^-$ to \eqref{eq:gen_HJ}.  Then there exists a unique continuous viscosity solution to \eqref{eq:gen_HJ}.
	\end{theorem}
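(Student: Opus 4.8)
The plan is to execute the classical Perron's method argument: \cref{lem:max_subsol} supplies a (possibly discontinuous) viscosity solution squeezed between the given barriers, and \cref{thm:Gig_comp} is then invoked twice---once to upgrade this to a continuous solution, and once to force uniqueness.

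First I would apply \cref{lem:max_subsol} directly. Since $f$ is assumed degenerate elliptic and continuous, and a viscosity subsolution $u^-$ and supersolution $u^+$ are given, the lemma produces a viscosity solution $u$ of \eqref{eq:gen_HJ} satisfying $u^- \leq u \leq u^+$. A priori $u$ solves \eqref{eq:gen_HJ} only in the discontinuous (Ishii) sense: its upper semicontinuous envelope $u^*$ is a subsolution and its lower semicontinuous envelope $u_*$ is a supersolution.

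Next I would feed the pair $(u^*, u_*)$ into \cref{thm:Gig_comp}, taking $u^*$ as the subsolution and $u_*$ as the supersolution. Before doing so, the three auxiliary hypotheses on the comparison pair---the linear growth bound and the two estimates on the parabolic boundary---must be verified; these follow from the sandwich $u^- \leq u \leq u^+$ together with the growth controls already available for $u^{\pm}$. Granting them, the conclusion \eqref{eq:comparison_formula} yields a modulus $\sigma$ with $u^*(t,x) - u_*(t,y) \leq \sigma(|x-y|)$; evaluating at $y=x$ and using $\sigma(0)=0$ gives $u^* \leq u_*$. Since $u_* \leq u^*$ holds by definition of the envelopes, we conclude $u^* = u_*$, so $u$ is continuous and is therefore a genuine continuous viscosity solution. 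For uniqueness, if $u_1$ and $u_2$ are continuous viscosity solutions, then each is simultaneously a sub- and a supersolution (because continuity forces $u_i^* = (u_i)_* = u_i$); applying \cref{thm:Gig_comp} to the pair $(u_1,u_2)$ and then to $(u_2,u_1)$, and again setting $y=x$, gives $u_1 \leq u_2$ and $u_2 \leq u_1$, whence $u_1 = u_2$.

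The main obstacle is not the logical skeleton, which is routine once the two cited results are in hand, but the bookkeeping needed to confirm that the semicontinuous envelopes $u^*$ and $u_*$ actually inherit the auxiliary growth and parabolic-boundary hypotheses demanded by \cref{thm:Gig_comp}. This is precisely where the explicit structure of the barriers used to build $u^{\pm}$ enters---in the application of this theorem, the affine-in-$t$ functions $P^{\pm}$---since it is their boundedness and at-most-linear growth that propagate through the sandwich to the envelopes of the Perron solution.
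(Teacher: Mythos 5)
Your proposal is correct and follows essentially the same route as the paper: Perron's method (\cref{lem:max_subsol}) for existence of a sandwiched solution, then the comparison principle \eqref{eq:comparison_formula} applied to semicontinuous envelopes, evaluated at $y=x$, to collapse $u^* = u_*$ and force uniqueness. The only cosmetic difference is that the paper runs a single envelope chain $u^* \leq v_* \leq v^* \leq u_* \leq u^*$ against an arbitrary (not necessarily continuous) second solution $v$, which yields continuity and uniqueness among all viscosity solutions at once, whereas you split the argument into two comparison applications and restrict uniqueness to continuous solutions---both suffice for the statement as written.
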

	
	\begin{proof}
		According to \cref{lem:max_subsol}, there exists a viscosity solution $u$ to \eqref{eq:gen_HJ}.  To show uniqueness and continuity, let $v$ be another viscosity solution.  By definition, $u^*$ and $v^*$ are subsolutions and $u_*$ and $v_*$ are supersolutions.  Then \eqref{eq:comparison_formula}, combined with the properties of envelopes imply that
		\begin{equation}
		u^* = (u^*)^* \leq (v_*)_* = v_* \leq v^* = (v^*)^* \leq (u_*)_* = u_* \leq u^*.
		\end{equation}
		Thus $u = v$ and $u^*= u_*$ so that $u$ is continuous.
	\end{proof}

	\section*{Acknowledgments}
	C.D.H and R.C.B. would like to thank Michael Herty for many helpful discussions.
	
	\bibliographystyle{siamplain}
	\bibliography{references}

\begin{thebibliography}{10}

\bibitem{TOP500}
{\em Top500 list}.
\newblock \url{https://www.top500.org/lists/2019/06/}, June 2019.

\bibitem{ArmMarRin03}
{\sc D.~Armbruster, D.~Marthaler, and C.~Ringhofer}, {\em Kinetic and fluid
  model hierarchies for supply chains}, Multiscale Modeling \& Simulation, 2
  (2003), pp.~43--61.

\bibitem{AwRas00}
{\sc A.~Aw and M.~Rascle}, {\em Resurrection of "second order" models of
  traffic flow}, SIAM Journal on Applied Mathematics, 60 (2000), pp.~916--938.

\bibitem{BanHerKla06}
{\sc M.~K. Banda, M.~Herty, and A.~Klar}, {\em Gas flow in pipeline networks},
  Networks and Heterogeneous Media, 1 (2006), pp.~41--56.

\bibitem{Bar93}
{\sc G.~Barles}, {\em Fully non-linear neumann type boundary conditions for
  second-order elliptic and parabolic equations}, Journal of Differential
  Equations, 106 (1993), pp.~90 -- 106.

\bibitem{Bar13}
{\sc G.~Barles}, {\em An introduction to the theory of viscosity solutions for
  first-order hamilton--jacobi equations and applications}, in Hamilton-Jacobi
  Equations: Approximations, Numerical Analysis and APplications, P.~Loreti and
  N.~A. Tchou, eds., Springer Berlin Heidelberg, 2013.

\bibitem{BiaTon12}
{\sc S.~Bianchini and D.~Tonon}, {\em {SBV} regularity for {Hamilton}--{Jacobi}
  equations with {Hamiltonian} depending on $(t, x)$}, SIAM Journal on
  Mathematical Analysis, 44 (2012), pp.~2179--2203.

\bibitem{BroGasHer11}
{\sc J.~Brouwer, I.~Gasser, and M.~Herty}, {\em Gas pipeline models revisited:
  Model hierarchies, nonisothermal models, and simulations of networks},
  Multiscale Modeling \& Simulation, 9 (2011), pp.~601--623.

\bibitem{CanFra14}
{\sc P.~Cannarsa and H.~Frankowska}, {\em From pointwise to local regularity
  for solutions of {Hamilton}--{Jacobi} equations}, Calculus of Variations and
  Partial Differential Equations, 49 (2014), pp.~1061--1074.

\bibitem{Coc03}
{\sc B.~Cockburn}, {\em Continuous dependence and error estimation for
  viscosity methods}, Acta Numerica, 12 (2003), pp.~127--180.

\bibitem{CraIshLio92}
{\sc M.~G. Crandall, H.~Ishii, and P.-L. Lions}, {\em User's guide to viscosity
  solutions of second order partial differential equations}, Bulletin of the
  American Mathematical Society, 27 (1992), pp.~1--67.

\bibitem{CulKarPat93}
{\sc D.~Culler, R.~Karp, D.~Patterson, A.~Sahay, K.~E. Schauser, E.~Santos,
  R.~Subramonian, and T.~von Eicken}, {\em Logp: Towards a realistic model of
  parallel computation}, SIGPLAN Not., 28 (1993), pp.~1--12.

\bibitem{CulKarPat96}
{\sc D.~E. Culler, R.~M. Karp, D.~Patterson, A.~Sahay, E.~E. Santos, K.~E.
  Schauser, R.~Subramonian, and T.~von Eicken}, {\em Logp: A practical model of
  parallel computation}, Commun. ACM, 39 (1996), pp.~78--85.

\bibitem{DeeVahJuv15}
{\sc E.~Deelman, K.~Vahi, G.~Juve, M.~Rynge, S.~Callaghan, P.~J. Maechling,
  R.~Mayani, W.~Chen, R.~F. da~Silva, M.~Livny, and K.~Wenger}, {\em Pegasus, a
  workflow management system for science automation}, Future Generation
  Computer Systems, 46 (2015), pp.~17 -- 35,
  \href{http://dx.doi.org/https://doi.org/10.1016/j.future.2014.10.008}{doi:\nolinkurl{https://doi.org/10.1016/j.future.2014.10.008}},
  \url{http://www.sciencedirect.com/science/article/pii/S0167739X14002015}.

\bibitem{DonHitBel14}
{\sc J.~Dongarra, J.~Hittinger, J.~Bell, L.~Chac\'{o}n, R.~Falgout, M.~Heroux,
  P.~Hovland, E.~Ng, C.~Webster, and S.~Wild}, {\em Applied mathematics
  research for exascale computing}, tech. report, U.S. Department of Energy,
  Office of Science, Advanced Scientific Computing Research Program, 2014.

\bibitem{DosBarDoe14}
{\sc S.~Dosanjh, R.~Barrett, D.~Doerfler, S.~Hammond, K.~Hemmert, M.~Heroux,
  P.~Lin, K.~Pedretti, A.~Rodrigues, T.~Trucano, and J.~Luitjens}, {\em
  Exascale design space exploration and co-design}, Future Generation Computer
  Systems, 30 (2014), pp.~46 -- 58.

\bibitem{GigGotIsh91}
{\sc Y.~Giga, S.~Goto, H.~Ishii, and M.-H. Sato}, {\em Comparison principle and
  convexity preserving properties for singular degenerate parabolic equations
  on unbounded domains}, Indiana Univ. Math. J., 40 (1991), pp.~443--470.

\bibitem{HaiNorWan93}
{\sc E.~Hairer, S.~P. N{\o}rsett, and G.~Wanner}, {\em Solving Ordinary
  Differential Equations I: Nonstiff Problems}, vol.~8 of Springer Series in
  Computational Mathematics, Springer-Verlag Berlin Heidelberg, 2010.

\bibitem{hauck2019qualitative}
{\sc C.~Hauck, M.~Herty, and G.~Visconti}, {\em Qualitative properties of
  mathematical model for data flow}.
\newblock submitted.

\bibitem{ImbSil13}
{\sc C.~Imbert and L.~Silvestre}, {\em An introduction to fully nonlinear
  parabolic equations}, in An Introduction to the K\"{a}hler-Ricci Flow,
  S.~Boucksom, P.~Eyssidieux, and V.~Guedj, eds., vol.~2086, Springer
  International Publishing, 2013, pp.~7--88.

\bibitem{JakKar02}
{\sc E.~R. Jakobsen and K.~H. Karlsen}, {\em Continuous dependence estimates
  for viscosity solutions of fully nonlinear degenerate parabolic equations},
  Journal of Differential Equations, 183 (2002), pp.~497 -- 525.

\bibitem{KlaWeg99a}
{\sc A.~Klar and R.~Wegener}, {\em A hierarchy of models for multilane
  vehicular traffic. {I}. {M}odeling}, SIAM J. Appl. Math., 59 (1999),
  pp.~983--1001 (electronic).

\bibitem{Kun13}
{\sc J.~M. Kunkel}, {\em Simulating parallel programs on application and system
  level}, Computer Science - Research and Development, 28 (2013), pp.~167--174.

\bibitem{NunFerFil12}
{\sc A.~Nunez, J.~Fernandez, R.~Filguiera, F.~Garcia, and J.~Carretero}, {\em
  Simcan: A flexible, scalable and expandable simulation platform for modelling
  and simulating distributed architectures and applications}, Simulation
  Modelling Practice and Theory, 20 (2012), pp.~12--32.

\bibitem{PiePuc97}
{\sc A.~Pietracaprina and G.~Pucci}, {\em The complexity of deterministic pram
  simulation on distributed memory machines}, Theory of Computing Systems, 30
  (1997), pp.~231--247,
  \href{http://dx.doi.org/10.1007/BF02679461}{doi:\nolinkurl{10.1007/BF02679461}},
  \url{http://dx.doi.org/10.1007/BF02679461}.

\bibitem{Rif08}
{\sc L.~Rifford}, {\em On viscosity solutions of certain {Hamilton}--{Jacobi}
  equations: Regularity results and generalized sard's theorems},
  Communications in Partial Differential Equations, 33 (2008), pp.~517--559.

\bibitem{shu2007high}
{\sc C.-W. Shu}, {\em High order numerical methods for time dependent
  hamilton-jacobi equations}, in Mathematics and computation in imaging science
  and information processing, World Scientific, 2007, pp.~47--91.

\bibitem{SpaVet12}
{\sc K.~L. Spafford and J.~S. Vetter}, {\em Aspen: A domain specific language
  for performance modeling}, in Proceedings of the International Conference on
  High Performance Computing, Networking, Storage and Analysis, SC '12, Los
  Alamitos, CA, USA, 2012, IEEE Computer Society Press, pp.~84:1--84:11.

\bibitem{SteWhi09}
{\sc R.~Stevens and A.~White}, {\em Architectures and technology for extreme
  scale computing}, in ASCR Scientific Grand Challenges Workshop Series, 2009.

\bibitem{Tes12}
{\sc G.~Teschl}, {\em Ordinary Differential Equations and Dynamical Systems},
  vol.~140 of Graduate Studies in Mathematics, American Mathematical Society,
  2012.

\bibitem{VetMer15}
{\sc J.~S. Vetter and J.~S. Meredith}, {\em Synthetic program analysis with
  aspen}, in Proceedings of the 3rd International Conference on Exascale
  Applications and Software, Edinburgh, Scotland, UK, 04/2015 2015, University
  of Edinburgh, University of Edinburgh.

\bibitem{Wal98}
{\sc W.~Walter}, {\em Ordinary Differential Equations}, vol.~182 of Graduate
  Texts in Mathematics, Springer-Verlag New York, 1998.

\end{thebibliography}

\end{document}